\newtheorem{theorem}{Theorem}[section]
\newtheorem{lemma}{Lemma}[theorem]
\newtheorem{definition}{Definition}[section]
\newtheorem{algorithm}{Algorithm}
\newcommand{\cc}{w} 
\newcommand{\dc}{w} 
\newcommand{\rc}{c} 
\newcommand{\tc}{\tilde{c}} 
\newcommand{\BR}{BR} 
\newcommand{\nBR}{\overline{BR}} 
\newcommand{\tPhi}{\widetilde{\Phi}} 
\newcommand{\CC}[2]{CC_{#1}({#2})} 
\newcommand{\DC}[3]{DC_{#1}({#2, #3})} 
\newcommand{\Q}{Q} 
\begin{document}

\title{Forming better stable solutions in Group Formation Games inspired by Internet Exchange Points (IXPs)}


\author{Elliot Anshelevich \and Wennan Zhu}

\date{{\small Rensselaer Polytechnic Institute, Troy, NY\\ \today}}

\maketitle{}

\begin{abstract} We study a coordination game motivated by the formation of Internet Exchange Points (IXPs), in which agents choose which facilities to join. Joining the same facility as other agents you communicate with has benefits, but different facilities have different costs for each agent. Thus, the players wish to join the same facilities as their ``friends", but this is balanced by them not wanting to pay the cost of joining a facility.
We first show that the Price of Stability ($PoS$) of this game is at most 2, and more generally there always exists an $\alpha$-approximate equilibrium with cost at most $\frac{2}{\alpha}$ of optimum. We then focus on how better stable solutions can be formed. If we allow agents to pay their neighbors to prevent them from deviating (i.e., a player $i$ {\em voluntarily} pays another player $j$ so that $j$ joins the same facility), then we provide a payment scheme which stabilizes the solution with minimum social cost $s^*$, i.e. PoS is 1. In our main technical result, we consider how much a central coordinator would have to pay the players in order to form good stable solutions. Let $\Delta$ denote the total amount of payments needed to be paid to the players in order to stabilize $s^*$, i.e., these are payments that a player would lose if they changed their strategy from the one in $s^*$. We prove that there is a tradeoff between $\Delta$ and the Price of Stability: $\frac{\Delta}{cost(s^*)} \le 1 - \frac{2}{5} PoS$. Thus when there are no good stable solutions, only a small amount of extra payment is needed to stabilize $s^*$; and when good stable solutions already exist (i.e., $PoS$ is small), then we should be happy with those solutions instead. Finally, we consider the computational complexity of finding the optimum solution $s^*$, and design a polynomial time $O(\log n)$ approximation algorithm for this problem.
\end{abstract}

\section{Introduction}

We study a coordination game motivated by the formation of Internet Exchange Points (IXPs). In this game, there are $m$ facilities available, and the players (modeling ISPs, or more generally entities which wish to exchange traffic with each other) choose which facilities to join. Joining a facility $f_k$ has a cost for player $i$, which we call the ``connection cost" and denote by $\cc(i, f_k)$; this cost can be different for different players and facilities. The reason why players are willing to pay such costs is because joining the same facility as other players is beneficial: a pair of players $i$ and $j$ which do not connect to the same facility must pay a cost $\dc(i,j)$, but if they share a facility then this cost disappears. Finally, the facilities themselves have costs $c(f_k)$ which must be paid for by the players using these facilities. In summary, the players wish to join the same facilities as their ``friends" in order to avoid paying the costs $\dc(i,j)$, but this is counterbalanced by them not wanting to pay the cost of joining a facility.

While our game is quite general, and models general group formation (e.g., facilities are clubs or groups people can join, and they wish to join the same clubs as their friends), this game is specifically inspired by the formation of IXPs in the Internet. IXPs are facilities where Internet Service Providers (ISPs) can exchange Internet traffic with high speed; a large fraction of total Internet traffic flows through such hubs \cite{ager2012anatomy}. If two ISPs join the same IXP (and pay their cost for joining, which can depend on many factors including the pricing scheme and the physical location(s) of the IXP), then they gain the benefit of mutual high speed communication.
If, however, two ISPs do not use the same IXP, they must use alternate means of exchanging traffic with each other (e.g., through their providers or private peering), which we model by them incurring an extra cost $\dc(i,j)$.

Coordination games have been widely studied in various situations where agents gain utility by forming coalitions with other agents. Even with the large amount of existing work on both coordination games and group formation, the questions we consider in this paper have not been studied before for our game (see Related Work). Like many such games, ours can be represented by a graph, in which each node stands for a player and the edges between them have weights representing the disconnection cost for them not belonging to the same facility. One major difference between our game and much (although certainly not all) of existing work is that the facilities (i.e., groups that players can join) are not identical: their quality for a player $i$ depends not only on who else has joined the same group (as in hedonic games \cite{hedonic}), but also on the specific facility being joined, as quantified by the cost $\cc(i,f_k)$. This immediately changes a lot about equilibrium structure: it is no longer the case that everyone being in the same group is an equilibrium solution which minimizes social cost; instead equilibrium solutions involve players balancing their cost for joining facilities with their cost of being separated from their friends. Other coordination games look at cases where only a limited number of facilities can be open, or when players have both ``friends" and ``enemies" (i.e., $\dc(i,j)$ can be negative); for the types of settings we consider, however, all facilities can open as long as players are able to pay for them, and there is never any additional cost from two players joining the same facility (i.e., $\dc(i,j)\geq 0$). Moreover, unlike most other coordination games, we assume that facilities have a cost which must be shared among the players using it, which adds a significant layer of complexity to our results (for example, our game is no longer a potential game \cite{monderer1996potential}). For more details and comparison with existing work, see the Related Work section.

\subsection*{Our Contributions}
In this paper, we study a coordination game where a strategy of an agent $i$ is to choose a facility $f_k$ to join, by paying a connection cost $\cc(i, f_k)$ (or to not join any facility). If two agents $i, j$ do not use the same facility, then both of them are charged a disconnection cost $\dc(i, j)$. In addition, there is a fixed facility cost $c(f_k)$ for each open facility, which is split among all agents using $f_k$ according to an arbitrary pricing rule. An agent's total social cost is the sum of its connection cost, disconnection cost, and its share of the facility cost. An assignment with a pricing rule is stable if it is budget balanced (each $c(f_k)$ is fully paid by all agents using $f_k$), and no agent wants to switch facilities, i.e., it is a Nash equilibrium.


We study the quality of equilibrium solutions for this game, as well as ways to create new stable solutions. We first show that while the Price of Anarchy can be arbitrarily high, the Price of Stability ($PoS$) is at most 2, and more generally there always exists an $\alpha$-approximate equilibrium with cost at most $\frac{2}{\alpha}$ of optimum. While we use potential arguments to prove this \cite{tardos2007network}, note that this game is not a potential game due to facility costs, and thus new proof techniques are needed beyond simply defining a potential function. We then focus on how better stable solutions can be formed. If we allow agents to pay their neighbors to prevent them from deviating (i.e., a player $i$ {\em voluntarily} pays another player $j$ so that $j$ joins the same facility), then we provide a payment scheme which stabilizes the solution with minimum social cost $s^*$, i.e. PoS is 1. This is essentially what occurs, for example, in paid peering \cite{shrimali2006paid}, where two ISPs have different incentives, and so one ISP pays the other in order to form a peering connection. Finally, for our main result, we consider how much a central coordinator would have to pay the players in order to form good stable solutions, similarly to \cite{anshelevich2014approximate, bachrach2009cost}. Let $\Delta$ denote the total amount of payments needed to be paid to the players in order to stabilize $s^*$, i.e., these are payments that a player would lose if they changed their strategy from the one in $s^*$. We prove that there is a tradeoff between $\Delta$ and the Price of Stability: $\frac{\Delta}{cost(s^*)} \le 1 - \frac{2}{5} PoS$. [See Figure~\ref{fig:delta}]. Thus when there are no good stable solutions, only a small amount of extra payment is needed to stabilize $s^*$; and when good stable solutions already exist (i.e., $PoS$ is small), then we should be happy with those solutions instead! This result is proven by forming several solutions where specific subsets of players perform their best responses, and then showing that when a small amount of payment is not enough to stabilize $s^*$, then at least one of these solutions is guaranteed to be better than $s^*$, giving a contradiction. The difficulty here results from the fact that letting any single player move to their best response strategy from $s^*$ could still result in solutions worse than $s^*$; to get a contradiction and form a solution strictly better than $s^*$ requires changing the strategy of many players simultaneously.

\begin{figure}[htb]
\begin{center}
\includegraphics[scale = 0.5]{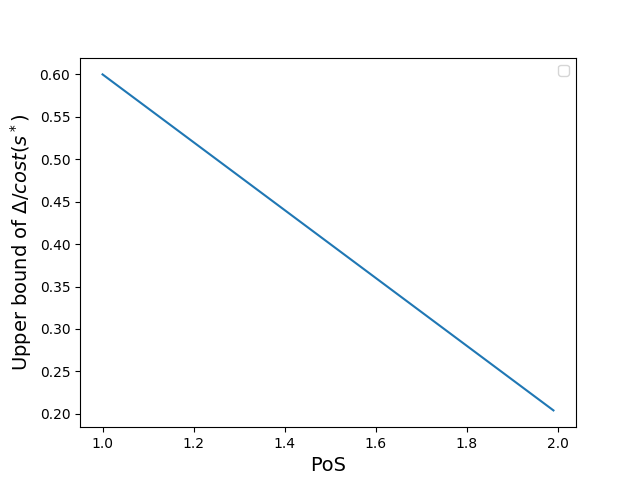}\\
\end{center}
\caption{Tradeoff between $\Delta$ and PoS: $\frac{\Delta}{cost(s^*)} \le 1 - \frac{2}{5} PoS$. }
\label{fig:delta}
\end{figure}

The results above are for the setting where each agent can join at most one facility at a time. In Section \ref{sec-multi}, we study the setting where each agent is allowed to use multiple facilities simultaneously. Many of the results above still hold for this general mode, but only under the assumption that a player can only switch their strategy by leaving one facility at a time (although it is allowed to join multiple new facilities at once).

Finally, we consider the computational complexity of finding the optimum solution $s^*$. We prove that computing it is NP-Hard (and in fact inapproximable to better than $\Omega(\log n)$ unless P=NP), and design a polynomial time approximation algorithm that gives a $\min\{m+1, O(\log n)\}$-approximation to the optimal solution (with $n$ being the number of players, and $m$ the number of facilities). We also provide a simple 2-approximation algorithm when all facility costs are zero.

\section{Related Work}There is a very large amount of work on both group formation and coordination games, which is too large to survey here.
Hedonic games \cite{hedonic, dreze1980hedonic} is an important class of games related to coordination games, in which the agents form groups, and each agent's utility only depends on the other agents in its own group, but is not affected by how agents are arranged in other groups. The objectives are usually maximizing social welfare \cite{apt2014coordination, aziz2019fractional, aziz2013computing, branzei2009coalitional, feldman2015hedonic, gairing2010computing} or minimizing social cost \cite{feldman2015hedonic}. Often, although not always, all players in a group have the same cost or utility. In much of the work, the number of groups is fixed \cite{bhalgat2010approximating, feldman2015hedonic, gourves2010max, hoefer2007cost}. 
There are also various utility/cost functions which have been studied, with the most common one being that an agent's utility is the total utility gained from being with all other agents in its group. In fractional hedonic games \cite{aziz2019fractional, bilo2014nash, bilo2015price}, an agent's utility is the average value of its presence to every other agent in the group.
More generally, there are also other types of related group formation games, e.g., congestion games \cite{christodoulou2005price, rosenthal1973class} and profit sharing games (see \cite{augustine2011dynamics} and references therein), where an agent's utility only depends on the size of the group.

 While coordination games can be considered a special case of general hedonic games, usually coordination games involve players with some sort of graph structure, where for a pair of players, being in the same group gives them both a benefit if they are ``friends" (or a penalty if they are ``enemies"). This is in contrast to many hedonic games, where all players in a group have the same utility, or the total utility of a group is somehow shared among its participants. In most related work, either the objective functions of the players are very different from ours (e.g., they depend on the number of players in their group) \cite{apt2014coordination, aziz2019fractional, branzei2011social}, or there are players who specifically don't want to be in the same group (``enemies", negative-weight edges) \cite{auletta2016generalized, bhalgat2010approximating, feldman2015unified}, or all groups are identical and the optimum solution would correspond to either everyone joining the same group or everyone forming a group on their own \cite{apt2014coordination, aziz2013computing, bhalgat2010approximating, branzei2009coalitional, feldman2015hedonic, feldman2015unified}. In contrast, our work is motivated by settings where everyone would like to form one group together to reduce the disconnection cost, but the complexity in the solution structure comes from the players trading this desire off with their individual connection costs to (non-identical) facilities.

As discussed in the Introduction, general coordination games include other settings in which an agent's utility or cost also depends on which group it joins (i.e., the groups are not identical). Our work is more closely related to this type of game. Using a graph representation, one can think of such games as either Max-Uncut (maximize the weight of edges to friends in your group) or Min-Cut (minimize the weight of edges to friends not in your group) objectives, but with additional utility or cost depending on which group a player joins (which can be modeled using additional ``anchor nodes" which {\em must} belong to a specific group, see e.g., \cite{anshelevich2014approximate}). Work on such coordination games with non-identical groups or facilities includes \cite{anshelevich2014approximate, auletta2017robustness,chierichetti2013discrete}.
In k-Coloring games \cite{carosi2019generalized} each agent gains utility by choosing a certain color/facility, and loses utility by choosing the same color as other adjacent agents, i.e., they are anti-coordination games in which all agents want to be in different groups if possible (see references in \cite{carosi2019generalized} for more discussion of such games). In generalized Discrete Preference Games \cite{auletta2016generalized}, there are exactly two groups, and the players could be friends or enemies. Similar to hedonic games, the research in this area usually focused on properties of stable solutions, e.g., \cite{auletta2016discrete} studies how a single agent could affect the Nash Equilibria converged from best responses, and \cite{auletta2017robustness} compares the prices of anarchy and stability under different objective functions.

Perhaps the most related work to ours is \cite{chierichetti2013discrete}, as it is also a Min-Cut game with non-identical groups. The main differences between our work and \cite{chierichetti2013discrete} are: it is assumed in \cite{chierichetti2013discrete} that every agent has a favorite group, and an agent's cost depends on the distance between its current group and favorite group, and the distances to its neighbors. We do not bind each agent's cost with a group in our setting. Suppose two agents have the same ``favorite group'' $f_k$, which is the group with the lowest connection cost to them; in our model their cost to any other group $f_k'$ could be very different. \cite{chierichetti2013discrete} also focuses on the setting where the group locations form a general metric or tree metric, while we do not have such assumptions. Last but not least, unlike in the works mentioned above, we assume there is a facility cost to open each facility, with different facilities having different costs. We study stable states where the facility cost is split among the agents using it, so each facility is paid for, and each agent is stable with three types of costs: connection cost to facility, its own share of facility cost, and disconnection cost to its neighbors that use different facilities. We also study the case that each agent can join multiple groups. 
Finally, parts of our work are also closely related to \cite{anshelevich2014approximate}, which shows that an optimal solution could be stabilized by providing a reasonable amount of payments to the agents, just as we do. Their model, however, involves maximizing utility instead of minimizing costs (which changes the equilibrium structure and all approximation factors like PoS and cost of stabilization entirely), and does not include any facility costs.

\section{Model and Preliminaries}
\label{sec-model}
We are given a set of $m$ facilities $\mathcal{F} = \{f_1, f_2, \dots, f_m\}$ and a set of $n$ agents (which we will also call ``players") $\mathcal{A} = \{1, 2, \dots, n\}$. An agent $i$ can use any facility $f_k$ by paying a connection cost $\cc(i, f_k)$. A pair of agents $(i, j)$ can form connections through facility $f_k$ if they are both using $f_k$. However, if $i$ and $j$ do not use the same facility, then both of them are charged a disconnection cost $\dc(i, j)$. A facility $f_k$ is open if and only if there exists an agent using it. There is a fixed facility cost of $c(f_k) \ge 0$ for any open facility $f_k$.

In much of this paper, we assume each agent uses at most one facility, so the strategy set of an agent consists of $\mathcal{F}$ together with the empty set. A facility assignment $s = \{s_1, s_2, \dots, s_n\}$ denotes the facilities that each agent uses: $s_i$ denotes the facility that agent $i$ uses in assignment $s$. In the case that agent $i$ does not use any facility, let $s_i = \emptyset$ and $\cc(i, s_i) = 0$. A pricing strategy $\gamma = \{\gamma_1, \gamma_2, \dots, \gamma_n\}$ assigns the price for using each facility $f_k$ to every agent $i$. $\gamma_i(f_k)$ is a non-negative number that denotes agent $i$'s share of the facility cost for using $f_k$. $(s, \gamma)$ is a state with assignment $s$ and pricing strategy $\gamma$. Note that agent $i$ only pays its share of the facility cost to $f_k$ if $i$ uses $f_k$, i.e., $\gamma_i(f_k) > 0$ only if $s_i = f_k$. \\

To summarize, the total cost of agent $i$ in a state $(s, \gamma)$ is the sum of the following three parts:

\begin{enumerate}
	\item If $i$ uses facility $f_k$, then there is a connection cost $\cc(i, f_k)$ to $i$.
	\item For each agent $j$ that do not use $s_i$, i.e., $s_i \ne s_j$, there is a disconnection cost $\dc(i, j)$ to both $i$ and $j$. A special case is when both $i$ and $j$ are not using any facility: although $s_i = \emptyset$ and $s_j = \emptyset$, we still say that $s_i \ne s_j$ in this case to make it consistent that the disconnection cost $\dc(i, j)$ is charged to both $i$ and $j$ if $s_i \ne s_j$.
	\item If $i$ uses facility $f_k$, then there is a facility cost $\gamma_i(f_k)$ to $i$.
\end{enumerate}

We denote the total cost of agent $i$ as $\rc_i(s, \gamma)$. Summing up the three types of cost mentioned above:

$$\rc_i(s, \gamma) = \cc(i, s_i) + \sum_{j | s_i \ne s_j} \dc(i, j) + \gamma_i(s_i)$$

For convenience, we denote the cost of agent $i$ without facility cost as $\tc_i(s)$:

$$\tc_i(s) = \cc(i, s_i) + \sum_{j | s_i \ne s_j} \dc(i, j)$$

In this paper, we are interested in the social cost of stable states. The total social cost of a state $(s, \gamma)$ equals the sum of $\rc_i(s, \gamma)$, plus the total cost of all open facilities. For each facility $f_k$, the cost is $c(f_k)$ minus the sum of $\gamma_i(f_k)$ of each agent using $f_k$, i.e., $c(f_k) - \sum_{i|s_i = f_k} \gamma_i(f_k)$. In other words, one can think of each facility as an agent with cost $c(f_k)$, and with other agents paying it the prices $\gamma_i(f_k)$ for using it. The sum of $\gamma_i(f_k)$ cancels out, and the total social cost is actually the sum of $\tc_i(s)$ plus the sum of $c(f_k)$ of open facilities:

\begin{align*}
c(s) &= \sum_{f_k \in \mathcal{F}, f_k \text{ is open}} c(f_k) +
				\sum_{i \in \mathcal{A}} (\cc(i, s_i) + \sum_{j | s_i \ne s_j} \dc(i, j))\\
		 &= \sum_{f_k \in \mathcal{F}, f_k \text{ is open}} c(f_k)
		    + \sum_{i \in \mathcal{A}} \cc(i, s_i) + \sum_{i \in \mathcal{A}} \sum_{j | s_i \ne s_j} \dc(i, j)\\
		 &= \sum_{f_k \in \mathcal{F}, f_k \text{ is open}} c(f_k)
		    + \sum_{i \in \mathcal{A}} \cc(i, s_i) + 2 \sum_{(i, j) | s_i \ne s_j} \dc(i, j)
\end{align*}

We consider $(i, j)$ as an unordered pair, therefore in $\sum_{i \in \mathcal{A}} \sum_{j | s_i \ne s_j} \dc(i, j)$, each unordered pair $(i, j)$ that $s_i \ne s_j$ is counted twice.

In this paper, we study the game in which each agent's goal is to minimize its total social cost, and the central coordinator's goal is to find a budget balanced and stable state $(s, \gamma)$ that (approximately) minimizes the total social cost. A state is budget balanced if each facility $f_k$ is fully paid with the facility cost $c(f_k)$, formally:

\begin{definition}
	A state $(s, \gamma)$ is \textbf{budget balanced} if for each facility $f_k$, $\sum_{i | s_i = f_k} \gamma_i(f_k) = c(f_k)$.
\end{definition}

Before defining the stability of a state, we first define an agent's best response. Consider an agent $i$ with current strategy $s_i=f_k$, and price $\gamma_i(f_k)$ for using this facility. The agent may consider switching to a different facility $f_\ell$, but to correctly evaluate their cost after this switch, the agent needs to know exactly how much they will pay after such a switch. We assume that the agents know their connection costs $\cc(i,f_\ell)$ and their disconnection costs from other agents, as well as which agents are using each facility. What price $\gamma_i(f_\ell)$, however, should they anticipate after switching to their new facility? If the prices depend on the set of agents (or the number of agents) at the facility, then the price might change from the current one being offered. But how reasonable is it for agents to know the exact details of the pricing schemes used by the facilities (which are modeling IXP's or other private enterprises which do not want to reveal their pricing structures)?

To address these issues, in this paper, every agent assumes it will be charged 0 facility cost for joining a new facility. This allows us to not worry about what an agent may know and what price they may anticipate after switching a facility. As the same time, this assumption does not limit our results on stable solutions. This is because no matter what price $\gamma_i(f_\ell)$ an agent may anticipate after switching to facility $f_\ell$, anticipating a price of 0 instead will make it only {\em more} likely to switch. Thus, no matter what the agents' beliefs for prices after switching make sense for a particular setting, {\em a stable solution in our model will still be stable no matter what beliefs about prices $\gamma_i(f_\ell)$ the agents hold, or what price they will actually be charged after switching.} {\bf Thus our results about stable solutions are {\em stronger}: they state that even if the agents are extremely optimistic and believe they can switch to any facility without paying facility cost, then there {\em still} exist good stable solutions.} If they assumed costs higher than 0, then the set of stable solutions would only increase. In other words, if an agent is stable when assuming it will be charged 0 for joining other facilities, then it would also be stable with a higher cost as well.

\begin{definition}
	Given a state $(s, \gamma)$, $s_i'$ is agent $i$'s \textbf{best response} if $\forall s_i'' \ne s_i'$, $\tc_i(s_i', s_{-i}) + \hat{\gamma_i}(s_i') \le \tc_i(s_i'', s_{-i})$, where $\hat{\gamma_i}(s_i') = \gamma_i(s_i)$ if $s_i' = s_i$, and $\hat{\gamma_i}(s_i') = 0$ otherwise. We denote $i$'s best response at state $(s, \gamma)$ as $\BR_i(s, \gamma)$.
\end{definition}

In the definition above, $\hat{\gamma_i}$ is the pricing strategy that agent $i$ assumes would happen after its deviation. If agent $i$ stays at its current facility, then its share of the facility cost does not change. But if $i$ leaves its current facility and joins another one, then it believes that it will be charged 0 facility cost for joining the new facility.

\begin{definition}
	Agent $i$ is stable at state $(s, \gamma)$ if for any strategy $s_i' \ne s_i$ :
	$$\rc_i(s, \gamma) \le \tc_i(s_i', s_{-i}).$$
\end{definition}

In other words, agent $i$ is stable at $(s, \gamma)$ if $s_i$ is $i$'s best response at $(s, \gamma)$.

We define a state $(s, \gamma)$ to be stable if it is budget balanced, and every agent is stable. Intuitively, if a state is not budget balanced, then a facility would not cover its operating cost $c(f_k)$, and thus would not choose to remain open.
\begin{definition}
	A state $(s, \gamma)$ is \textbf{stable} if it is budget balanced, and for each agent $i$, for any strategy $s_i' \ne s_i$ :
	$$\rc_i(s, \gamma) \le \tc_i(s_i', s_{-i}).$$
\end{definition}



Denote an assignment with the minimum total social cost as $s^*$. Our goal is to find stable states $(s, \gamma)$ to approximate the minimum total social cost. We use Price of Stability (PoS) to quantify the quality of a stable state. Given an instance, suppose ($\hat{s}, \gamma$) is the stable state with the smallest total social cost, then PoS is the worst case ratio between $\rc(\hat{s})$ and $\rc(s^*)$ for any instance. A related concept, Price of Anarchy (PoA) is defined as: suppose $(\hat{s}, \gamma)$ is the stable state with the largest total social cost, then PoA is the worst case ratio between $\rc(\hat{s})$ and $\rc(s^*)$ for any instance. So PoS shows the quality of the best stable state, while PoA shows the quality of the worst stable state.

\subsection{Pricing Strategies and Stability}
\label{sec-pricing-stable}
Recall a state $(s, \gamma)$ is stable if it is budget balanced, and every agent $i$ is stable. Suppose there is no other constraint on the pricing strategy, then we ask the following question in order to find a budget balanced state: in an assignment $s$, how much facility cost can we charge an agent while keeping it stable?  To answer this question, we first define a special type of best response: with an assignment $s$, let $\nBR_i(s)$ denote $i$'s best response, given $i$ is forced to stop using $s_i$. In other words, it is the strategy $s_i'\neq s_i$ with the smallest $\tc_i(s_i',s_{-i})$.  Note that if $\BR_i(s,\gamma)\neq s_i$, i.e., if $i$ wants to switch from the state $(s,\gamma)$, then $\BR_i(s,\gamma)=\nBR_i(s)$. But in the case when $i$'s best response is to stay at its current strategy, $\nBR_i(s)$ would denote the ``next best choice'' if $i$ is forced to stop using its current facility.
Intuitively, the ``value'' of facility $s_i$ to agent $i$ is how much $i$'s cost would increase if $i$ is forced to leave $s_i$ and join the next best choice $\nBR_i(s)$. If there are multiple strategies that all satisfy the definition of $\nBR_i(s)$, then we choose an arbitrary one except in one case: we never choose a facility that is closed in $s$ as $\nBR_i(s)$. We can always do this because if there exists such strategy $s_i'$, such that $s_i'$ is a closed facility in $s$, then compare $\tc_i(\emptyset, s_{-i})$ with $\tc_i(s_i', s_{-i})$. The connection cost in $\tc_i(\emptyset, s_{-i})$ is $0$, and the disconnection cost is the same as in $\tc_i(s_i', s_{-i})$, because $i$ would be the only agent using $s_i'$. So it must be $\tc_i(\emptyset, s_{-i}) \le \tc_i(s_i', s_{-i})$, and we define $\nBR_i(s) = \emptyset$ in this case. For every agent $i$ in assignment $s$, define $\Q_i(s) = \tc_i(\nBR_i(s), s_{-i}) - \tc_i(s)$; it is not hard to see that agents would be willing to pay this price in order to use facility $s_i$.

Note that some agents might be unstable even with 0 facility cost, so we also consider the case that agents need to receive payments to be stable. Let $\Delta_i$ denote a payment that agent $i$ receives if it does not deviate at state $(s, \gamma)$, and denote the total payments as $\Delta = \sum_i \Delta_i$. In this paper, the default setting is that agents do not receive payments ($\Delta_i = 0$), but we do consider the cases that agents are allowed to be paid by a central coordinator in Section~\ref{sec-pay-agents} or paid by their neighbors in Section~\ref{sec-paid-peering}.  Then we define the stability with payments as follows: a state $(s, \gamma)$ with payments $\Delta$ is stable if it is budget balanced, and for each agent $i$, for any strategy $s_i'\neq s_i$ :

$$\rc_i(s, \gamma) - \Delta_i \le \tc_i(s_i', s_{-i})$$

 We now show that agent $i$ is stable if $\gamma_i(s_i) - \Delta_i \le \Q_i(s)$.

\begin{lemma}
	\label{lemma-Q}
Given any assignment $s$, pricing strategy $\gamma$, and payments to agents $\Delta$, agent $i$ is stable if $\gamma_i(s_i) - \Delta_i \le \Q_i(s)$.
\end{lemma}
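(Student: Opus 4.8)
The plan is to reduce the stability requirement, which is a family of inequalities over all possible deviations $s_i' \neq s_i$, to a single inequality against the ``next best choice'' $\nBR_i(s)$, and then verify that single inequality directly from the hypothesis. The whole argument is just an unwinding of the definitions followed by a short chain of inequalities, so I do not expect any genuine obstacle; the only point requiring care is keeping track of where facility costs do and do not appear.

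First I would use the decomposition $\rc_i(s, \gamma) = \tc_i(s) + \gamma_i(s_i)$, which is immediate from the definitions of $\rc_i$ and $\tc_i$. Subtracting the payment, the agent's net cost at $(s,\gamma)$ is $\rc_i(s,\gamma) - \Delta_i = \tc_i(s) + (\gamma_i(s_i) - \Delta_i)$. The hypothesis $\gamma_i(s_i) - \Delta_i \le \Q_i(s)$ then yields
$$\rc_i(s, \gamma) - \Delta_i \le \tc_i(s) + \Q_i(s).$$
Substituting the definition $\Q_i(s) = \tc_i(\nBR_i(s), s_{-i}) - \tc_i(s)$, the $\tc_i(s)$ terms cancel and the right-hand side collapses to exactly $\tc_i(\nBR_i(s), s_{-i})$.

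The key step is then the defining property of $\nBR_i(s)$: it is the strategy $s_i' \neq s_i$ minimizing $\tc_i(s_i', s_{-i})$. Hence for \emph{every} alternative strategy $s_i' \neq s_i$ we have $\tc_i(\nBR_i(s), s_{-i}) \le \tc_i(s_i', s_{-i})$. Chaining this with the inequality above gives $\rc_i(s,\gamma) - \Delta_i \le \tc_i(s_i', s_{-i})$ for all $s_i' \neq s_i$, which is precisely the stability condition with payments. Note this holds regardless of the sign of $\Q_i(s)$: if $\Q_i(s) < 0$ the hypothesis simply forces a net payment $\Delta_i > \gamma_i(s_i)$, and the argument is unchanged.

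The one subtlety worth flagging is that the right-hand side of the stability condition uses $\tc_i(s_i', s_{-i})$ rather than a cost including facility charges at the new facility; this is exactly the zero-anticipated-facility-cost convention built into the best-response definition, and it is the reason $\Q_i$ and $\nBR_i$ were set up in terms of $\tc$ rather than $\rc$. Because the quantity $\nBR_i(s)$ already ranges over all deviations and picks the cheapest one, a single comparison suffices and no case analysis over facilities is needed.
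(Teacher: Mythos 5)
Your proof is correct and follows essentially the same route as the paper's: decompose $\rc_i(s,\gamma) - \Delta_i = \tc_i(s) + (\gamma_i(s_i) - \Delta_i)$, apply the hypothesis and the definition of $\Q_i(s)$ to collapse the bound to $\tc_i(\nBR_i(s), s_{-i})$, and then invoke the minimality of $\nBR_i(s)$ over all alternative strategies. The extra remarks on the sign of $\Q_i(s)$ and the zero-anticipated-price convention are accurate but not needed beyond what the paper's argument already contains.
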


\begin{proof}

	With $\gamma_i(s_i) - \Delta_i \le \Q_i(s)$, by the definition of agent $i$'s total cost and $\Q_i(s)$:

	\begin{align*}
		\rc_i(s, \gamma) - \Delta_i &= \tc_i(s) + \gamma_i(s) - \Delta_i\\
		&\le \tc_i(s) + \Q_i(s) \\
		&= \tc_i(s) + \tc_i(\nBR_i(s), s_{-i}) - \tc_i(s) \\
		&=  \tc_i(\nBR_i(s), s_{-i})
		\end{align*}

	By the definition of $\nBR_i(s)$, for any $s_i' \ne s_i$:
	$$\tc_i(\nBR_i(s), s_{-i}) \le \tc_i(s_i', s_{-i})$$

	Thus, $i$ is stable.
\end{proof}

\section{Single Facility per Agent: Price of Stability}
\label{sec-single}
In the first part of this paper, we show our results in the setting that each agent uses at most one facility.

\subsection{Facility cost $c(f_k) = 0$ for every $f_k$}
In this section, we provide simple baseline results for the case that there is no facility cost. Set the pricing strategy to be $\gamma_i(f_k) = 0$ for any agent $i$ and facility $f_k$, so all solutions are budget balanced. In this special case, for any agent $i$ in assignment $s$, we have $\rc_i(s, \gamma) = \tc_i(s)$. A state $s$ is stable if for each agent $i$ and strategy $s_i'$, $\tc_i(s) \le \tc_i(s_i', s_{-i})$.

Define potential function $\tPhi(s)$ as:

$$\tPhi(s) = \sum_{i \in \mathcal{A}} \cc(i, s_i) + \sum_{(i,j) | s_i \ne s_j} \dc(i, j)$$

When an agent $i$ switches its strategy from $s_i$ to $s_i'$, it is easy to see that the change of $i$'s cost is captured exactly by the change of $\tPhi(s)$, so $\tPhi(s)$ is an exact potential function.

\begin{theorem}
\label{thm-14-single}
	If $\forall k, c(f_k) = 0$, then price of stability is at most 2 and this bound is tight.
\end{theorem}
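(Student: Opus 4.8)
The plan is to exploit the exact potential function $\tPhi$ together with the observation that $\tPhi$ and the social cost $c$ differ only in how the disconnection terms are counted. First I would record the sandwich inequality
$$\tPhi(s) \le c(s) \le 2\tPhi(s) \quad \text{for every assignment } s.$$
This is immediate from comparing the two displayed expressions in Section~\ref{sec-model}: with all facility costs zero, both $c(s)$ and $\tPhi(s)$ contain the connection term $\sum_{i} \cc(i,s_i)$ exactly once, while the cut disconnection mass $\sum_{(i,j)\mid s_i\neq s_j}\dc(i,j)$ appears once in $\tPhi$ and twice in $c$. Since every connection and disconnection cost is nonnegative, the left inequality follows by dropping the extra copy of the disconnection mass, and the right inequality follows because that disconnection mass is itself at most $\tPhi(s)$.

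Next I would run best-response dynamics starting from the optimum $s^*$. Because the state space is finite and $\tPhi$ is an exact potential (each unilateral improving move decreases $\tPhi$ by exactly the deviating player's cost improvement, and with zero facility costs $\rc_i = \tc_i$, so best responses coincide with $\tPhi$-improving moves), the dynamics terminates at a Nash equilibrium $\hat{s}$ with $\tPhi(\hat{s}) \le \tPhi(s^*)$. Taking $\gamma \equiv 0$ makes $(\hat{s},\gamma)$ budget balanced, hence a stable state. Chaining the sandwich inequalities then yields
$$c(\hat{s}) \le 2\tPhi(\hat{s}) \le 2\tPhi(s^*) \le 2\,c(s^*),$$
so a stable state of cost at most $2\,c(s^*)$ always exists, i.e. $PoS \le 2$.

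For tightness I would exhibit a family of instances whose cheapest stable state costs nearly twice the optimum. The cleanest candidate uses two agents $1,2$ with $\dc(1,2)=w$, a shared facility $f_0$ with $\cc(1,f_0)=0$ and $\cc(2,f_0)=x$, and two private facilities $f_1,f_2$ on which agent $1$ (resp.\ $2$) pays $0$ while the other agent pays a prohibitively large connection cost. Choosing $w < x < 2w$, the social optimum places both agents on $f_0$ at cost $x$ (cheaper than the cost $2w$ incurred by any separated configuration, where each agent pays the disconnection charge $w$). This joint configuration is, however, \emph{not} stable: agent $2$ pays $x$ there but only $w < x$ after moving alone to $f_2$. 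I would then verify that every stable state is a separated one of cost $2w$, so the price of stability equals $2w/x$, which tends to $2$ as $x \downarrow w$.

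I expect the main obstacle to be the tightness construction rather than the upper bound. One must confirm that in the near-tight instance the joint optimum is genuinely destabilized while all separated configurations remain equilibria, and that no cheaper equilibrium slips in — in particular configurations using the empty strategy, or with a single agent alone on $f_0$, must be ruled out or shown to still cost $2w$. The upper bound itself is routine once the sandwich inequality is in hand; the only point needing care there is that best-response dynamics is guaranteed to converge, which follows from finiteness of the state space and strict monotonicity of the exact potential $\tPhi$.
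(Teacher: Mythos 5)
Your proof is correct and takes essentially the same approach as the paper: the same exact potential $\tPhi$, the same sandwich inequality $\tPhi(s) \le c(s) \le 2\tPhi(s)$, and a potential argument producing a stable state $\hat{s}$ with $\tPhi(\hat{s}) \le \tPhi(s^*)$ — the paper simply takes the global minimizer of $\tPhi$ rather than running best-response dynamics from $s^*$, an immaterial difference. Your tightness construction is also the paper's (two agents, one shared facility, one agent's connection cost slightly exceeding the disconnection cost); the private facilities you add are redundant, since the empty strategy already plays that role.
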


\begin{proof}

	With the above definition of $\tPhi(s)$, when an agent $i$ switches its strategy from $s_i$ to $s_i'$, it is easy to see that the change of $i$'s cost is captured exactly by the change of $\tPhi(s)$:

	$$\tc_i(s_i', s_{-i}) - \tc_i(s) = \tPhi(s_i', s_{-i}) - \tPhi(s)$$

	Thus, $\tPhi(s)$ is an exact potential function.

	The total social cost in this case is:
	$$c(s) = \sum_{i \in \mathcal{A}} \cc(i, s_i) + 2 \sum_{(i,j) | s_i \ne s_j} \dc(i, j)$$

	Consider the assignment $\hat{s}$ that minimizes $\tPhi$. $\hat{s}$ must be stable, because the exact potential function $\tPhi$ is minimized, so no agent could deviate to lower its cost. Bound the social cost of $\hat{s}$:

	\begin{align*}
	\rc(\hat{s}) &= \sum_{i \in \mathcal{A}} \cc(i, \hat{s}_i) + 2 \sum_{(i,j) | \hat{s}_i \ne \hat{s}_j} \dc(i, j)\\
	&\le 2 \tPhi(\hat{s})\\
	&< 2 \tPhi(s^*)\\
	&\le 2 \rc(s^*)
	\end{align*}

	So PoS is at most 2.

	Consider an example with one facility and two agents. $\cc(1, f_1) = 0$, $\cc(2, f_1) = 1+\epsilon$, $\dc(1,2) = 1$. The only stable state is agent 1 and 2 both do not use $f_k$, i.e., $s_i \ne s_j$, since agent 2 will always want to switch to that state. When $\epsilon$ approaches 0, PoS approaches 2, since the cost of the stable state is 2 (both agents have a disconnection cost of 1), and the cost of optimum is $1+\epsilon$.
\end{proof}

\begin{theorem}
\label{thm-14-poa-single}
  The price of anarchy is unbounded in our setting.
\end{theorem}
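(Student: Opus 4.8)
\emph{Proof proposal.} The plan is to exhibit a family of instances, living inside the zero-facility-cost setting of this subsection (so all prices are $0$ and a state is stable exactly when no agent wants to switch unilaterally), in which there is a stable assignment whose social cost stays bounded below by a positive constant while the optimum social cost tends to $0$. Since $PoA$ is the worst-case ratio $\rc(\hat s)/\rc(s^*)$ over all instances, producing such a family forces $PoA$ to be arbitrarily large.

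The key structural observation I would isolate first is that the \emph{empty assignment} (every agent plays $s_i=\emptyset$) is always stable as long as every connection cost is strictly positive. Indeed, in the empty assignment every pair $(i,j)$ is disconnected, so $\tc_i(s)=\sum_{j\neq i}\dc(i,j)$. If a single agent $i$ deviates to any facility $f_k$, it is alone there, so it remains disconnected from everyone else; its disconnection term is unchanged and it additionally pays $\cc(i,f_k)>0$. Hence $\tc_i(f_k,s_{-i})=\cc(i,f_k)+\sum_{j\neq i}\dc(i,j)>\tc_i(s)$, and no deviation is profitable. This gives a stable state whose social cost equals $2\sum_{(i,j)}\dc(i,j)$ for free, regardless of how small the connection costs are.

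With this in hand the construction is immediate: take two agents and one facility $f_1$, set $\cc(1,f_1)=\cc(2,f_1)=\epsilon$ for a small $\epsilon>0$, and $\dc(1,2)=1$. By the observation above the empty assignment $\hat s$ is stable, and its social cost is $\rc(\hat s)=2\dc(1,2)=2$. On the other hand, placing both agents at $f_1$ is a valid assignment of cost $\cc(1,f_1)+\cc(2,f_1)=2\epsilon$ (no pair is cut), so $\rc(s^*)\le 2\epsilon$. Therefore $PoA \ge \rc(\hat s)/\rc(s^*) \ge 2/(2\epsilon)=1/\epsilon$, which is unbounded as $\epsilon\to 0$. (If a single stable state of unbounded cost is preferred over a family of ratios, one may instead fix $\epsilon$ and take $n$ agents with all $\dc(i,j)=1$ and all $\cc(i,f_1)=\epsilon$, giving $\rc(\hat s)=n(n-1)$ against $\rc(s^*)\le n\epsilon$.)

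I do not expect any real obstacle here; the only point requiring care is the verification that the empty assignment is genuinely stable, i.e.\ that a lone deviation to an otherwise-empty facility never reduces an agent's cost. This is exactly the content of the structural observation above and relies only on $\dc(i,j)\ge 0$ and strictly positive connection costs, both of which hold in the instance. The remaining ratio computation is routine.
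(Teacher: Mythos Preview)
Your proposal is correct and essentially identical to the paper's proof: both exhibit two agents, one facility, disconnection cost $1$, and observe that the empty assignment is stable while the all-connect assignment is (near-)optimal. The only cosmetic difference is that the paper sets $\cc(1,f_1)=\cc(2,f_1)=0$ exactly, so $\rc(s^*)=0$ and the ratio is infinite in a single instance, whereas you take $\epsilon>0$ and let $\epsilon\to 0$; your choice makes the stability of the empty assignment a strict inequality, but the idea is the same.
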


\begin{proof}
	 Consider an example with one facility and two agents. $\cc(1, f_1) = 0$, $\cc(2, f_1) = 0$, $\dc(1, 2) = 1$. In an assignment $s$ that agent 1 and 2 do not connect to any facility, $c(s) = 2 \dc(1,2) = 2$, while the optimal solution has $c(s^*) = 0$. This is a stable state since no single agent can reduce their cost without the other agent connection to the facility as well.
\end{proof}

\subsection{Price of Stability for arbitrary facility costs $c(f_k)$}
\label{sec-pos-single}
In this section, we consider the case that for each $f_k$, the facility cost $c(f_k)$ is a fixed constant when $f_k$ is open, regardless of how many agents/connections are using $f_k$. We suppose there is a central coordinator to determine the pricing strategy $\gamma$ that is budget balanced, with no other constraint on $\gamma$.

Note that $\tPhi(s)$ is not a potential function in this setting any more, because agent $i$ also considers the facility cost $\gamma_i(s_i)$ when it deviates to decrease $\rc_i(s)$. Thus, $\tPhi(s)$ does not always decrease when $i$ deviates. We define another potential function $\Phi(s)$:

\begin{align*}
	\Phi(s) &= \sum_{f_k | f_k \text{ is open}} c(f_k) + \sum_{i \in \mathcal{A}} \cc(i, s_i) + \sum_{(i,j) | s_i \ne s_j} \dc(i, j)\\
	&=  \sum_{f_k | f_k \text{ is open}} c(f_k) + \tPhi(s)
\end{align*}

We cannot use the potential method \cite{tardos2007network} to analyze the price of stability in our game directly; in fact our game is not a potential game. For this new potential function $\Phi(s)$, a player could still deviate to lower its cost, while the potential increases. The following lemma, however, shows that when a player deviates and decreases its cost $\tc_i(s)$ (but not necessarily decreases cost $c_i(s,\gamma)$), then $\Phi(s)$ does in fact decrease.


\begin{lemma}
	\label{lemma-phi-basic}
	In an assignment $s$, if any agent $i$ switches its strategy to $s_i'$ such that $\tc_i(s_i', s_{-i}) < \tc_i(s)$ and $s_i'$ does not contain any closed facility in $s$, then $\Phi(s_i', s_{-i}) < \Phi(s)$.
\end{lemma}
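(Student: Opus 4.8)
The plan is to split $\Phi$ into two pieces --- the total facility cost $\sum_{f_k \text{ open}} c(f_k)$ of the currently open facilities, and the ``non-facility'' potential $\tPhi$ --- and to track how each piece changes when only agent $i$ switches from $s_i$ to $s_i'$. Writing $s' = (s_i', s_{-i})$, the decomposition gives
$$\Phi(s') - \Phi(s) = \Big(\sum_{f_k \text{ open in } s'} c(f_k) - \sum_{f_k \text{ open in } s} c(f_k)\Big) + \big(\tPhi(s') - \tPhi(s)\big),$$
and I would bound the two bracketed differences separately.

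For the $\tPhi$ term I would reuse the exact-potential observation already made for the facility-cost-free case. Since only agent $i$ changes strategy, the only connection cost in $\tPhi$ that can change is $\cc(i,\cdot)$, and the only disconnection pairs whose cut status can change are the pairs $(i,j)$ involving $i$ (all other pairs keep both endpoints fixed). These are exactly the terms that make up $\tc_i$, so $\tPhi(s') - \tPhi(s) = \tc_i(s') - \tc_i(s)$, which is strictly negative by the hypothesis $\tc_i(s_i', s_{-i}) < \tc_i(s)$.

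For the facility-cost term, the key point is that moving a single agent can alter the set of open facilities in at most two places: the facility $s_i$ that $i$ vacates may close (if $i$ was its only user), and the facility $i$ joins may open. The hypothesis that $s_i'$ ``does not contain any closed facility in $s$'' rules out the second case: since $s_i'$ is either $\emptyset$ or a facility already open in $s$, no new facility cost is added. Meanwhile, vacating $s_i$ can only remove a nonnegative facility cost from the sum. Hence the facility-cost difference is at most $0$.

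Combining the two bounds yields $\Phi(s') - \Phi(s) \le 0 + \big(\tc_i(s') - \tc_i(s)\big) < 0$, which is the claim. The main point to get right is the role of the hypothesis rather than any hard calculation: without the restriction on $s_i'$, agent $i$ could move to a closed facility and open it, adding $c(s_i') \ge 0$ to $\Phi$, which could swamp the strict decrease in $\tPhi$. The condition is precisely what guarantees the facility-cost piece never increases, so that the strict drop in $\tPhi$ carries through to $\Phi$.
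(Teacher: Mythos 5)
Your proof is correct and follows essentially the same route as the paper's: decompose $\Phi$ into the open-facility cost plus $\tPhi$, use the exact-potential property of $\tPhi$ to get the strict decrease $\tc_i(s_i',s_{-i}) - \tc_i(s) < 0$, and use the hypothesis that $s_i'$ contains no closed facility to conclude the total facility cost is non-increasing. The only difference is that you spell out the two bounding steps (why $\tPhi$ changes by exactly $\tc_i$'s change, and why at most the vacated facility's cost can disappear) that the paper states more tersely.
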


\begin{proof}
	First, $\tPhi(s)$ is an exact potential function when there is no facility cost, so $\tPhi(s_i', s_{-i}) < \tPhi(s)$ when $\tc_i(s_i', s_{-i}) < \tc_i(s)$. Also, because $s_i'$ does not contain any closed facility in $s$, then there is no new facility open in $(s_i', s_{-i})$ compared to $s$, and the total facility cost $\sum_{f_k | f_k \text{ is open}} c(f_k)$ is non-increasing compared to that in $s$. Thus, $\Phi(s_i', s_{-i}) < \Phi(s)$.
\end{proof}

Now we use the above potential to prove bounds on the price of stability. While a single player changing its strategy to decrease its cost might actually increase the value of the potential $\Phi(s)$, we give a series of coalitional deviations (i.e., groups of players switching strategies simultaneously) so that the potential is guaranteed to decrease after each such deviation, and so that the cost of the resulting stable solution is not too large.

\begin{theorem}
	\label{thm-134-fix3-central-single}
    The price of stability is at most 2, and this bound is tight. In other words, there exists a stable state $(s,\gamma)$ with cost at most twice that of optimum.
\end{theorem}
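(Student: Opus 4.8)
The plan is to exhibit a stable, budget-balanced state $\hat{s}$ satisfying $\Phi(\hat{s}) \le \Phi(s^*)$; this suffices because the social cost counts each disconnection term twice while $\Phi$ counts it once and all terms of $\Phi$ are nonnegative, so $c(\hat{s}) \le 2\Phi(\hat{s})$ and $\Phi(s^*) \le c(s^*)$, giving $c(\hat{s}) \le 2\Phi(\hat{s}) \le 2\Phi(s^*) \le 2c(s^*)$. The starting point is a pricing criterion derived from Lemma~\ref{lemma-Q} (with $\Delta_i = 0$): an open facility $f_k$ with user set $S_k = \{i : s_i = f_k\}$ can be priced so that all its users remain stable precisely when it is \emph{affordable}, i.e. $\Q_i(s) \ge 0$ for each user and $\sum_{i \in S_k} \Q_i(s) \ge c(f_k)$, since then one can distribute $c(f_k)$ among the users with $0 \le \gamma_i(f_k) \le \Q_i(s)$. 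Hence it is enough to reach, starting from $s^*$ and only ever decreasing $\Phi$, an assignment in which every open facility is affordable.

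The process starts at $s = s^*$ and alternates two kinds of moves. First, while some agent $i$ has a strategy $s_i'$ (an open facility or $\emptyset$) with $\tc_i(s_i', s_{-i}) < \tc_i(s)$, let $i$ switch to it; by Lemma~\ref{lemma-phi-basic} this strictly decreases $\Phi$. A move to a closed facility is always dominated by $\emptyset$, so restricting to open facilities and $\emptyset$ loses no improving move and opens no new facility. When no such move remains, every agent's next-best choice is no better than staying, so $\Q_i(s) \ge 0$ for all $i$. At that point, if every open facility is affordable we price them as above and stop with a stable budget-balanced state. Otherwise some open facility $f_k$ is unaffordable, $\sum_{i \in S_k} \Q_i(s) < c(f_k)$, and we perform a coalitional deviation that \emph{closes} $f_k$: simultaneously move every $i \in S_k$ to $\nBR_i(s)$ (always an already-open facility or $\emptyset$, never a closed one).

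The crux is the claim that closing an unaffordable $f_k$ strictly decreases $\Phi$. Writing $s'$ for the resulting assignment, the only facility-cost change is the saving $-c(f_k)$, since the destinations $\nBR_i(s)$ are already open and no new facility opens. For the remaining terms I would compare $\tPhi(s') - \tPhi(s)$ against $\sum_{i \in S_k} \Q_i(s)$ directly. The key observation is that every edge with both endpoints in $S_k$ is counted twice inside $\sum_{i \in S_k} \Q_i(s)$ (once in $\Q_i$ and once in $\Q_j$), whereas in the true change $\tPhi(s') - \tPhi(s)$ such an internal edge contributes its weight only if its two endpoints split to different destinations, and nothing otherwise, while every edge leaving $S_k$ contributes identically to both expressions. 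Carrying out this bookkeeping yields $\tPhi(s') - \tPhi(s) \le \sum_{i \in S_k} \Q_i(s)$, hence $\Phi(s') - \Phi(s) \le -c(f_k) + \sum_{i \in S_k} \Q_i(s) < 0$ by unaffordability. This step is the main obstacle: a single agent leaving $f_k$ can raise $\Phi$, and only by moving the whole coalition at once does the double counting of internal edges create enough slack to beat $c(f_k)$.

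Finally, since every move strictly decreases $\Phi$ and there are finitely many assignments, the process terminates, necessarily in the stopping case, at a stable budget-balanced state $\hat{s}$ with $\Phi(\hat{s}) \le \Phi(s^*)$; the cost comparison above then gives that the price of stability is at most $2$. Tightness is inherited from the zero-facility-cost instance in Theorem~\ref{thm-14-single}, which is a special case of the present setting and already forces the ratio to approach $2$.
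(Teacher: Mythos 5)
Your proposal is correct and follows essentially the same route as the paper's proof: starting from $s^*$, alternating individual improving moves with respect to $\tc_i$ (restricted to open facilities or $\emptyset$) with coalitional closures of unaffordable facilities, showing each step decreases $\Phi$, pricing via Lemma~\ref{lemma-Q}, and concluding via $c(\hat{s}) \le 2\Phi(\hat{s}) \le 2\Phi(s^*) \le 2c(s^*)$. Your edge-by-edge bookkeeping for the closure step (internal edges counted twice in $\sum_{i \in S_k} \Q_i(s)$ but contributing at most once to the change in $\tPhi$, boundary edges contributing identically) is exactly the double-counting slack the paper exploits when it bounds the newly disconnected pairs by $\sum_{i|s_i=f_k}\sum_{j|s_j=f_k}\dc(i,j)$.
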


\begin{proof}
	We define a coalitional deviation process that converges to a stable state, with $\Phi(s)$ decreasing in each step of the process.

	Start with the optimal assignment $s^*$. If there exists an agent $i$ such that when $i$ switches to a strategy $s_i'$, in which $s_i'$ does not contain any closed facility in $s^*$, then $\tc_i(s_i', s^*_{-i}) < \tc_i(s^*)$, then let agent $i$ switch to $s_i'$. Select another agent to repeat this process until no such agent exists. In other words, each agent is now stable if they assume they are not charged any facility cost. By Lemma~\ref{lemma-phi-basic}, $\Phi(s)$ decreases during each step in this process. Let $s$ be the current state.

	We know that no agent $i$ can decrease $\tc_i(s)$ by switching to another strategy that does not contain any closed facility in $s$. Note that even if $i$ switches to a closed facility in $s$, it would not be able to lower $\tc_i(s)$. This is because if $i$ switches to a closed facility $f_k$, $i$ will pay a connection cost of $\cc(i, f_k)$, and because $i$ is the only agent using $f_k$, so $i$'s disconnection cost would not decrease. Thus, in this ``stable'' state, every agent is stable if it is charged 0 facility cost. By Lemma~\ref{lemma-Q}, for any agent $i$ that uses $f_k$, we can charge $\Q_i(s)$ to agent $i$ while keeping it stable:

	\begin{align}
	 \Q_i(s) &= \tc_i(\nBR_i(s), s_{-i}) - \tc_i(s) \nonumber \\
		&= \cc(i, \nBR_i(s)) + \sum_{j | s_j = f_k} \dc(i,  j) - \cc(i, f_k) - \sum_{j | s_j = \nBR_i(s)} \dc(i,  j) \label{thm-134-fix3-central-single-eq1}
	\end{align}

	For each facility $f_k$, consider the following two cases:\\

	\textbf{Case 1, $c(f_k) > \sum_{i|s_i = f_k} \Q_i(s)$.} $c(f_k)$ is greater than the total payments we can charge all agents using $f_k$ while keeping them stable. In this case, we close $f_k$ and let each agent $i$ using $f_k$ in $s$ switch its strategy $\nBR_i(s)$. Denote the assignment after closing $f_k$ as $s'$, then consider the value change of $\Phi(s)$:

	\begin{align*}
	\Phi(s') - \Phi(s)
	&= -c(f_k) + \sum_{i|s_i = f_k} (\cc(i, \nBR_i(s)) - \cc(i, f_k)) \\
	&+ \sum_{(i,j) | s_i = s_j = f_k, s'_i \ne s'_j} \dc(i, j) - \sum_{(i,j) | s_i \ne s_j, s'_i = s'_j} \dc(i, j)\\
	\end{align*}

   Note that only agents using $f_k$ in $s$ change their strategies, and all other agents keep their strategies at $s$. Thus, the newly disconnected agent pairs in $s'$ are at most all the connected pairs using $f_k$ in $s$:

	\begin{align*}
	\sum_{(i,j) | s_i = s_j = f_k, s'_i \ne s'_j} \dc(i, j)
	&\le \sum_{(i, j) |  s_i = s_j = f_k } \dc(i, j) \\
	&= \frac{1}{2} \sum_{i| s_i = f_k} \sum_{j | s_j = f_k} \dc(i,  j) \\
	&\le \sum_{i|s_i = f_k} \sum_{j | s_j = f_k} \dc(i,  j)
  \end{align*}

	Also, the newly connected agent pairs in $s'$ are at least those created by the agents in $f_k$ deviating to their $\nBR_i(s)$, i.e.,

	$$ \sum_{(i,j) | s_i \ne s_j, s'_i = s'_j} \dc(i, j) \ge  \sum_{i| s_i = f_k} \sum_{j | s_j = \nBR_i(s)} \dc(i,  j) $$

  With the condition of \textbf{Case 1} and Equation \ref{thm-134-fix3-central-single-eq1}, we can bound $\Phi(s') - \Phi(s)$ by:

	\begin{align*}
	&\Phi(s') - \Phi(s)\\
	&= -c(f_k) + \sum_{i|s_i = f_k} (\cc(i, \nBR_i(s) - \cc(i, f_k))
	+ \sum_{(i,j) | s_i = s_j = f_k, s'_i \ne s'_j} \dc(i, j) - \sum_{(i,j) | s_i \ne s_j, s'_i = s'_j} \dc(i, j)\\
	&\le -c(f_k) + \sum_{i|s_i = f_k} (\cc(i, \nBR_i(s) - \cc(i, f_k))
	+ \sum_{i|s_i = f_k} \sum_{j | s_j = f_k} \dc(i,  j) - \sum_{i|s_i = f_k} \sum_{j | s_j = \nBR_i(s)} \dc(i,  j)\\
	&= -c(f_k) + \sum_{i|s_i = f_k} (\cc(i, \nBR_i(s)) + \sum_{j | s_j = f_k} \dc(i,  j)
	- \cc(i, f_k) - \sum_{j | s_j = \nBR_i(s)} \dc(i,  j))\\
	&= -c(f_k) + \sum_{i|s_i = f_k} (\tc_i(\nBR_i(s), s_{-i}) - \tc_i(s))\\
	&= -c(f_k) + \sum_{i|s_i = f_k} \Q_i(s)\\
	&< 0 \\
  \end{align*}

	Thus, $\Phi(s)$ decreases after $f_k$ is closed and every agent $i$ using it switches its strategy to $\nBR_i(s)$. \\

	Then repeat the above two steps: let agents switch strategies to reach a ``stable'' state $s$ that there is no agent $i$ could switch its strategy to $s_i'$, in which $s_i'$ does not contain any closed facility in $s$ and $\tc_i(s_i', s_{-i}) < \tc_i(s)$. Then if there exist a facility $f_k$ that satisfies the condition in \textbf{Case 1}, we close $f_k$ and let every agent $i$ using it switch its strategy to $\nBR_i(s)$. We repeat these two steps until we reach state $s$ that every agent is ``stable'', and every open facility does not satisfy \textbf{Case 1}. Note that $\Phi(s)$ decreases in each step, so this process always converges to such an assignment $s$. Then each open facility must satisfy the following \textbf{Case 2}:

	\textbf{Case 2, $c(f_k) \le \sum_{i|s_i = f_k} Q_i(s)$.} In this case, we set $\gamma_i(f_k) = \Q_i(s)$ for each agent $i$. By Lemma~\ref{lemma-Q}, every agent is stable with $\gamma_i(f_k) = \Q_i(s)$. Also, by definition, $\Q_i(s) = \tc_i(\nBR_i(s), s_{-i}) - \tc_i(s)$. Because every agent is stable without considering facility costs, so $\tc_i(s) \le \tc_i(\nBR_i(s), s_{-i})$, then we are charging a non-negative facility cost to each agent. By the condition of \textbf{Case 2}, $\sum_{i|s_i = f_k} \gamma_i(f_k) \ge c(f_k)$. If $\sum_{i|s_i = f_k} \gamma_i(f_k) > c(f_k)$, then to get a budget-balanced cost assignment, we can lower the facility cost of some agents to make the sum of $\gamma_i(f_k)$ exactly $c(f_k)$, because the agents would not deviate with $\gamma_i(f_k)$, then they would not deviate will a lower facility cost. Thus, we have reached a stable state.

   $\Phi(s)$ decreases in each deviation, so it is an ordinal potential function for our deviation processes.

	The total social cost is:

	$$c(s) = \sum_{f_k | f_k \text{ is open}} c(f_k) + \sum_{i \in \mathcal{A}} \cc(i, s_i) + 2\sum_{(i, j)| s_i \ne s_j} \dc(i, j)$$

	Denote the final stable state as $\hat{s}$. Similar to the analysis in Theorem \ref{thm-14-single},

	$$\rc(\hat{s}) \le 2 \Phi(\hat{s}) <  2 \Phi(s^*) \le 2 \rc(s^*)$$

	Thus, PoS is at most 2.\\

	See Theorem~\ref{thm-14-single} for the lower bound example showing this bound is tight (even if $c(f_k) = 0$).
\end{proof}


The above price of stability result can be easily generalized to approximately stable solutions as well. We say a state $(s, \gamma)$ is $\alpha$-approximate stable if it is budget balanced, and no agent could deviate to lower its cost to $\frac{1}{\alpha}$ of its current cost:

\begin{definition}
	A state $(s, \gamma)$ is \textbf{$\alpha$-approximate stable} if it is budget balanced, and for each agent $i$, for any strategy $s_i' \ne s_i$ :
	$$\rc_i(s, \gamma) \le \alpha\cdot \tc_i(s_i', s_{-i})$$
\end{definition}

\begin{theorem}
	\label{thm-alpha-single}
	 There always exists an $\alpha$-approximate stable state $(\hat{s}, \gamma)$ such that $\frac{c(\hat{s})}{c(s^*)} \le \frac{2}{\alpha}$.
\end{theorem}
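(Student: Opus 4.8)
The plan is to mirror the proof of Theorem~\ref{thm-134-fix3-central-single}, inserting the factor $\alpha$ in two places. First I would generalize Lemma~\ref{lemma-Q}: exactly the same one-line computation shows that agent $i$ is $\alpha$-approximate stable whenever $\gamma_i(s_i)\le \Q_i^{\alpha}(s):=\alpha\,\tc_i(\nBR_i(s),s_{-i})-\tc_i(s)$, since then $\rc_i(s,\gamma)=\tc_i(s)+\gamma_i(s_i)\le \alpha\,\tc_i(\nBR_i(s),s_{-i})\le \alpha\,\tc_i(s_i',s_{-i})$ for every $s_i'\neq s_i$, using that $\nBR_i(s)$ minimizes $\tc_i$ among alternatives. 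Because $\alpha\ge 1$ we have $\Q_i^{\alpha}(s)\ge \Q_i(s)$, so every estimate in Theorem~\ref{thm-134-fix3-central-single} phrased in terms of $\Q_i$ still goes through with $\Q_i^{\alpha}$.

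Next I would run the same coalitional deviation process starting from $s^*$, but relaxed by $\alpha$: in the best-response phase an agent may switch only when it has a strategy (using no closed facility) improving its $\tc_i$ by a factor strictly greater than $\alpha$, and in the facility phase a facility $f_k$ is closed exactly when $c(f_k)>\sum_{i\mid s_i=f_k}\Q_i^{\alpha}(s)$ (Case~1), and is otherwise priced by $\gamma_i(f_k)=\Q_i^{\alpha}(s)$ and kept open (Case~2). Each best-response move strictly decreases $\tPhi$, hence $\Phi$; and when a facility is closed the identical computation as in Theorem~\ref{thm-134-fix3-central-single} gives $\Phi(s')-\Phi(s)\le -c(f_k)+\sum_i \Q_i(s)\le -c(f_k)+\sum_i \Q_i^{\alpha}(s)<0$. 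So $\Phi$ decreases at every step, the process terminates, and by the generalized lemma the terminal state $\hat s$ (with the Case~2 prices rescaled to be budget balanced) is $\alpha$-approximate stable.

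It remains to bound the cost. The double-counting identity is unchanged, so $c(\hat s)=\sum_i \rc_i(\hat s,\gamma)\le 2\Phi(\hat s)$ exactly as in Theorem~\ref{thm-14-single}. Hence it suffices to prove the sharpened potential estimate $\alpha\,\Phi(\hat s)\le c(s^*)$, which together with $c(\hat s)\le 2\Phi(\hat s)$ yields $\tfrac{c(\hat s)}{c(s^*)}\le \tfrac{2}{\alpha}$. At $\alpha=1$ this is just the chain $\Phi(\hat s)\le \Phi(s^*)\le c(s^*)$, recovering Theorem~\ref{thm-134-fix3-central-single}.

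The hard part is precisely this sharpened estimate, and it is where the approximate-stability slack must genuinely be spent: the plain monotonicity $\Phi(\hat s)\le \Phi(s^*)$ only delivers the factor $2$, so gaining the extra $\tfrac1\alpha$ cannot go through $\Phi(s^*)$. To extract it I would argue directly at $\hat s$: summing the $\alpha$-stability inequalities $\rc_i(\hat s,\gamma)\le \alpha\,\tc_i(s_i^*,\hat s_{-i})$ over all $i$ turns the left side into exactly $c(\hat s)$ (the facility prices cancel by budget balance and the disconnection terms double), reducing the whole theorem to controlling $\sum_i \tc_i(s_i^*,\hat s_{-i})$ by $c(s^*)$. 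The obstacle I expect here is that a pair $(i,j)$ lying together in both $s^*$ and $\hat s$ but at \emph{different} facilities contributes a disconnection term to this sum that appears in neither $c(s^*)$ nor $c(\hat s)$ --- the same ``simultaneous-move'' phenomenon flagged in the introduction. Bounding these leftover terms is the real work, and I anticipate that charging them against the guaranteed factor-$\alpha$ drop in $\tc_i$ suffered by each agent the process moved away from its $s^*$ strategy is what pays for them and closes $\alpha\,\Phi(\hat s)\le c(s^*)$ in the regime $\alpha\le 2$ (the tightness example with $\cc(2,f_1)=\alpha\,\dc(1,2)+\epsilon$ confirms this is the binding case).
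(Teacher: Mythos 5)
Your deviation process and the generalization of Lemma~\ref{lemma-Q} are fine and match the paper's construction; the gap is in the final cost accounting. The ``sharpened potential estimate'' $\alpha\,\Phi(\hat s)\le \rc(s^*)$ that you reduce the theorem to is simply false. Take two agents and one facility with $\cc(1,f_1)=\cc(2,f_1)=1$, $\dc(1,2)=10$, $c(f_1)=0$, and $\alpha=2$. The optimum $s^*$ puts both agents at $f_1$; your process never moves from $s^*$ (no agent can improve $\tc_i$ by a factor greater than $2$, and the facility is in your Case~2), so $\hat s=s^*$, yet $\alpha\Phi(\hat s)=2\cdot 2=4>2=\rc(s^*)$. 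Note the theorem's conclusion holds there with equality, which pinpoints the failure as your decomposition: the step $\rc(\hat s)\le 2\Phi(\hat s)$ needlessly doubles connection and facility costs, and no correct second inequality can undo that. Your fallback route is also structurally doomed: summing the $\alpha$-stability inequalities $\rc_i(\hat s,\gamma)\le\alpha\,\tc_i(s^*_i,\hat s_{-i})$ puts $\alpha$ on the wrong side, yielding $\rc(\hat s)\le\alpha\sum_i\tc_i(s^*_i,\hat s_{-i})$ --- a bound that \emph{weakens} as $\alpha$ grows while the theorem's guarantee strengthens; you would need $\sum_i\tc_i(s^*_i,\hat s_{-i})\le\frac{2}{\alpha^2}\rc(s^*)$, which fails in the same example ($2$ versus $1$). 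Also, your plan to charge leftover terms against ``the factor-$\alpha$ drop suffered by each agent the process moved'' has nothing to charge against when the process does not move at all, which is exactly the situation in the counterexample.

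The missing idea --- and what the paper actually does --- is to put the factor $\alpha$ \emph{inside} the potential rather than outside: define
\begin{equation*}
\Phi_\alpha(s)=\sum_{i\in\mathcal{A}}\cc(i,s_i)+\alpha\sum_{(i,j)\,|\,s_i\ne s_j}\dc(i,j)+\sum_{f_k \text{ open}}c(f_k).
\end{equation*}
A short computation shows $\Phi_\alpha$ strictly decreases under exactly your relaxed moves: if $\alpha\,\tc_i(s_i',s_{-i})<\tc_i(s)$ then $\Phi_\alpha(s_i',s_{-i})-\Phi_\alpha(s)<(1-\alpha)\tc_i(s_i',s_{-i})\le 0$, and the analogous bound holds for the facility-closing step with your $\Q_i^{\alpha}$. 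The point is that the $\frac{2}{\alpha}$ loss must be charged to the connection and facility terms while the disconnection terms are charged exactly $2$, and this is only possible when the potential itself weights disconnection by $\alpha$: for $1\le\alpha\le 2$ one has $\rc(\hat s)\le\frac{2}{\alpha}\Phi_\alpha(\hat s)$ (coefficients $\frac{2}{\alpha}\ge 1$ on connection/facility, exactly $2$ on disconnection) and $\Phi_\alpha(s^*)\le \rc(s^*)$ (coefficients $1$, $\alpha\le 2$, $1$), so monotonicity of $\Phi_\alpha$ along the process chains into $\rc(\hat s)\le\frac{2}{\alpha}\Phi_\alpha(\hat s)\le\frac{2}{\alpha}\Phi_\alpha(s^*)\le\frac{2}{\alpha}\rc(s^*)$. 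With this replacement your argument goes through; with the unweighted $\Phi$ it cannot.
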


\begin{proof}
	For any assignment $s$, consider the following potential function:
	$$\Phi_{\alpha}(s) = \sum_{i \in \mathcal{A}} \cc(i, s_i) + \alpha \sum_{(i,j)| s_i \ne s_j} \dc(i, j) + \sum_{f_k \in \mathcal{F} | f_k \text{ is open}}$$

   The proof is essentially the same as that of Theorem~\ref{thm-134-fix3-central-single}, except agent $i$ only would like to deviate from assignment $s$ to $s'$ if $c_i(s') < \frac{1}{\alpha} c_i(s)$, and the potential used is the one above.

  Remember the total social cost is:

	$$\rc(s) = \sum_{i \in \mathcal{A}} \cc(i, s_i) + 2 \sum_{(i,j)| s_i \ne s_j} \dc(i, j) + \sum_{f_k \in \mathcal{F} | f_k \text{ is open}}$$

	Thus, if we start from $s^*$, proceed with our coalitional deviation process to form an $\alpha$-approximate Equilibrium $(\hat{s}, \gamma)$, and make sure $\Phi_{\alpha}(s)$ is non-increasing in each step, then $\frac{\rc(\hat{s})}{\rc(s^*)} \le \frac{2}{\alpha}$.
\end{proof}

This theorem implies that, in particular, the optimum solution $s^*$ is a 2-approximate stable state, i.e., no player can improve their cost by more than a factor of 2 by switching its facility.

\section{Payments to Form Good Stable Solutions}

\subsection{Agents paying each other: ``Paid peering"}
\label{sec-paid-peering}
In this section, we consider the case that agents can pay each other to stabilize the optimal assignment. Formally, for a pair $(i, j)$ such that $s_i = s_j = f_k$, $i$ can pay $j$ up to $\dc(i, j)$ in order to discourage $j$ from leaving facility $f_k$ and thus disconnecting from $i$. Given the asymmetry of agent costs (due to connection costs), it may make sense for agents to give their ``friends" extra incentives to connect with them using a particular facility. Of course, agent $i$ would never voluntarily pay agent $j$ more than $j$'s value to $i$, i.e., more than $\dc(i, j)$. Such payments make sense in general settings of group formation, and make sense in our motivating IXP setting as well: when two ISP's decide to make a peering arrangement to exchange traffic after joining a common IXP, it is often the case that they make a {\em paid} peering contract \cite{shrimali2006paid}, in which one ISP pays the other for the traffic exchange, thus giving it extra incentive to remain connected to their joint facility.

Let $p_{ij}$ denote the payments that agent $i$ pays its neighbor $j$ to discourage it from leaving the facility they share. $p_{ij} \ge 0$ means $i$ pays $j$, and $p_{ij} < 0$ means $i$ receives payment from $j$. For any pair of agents $(i, j)$, we have $p_{ji} = - p_{ij}$. In this section, $\Delta_i$ denotes the total payments that agent $i$ receives from its neighbors minus the total payments $i$ pays its neighbors. In other words, $\Delta_i = \sum_{j|s_i = s_j} p_{ji}$. We abuse the notation to allow $\Delta_i$ to be negative, in which case $i$ pays more than receives from its neighbors. We consider stability with payments defined in Section~\ref{sec-pricing-stable}. It is easy to see that Lemma~\ref{lemma-Q} still holds with this modified definition of $\Delta_i$.

In the optimal assignment $s^*$ with a pricing strategy $\gamma$, consider the stability of every agent using $f_k$: by Lemma~\ref{lemma-Q}, we know every agent $i$ would be stable if $\gamma_i(s^*_i) - \Delta_i \le \Q_i(s^*)$. For a pair of agents $(i, j)$ using $f_k$ in $s^*$, suppose $\Q_i(s^*) \ge 0$, and $\Q_j(s^*) < 0$, which means we can get some payments from $i$ (to pay the facility or its neighbors) while keeping it stable, but $j$ needs to be paid to become stable at $s^*$. Thus, it makes sense for $i$ to pay $j$ to stop it from deviating, but $i$ would not pay more than $\dc(i, j)$, which is the maximum increase of $i$'s cost as a result of $j$'s deviation.

\begin{theorem}
	\label{paid-peering-thm1-single}
	 If we allow agents to pay their neighbors, and $i$ pays $j$ no more than $\dc(i, j)$, then there exist $\gamma$ and payments of players to each other so that the resulting solution $(s^*,\gamma)$ is stable, with $s^*$ being the solution minimizing social cost. In other words, the price of stability becomes 1.
\end{theorem}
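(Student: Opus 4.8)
The plan is to use Lemma~\ref{lemma-Q} to reduce stabilizing $s^*$ to a payment-feasibility question on each facility, and then to establish that feasibility from the optimality of $s^*$ via a flow argument. Fix the optimal assignment $s^*$ and treat each open facility $f_k$ separately, writing $S_k=\{i:s^*_i=f_k\}$. By Lemma~\ref{lemma-Q} (which still holds with the net-payment definition $\Delta_i=\sum_{j\mid s^*_i=s^*_j}p_{ji}$), an agent $i\in S_k$ is stable as long as $\gamma_i(f_k)-\Delta_i\le \Q_i(s^*)$. So it suffices to exhibit antisymmetric payments $p_{ij}=-p_{ji}$ with $p_{ij}\le \dc(i,j)$ (hence $|p_{ij}|\le \dc(i,j)$ by symmetry of $\dc$), together with prices $\gamma_i(f_k)\ge 0$, such that budget balance $\sum_{i\in S_k}\gamma_i(f_k)=c(f_k)$ holds and $\gamma_i(f_k)\le \Q_i(s^*)+\Delta_i$ for every $i\in S_k$. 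Note that all payments stay inside a facility, so $\sum_{i\in S_k}\Delta_i=0$.

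Next I would extract two consequences of the optimality of $s^*$, both obtained from the coalitional deviation used in Case~1 of Theorem~\ref{thm-134-fix3-central-single}: take a subset $T\subseteq S_k$ out of $f_k$, send each $i\in T$ to $\nBR_i(s^*)$, and use that the social cost cannot decrease. Bounding the newly disconnected pairs from above (by the pairs that $T$ severs, both inside $T$ and across the cut to $S_k\setminus T$) and the newly connected pairs from below (by the connections each $i\in T$ forms at $\nBR_i(s^*)$), exactly as in that proof, yields $\sum_{i\in T}\Q_i(s^*)\ge -\operatorname{cut}(T)$, where $\operatorname{cut}(T)=\sum_{i\in T,\,j\in S_k\setminus T}\dc(i,j)$. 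The case $T=S_k$ additionally saves the facility cost $c(f_k)$, and the same computation then gives $\sum_{i\in S_k}\Q_i(s^*)\ge c(f_k)$.

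With these two facts in hand I would finish as follows. Viewing the payments $p$ as an antisymmetric flow on the complete graph over $S_k$ with edge capacities $\dc(i,j)$, the requirement $\Delta_i\ge -\Q_i(s^*)$ for all $i$ is a flow-feasibility problem with node demands $-\Q_i(s^*)$; by a standard max-flow/min-cut (Gale--Hoffman) feasibility condition, such a flow exists precisely when $-\sum_{i\in T}\Q_i(s^*)\le \operatorname{cut}(T)$ for every $T\subseteq S_k$, which is exactly the cut inequality derived above (the consistency condition $\sum_i(-\Q_i)\le 0$ being the case $T=S_k$). This gives payments with $\Q_i(s^*)+\Delta_i\ge 0$ for all $i$. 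Since $\sum_{i\in S_k}(\Q_i(s^*)+\Delta_i)=\sum_{i\in S_k}\Q_i(s^*)\ge c(f_k)$ and each upper bound is nonnegative, we may greedily pick $\gamma_i(f_k)\in[0,\Q_i(s^*)+\Delta_i]$ summing to exactly $c(f_k)$, producing a budget-balanced state in which every agent meets the stability condition of Lemma~\ref{lemma-Q}. Hence $(s^*,\gamma)$ is stable and the price of stability is $1$.

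The main obstacle I expect is the flow-feasibility step. The per-agent condition $-\Q_j\le\sum_i\dc(i,j)$ already follows from a single-agent deviation, but it is not sufficient on its own; one must see that the full family of subset cut conditions is genuinely needed and that coalitional optimality of $s^*$ supplies precisely these. The delicate bookkeeping lies in bounding the change in disconnection cost when an entire subset $T$ relocates simultaneously---in particular handling the cross-cut pairs (which all become disconnected) separately from within-$T$ pairs and from connections newly formed outside $S_k$---so that the resulting inequality matches the Gale--Hoffman cut condition exactly.
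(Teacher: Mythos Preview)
Your approach is correct and essentially the same as the paper's: both reduce stabilizing $s^*$ to a flow/circulation feasibility problem on the agents at each facility with edge capacities $\dc(i,j)$, check feasibility via the max-flow/min-cut (Gale--Hoffman) cut conditions, and derive exactly those cut inequalities from the optimality of $s^*$ by sending a subset $T\subseteq S_k$ to their $\nBR$ strategies. The only cosmetic difference is that the paper folds the facility payment into the same circulation (adding a node $f_k$ with demand $c(f_k)$ and a slack sink $z$), whereas you first solve the inter-agent flow to get $\Q_i(s^*)+\Delta_i\ge 0$ and then pick the $\gamma_i$'s greedily to meet $c(f_k)$; the two formulations are equivalent.
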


\begin{proof}
We construct a circulation network \cite{kleinberg2006algorithm} for each facility $f_k$ as follows: start from the optimal assignment $s^*$. Create a node for each agent $i$ such that $s^*_i = f_k$, and we set a supply of $\Q_i(s^*)$ to it (note that this value might be negative, in which case the node has a demand instead of supply). For each pair of agents $(i, j)$, we create directed edges from $i$ to $j$ and from $j$ to $i$, both with capacity $\dc(i, j)$. Create a node $f_k$ with a supply of $-c(f_k)$ and an edge from each node to $f_k$ with infinite capacity. Finally, create a dummy node $z$ with a demand of the sum of supplies of all other nodes, and add an edge from $f_k$ to $z$ with infinite capacity. This is to make sure the total supply equals the total demand in the network.

Suppose there is a feasible solution, i.e., a valid circulation which satisfies all the supplies and demands, and obeys the capacities of the edges. Then we can use the flow on each edge to create a stable state, as follows. First, denote the flow from any node $i$ to $j$ as $v_{ij}$. For each pair of nodes $i$ and $j$ such that $s^*_i = s^*_j = f_k$, set $p_{ij} = v_{ij} - v_{ji}$ and $p_{ji} = v_{ji} - v_{ij}$. Also, for every agent $i$ such that $s^*_i = f_k$, set $\gamma_i(f_k) = v_{if_k}$. A feasible solution guarantees that facility $f_k$ is fully paid for, because $\sum_{i|s^*_i = f_k} \gamma_i(f_k) = \sum_{i|s^*_i = f_k} v_{if_k} \geq c(f_k)$. Also, every agent is stable. To see this, first, by the definition of $\Delta_i$ in this section, $\Delta_i = \sum_{j|s^*_i = s^*_j} p_{ji} = \sum_{j|s^*_i = s^*_j} (v_{ji} - v_{ij})$. For agent $i$ such that $s^*_i = f_k$, the supply of node $i$ is $\Q_i(s^*)$, which equals the total flow going out of $i$ minus the total flow going into $i$:

$$\Q_i(s^*) = v_{if_k} + \sum_{j|s^*_i = s^*_j = f_k} (v_{ij} - v_{ji}) = \gamma_i(f_k) - \Delta_i.$$

By Lemma~\ref{lemma-Q}, every agent $i$ is stable if $\gamma_i(s^*_i) - \Delta_i = \Q_i(s^*)$, so every agent is stable.

To prove the theorem, all we need to show is that this circulation network is feasible.
By a standard Max-Flow and Min-Cut analysis \cite{kleinberg2006algorithm}, if for every subset of nodes in the circulation network, the total supply of the subset plus the total capacity of edges going into the subset is non-negative, then the circulation network is feasible. We now argue that this is true.

First consider any subset that includes $z$. If the subset does not include $f_k$, then there must be an edge with infinite capacity going into the subset, so the conclusion holds.

Next, consider a subset that includes $f_k$ and $z$. If the subset does not include all agents in $f_k$, then there must be an edge with infinite capacity going into the subset, so the conclusion holds. If the subset does include all agents in $f_k$, then by the definition of $z$, the total supply is 0.

Then, consider a subset that includes $f_k$ but not $z$. If the subset does not include all agents in $f_k$, the conclusion still holds. If the subset is actually all the nodes in the network, then the total supply is:
	$$ \sum_{i \in \mathcal{A}} \Q_i(s^*) - c(f_k),$$
and there are no incoming edges into this set of nodes. Suppose for the sake of forming a contradiction, that the total supply is negative. Then consider an assignment $s'$ such that $f_k$ is closed, and every agent $i$ that uses $f_k$ in $s^*$ switches its strategy to $\nBR_i(s^*)$. For any agent $j$ that does not use $f_k$ in $s^*$, $j$ stays at $s^*_j$. It it easy to see that $\tc_j(s') \le \tc_j(s^*)$ for every $j$ such that $s^*_j \ne f_k$. For any agent $i$ with $s^*_i = f_k$, it must be $\tc_i(s') \le \tc_i(\nBR_i(s^*), s^*_{-i})$. This is because $\tc_i(\nBR_i(s^*), s^*_{-i})$ is the cost if only $i$ switches to $\nBR_i(s^*)$ while all other agents stay at $s^*$, while in $s'$ all agents using $f_k$ switch their strategies. Because agents using $\nBR_i(s^*)$ in $s^*$ all stay at $s^*$, $i$ would not get any ``unexpected cost'' in $s'$, so $\tc_i(s') \le \tc_i(\nBR_i(s^*), s^*_{-i})$. Also, because $f_k$ is closed in $s'$, the facility cost decreases by at least $c(f_k)$. No new facility will open in $s'$ (compared to $s^*$) because we have excluded this possibility in the definition of $\nBR_i(s^*)$. Thus, the total social cost of $s'$ increases by at most:
  $$-c(f_k) + \sum_{i | s^*_i = f_k} (\tc_i(\nBR_i(s^*), s^*_{-i}) - \tc_i(s^*))
  = -c(f_k) + \sum_{i | s^*_i = f_k} \Q_i(s^*).$$
By our assumption, this number is negative, which means $s'$ has less total cost than $s^*$, which contradicts the fact that $s^*$ is optimal. Thus, the total supply must be non-negative.

Finally, consider a subset of nodes that does not include $f_k$. Suppose there exists a subset $\mathcal{B}$, such that the total supply of nodes in $\mathcal{B}$ plus the total capacity of edges going into the subset is negative, i.e.,

	\begin{align}
	\sum_{i \in \mathcal{B}} \Q_i(s^*) + \sum_{i \in \mathcal{B}, j \notin \mathcal{B}, s^*_j = f_k} \dc(i, j) &< 0 \nonumber \\
	\sum_{i \in \mathcal{B}} [\cc(i, \nBR_i(s^*)) + \sum_{s^*_j = f_k, j \ne i} \dc(i, j) - \cc(i, f_k) - \sum_{s^*_j = \nBR_i(s^*)} \dc(i, j)] + \sum_{i \in \mathcal{B}, j \notin \mathcal{B}, s^*_j = f_k} \dc(i, j) &< 0 \nonumber \\
	\sum_{i \in \mathcal{B}} \cc(i, \nBR_i(s^*)) + \sum_{i \in \mathcal{B}} \sum_{s^*_j = f_k, j \ne i} \dc(i, j) - \sum_{i \in \mathcal{B}} \cc(i, f_k) - \sum_{i \in \mathcal{B}} \sum_{s^*_j = \nBR_i(s^*)} \dc(i, j) + \sum_{i \in \mathcal{B}, j \notin \mathcal{B}, s^*_j = f_k} \dc(i, j) &< 0 \label{paid-peering-eq1-single}
	\end{align}

Decompose $\sum_{i \in \mathcal{B}} \sum_{s^*_j = f_k, j \ne i} \dc(i, j)$ into two parts ($j \in \mathcal{B}$ or $j \notin \mathcal{B}$):

\begin{align*}
	\sum_{i \in \mathcal{B}} \sum_{s^*_j = f_k, j \ne i} \dc(i, j)
	&= \sum_{i \in \mathcal{B}} (\sum_{j \in \mathcal{B}, s^*_j = f_k, j \ne i} \dc(i, j) +  \sum_{j \notin \mathcal{B}, s^*_j = f_k} \dc(i, j) ) \\
	&= \sum_{i \in \mathcal{B}} \sum_{j \in \mathcal{B}, j \ne i} \dc(i, j) + \sum_{i \in \mathcal{B}} \sum_{j \notin \mathcal{B}, s^*_j = f_k} \dc(i, j) \\
	&= 2 \sum_{i, j \in \mathcal{B}}  \dc(i, j) + \sum_{i \in \mathcal{B}, j \notin \mathcal{B}, s^*_j = f_k} \dc(i, j) \\
\end{align*}

Remember we only create nodes for every agent $j$ such that $s^*_j = f_k$, and $\mathcal{B}$ is a subset of the nodes, so $\{j \in \mathcal{B}, s^*_j = f_k, j \ne i\} = \{j \in \mathcal{B}, j \ne i\}$ in the first line of the inequality above.\\

Together with Inequality~\ref{paid-peering-eq1-single},
\begin{align*}
	&\sum_{i \in \mathcal{B}} \cc(i, \nBR_i(s^*))
	+ 2 \sum_{i, j \in \mathcal{B}}  \dc(i, j)
	+ 2 \sum_{i \in \mathcal{B}, j \notin \mathcal{B}, s^*_j = f_k} \dc(i, j)
	- \sum_{i \in \mathcal{B}} \cc(i, f_k)
	- \sum_{i \in \mathcal{B}} \sum_{s^*_j = \nBR_i(s^*)} \dc(i, j) < 0
\end{align*}

Which is equivalent to:
\begin{equation}
	\label{paid-peering-eq2-single}
	\sum_{i \in \mathcal{B}} \cc(i, \nBR_i(s^*)) + 2 \sum_{i, j \in \mathcal{B}}  \dc(i, j) + 2\sum_{i \in \mathcal{B}, j \notin \mathcal{B}, s^*_j = f_k} \dc(i, j)
	< \sum_{i \in \mathcal{B}} \cc(i, f_k) + \sum_{i \in \mathcal{B}} \sum_{s^*_j = \nBR_i(s^*)} \dc(i, j)
\end{equation}

Consider an assignment $s'$: start from $s^*$ and let every agent $i$ in $\mathcal{B}$ switch to $\nBR_i(s^*)$. The total facility cost of $s'$ is no more than in $s^*$ because no agent would switch to a closed facility in $s^*$ by the definition of $\nBR_i(s^*)$. The total connection and disconnection cost in $s'$ compared to $s^*$ increases by at most the left hand side of Inequality~\ref{paid-peering-eq2-single}, and decreases by the right hand side of it. Thus the total social cost of $s'$ is less than $s^*$, which is a contradiction, and so such a subset $\mathcal{B}$ does not exist. Since we have now proven that the circulation network is feasible, this completes the proof of the theorem.
\end{proof}

\begin{theorem}
	\label{increase-edge-weight-thm-single}
    Let $s^*$ be a solution with minimum social cost. Then, doubling the disconnection costs makes $s^*$ become a stable solution.
\end{theorem}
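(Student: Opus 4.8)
The plan is to keep the assignment $s^*$ fixed and, working under the doubled disconnection costs, exhibit a budget-balanced pricing $\gamma$ under which every agent is stable. Write $\tc_i'(s) = \cc(i,s_i) + \sum_{j\mid s_i\neq s_j} 2\dc(i,j)$ for the facility-free cost under the doubled weights, let $\nBR_i'$ and $\Q_i'(s)=\tc_i'(\nBR_i'(s),s_{-i})-\tc_i'(s)$ be the corresponding next-best response and its value, and let $S_k=\{i\mid s^*_i=f_k\}$. By Lemma~\ref{lemma-Q} (with no payments, $\Delta_i=0$), it suffices to choose, for each open facility $f_k$, prices $\gamma_i(f_k)\in[0,\Q_i'(s^*)]$ with $\sum_{i\in S_k}\gamma_i(f_k)=c(f_k)$. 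Such a choice exists precisely when (i) $\Q_i'(s^*)\ge 0$ for every agent $i$, and (ii) $\sum_{i\in S_k}\Q_i'(s^*)\ge c(f_k)$ for every open $f_k$. So the whole theorem reduces to proving these two inequalities.

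For (i), the key observation is that doubling the disconnection costs aligns each agent's private incentive with the social cost. Concretely, if a single agent $i$ moves from $f_k$ to any open facility $x$ (or to $\emptyset$), then the change in $\tc_i'$ equals the change in the original social cost $\rc$, plus the facility cost $c(f_k)$ saved in the case $f_k$ closes because $i$ was its only user; the discrepancy is exactly the doubling, since a disconnection weight counted once for the mover appears with a factor $2$ in $\rc$. Because $s^*$ minimizes $\rc$, every such social-cost change is nonnegative, so $\tc_i'(x,s^*_{-i})\ge \tc_i'(s^*)$ for all open $x$ and for $\emptyset$; moving to a closed facility is dominated by moving to $\emptyset$ (our $\nBR$ convention), so no deviation helps and $\Q_i'(s^*)\ge 0$.

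For (ii), I would compare the sum $\sum_{i\in S_k}\Q_i'(s^*)$ against a single collective deviation $s'$ obtained from $s^*$ by closing $f_k$ and sending every $i\in S_k$ to its doubled-cost next-best choice $h_i=\nBR_i'(s^*)$. Expanding both sides, $\sum_{i\in S_k}\Q_i'(s^*)$ equals $\big(\rc(s')-\rc(s^*)\big)+c(f_k)$ plus a nonnegative combination of the disconnection weights $\dc(i,j)$ among pairs inside $S_k$. The surplus is nonnegative because, when each agent estimates $\Q_i'$ individually, it assumes its former neighbors in $S_k$ remain at $f_k$ and therefore charges itself for disconnecting from all of them, whereas in $s'$ those neighbors also leave; since $s^*$ places all of $S_k$ together and $\dc\ge 0$, this over-counting only helps. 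As $s^*$ is optimal we have $\rc(s')-\rc(s^*)\ge 0$, and (ii) follows.

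The main obstacle is inequality (ii), and within it the disconnection bookkeeping among the pairs of $S_k$: the per-agent values $\Q_i'$ are computed as if only one agent deviates at a time, while the facility-cost bound requires reasoning about all of $S_k$ leaving simultaneously. Carefully matching these two viewpoints — and checking that the mismatch is a nonnegative multiple of within-$S_k$ weights rather than something of indefinite sign — is the delicate step; inequality (i) is comparatively immediate once the incentive-alignment identity is written down.
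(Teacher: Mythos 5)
Your proposal is correct, but it takes a genuinely different route from the paper. The paper obtains this theorem as a corollary of the circulation-network construction in Theorem~\ref{paid-peering-thm1-single}: it takes a feasible circulation, sets $f(i,j)=|v_{ij}-v_{ji}|$, and raises each disconnection cost by $f(i,j)\le \dc(i,j)$, converting the inter-agent payments of the paid-peering scheme into edge-weight increases; the facility prices $\gamma_i(f_k)=v_{if_k}$ come along for free from the flow. You instead work directly in the doubled-weight instance and apply Lemma~\ref{lemma-Q} with $\Delta_i=0$, reducing the theorem to the two inequalities (i) $\Q_i'(s^*)\ge 0$ for every agent and (ii) $\sum_{i\in S_k}\Q_i'(s^*)\ge c(f_k)$ for every open $f_k$. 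Both of your inequalities check out: (i) is the incentive-alignment identity $\tc_i'(x,s^*_{-i})-\tc_i'(s^*)\ge \rc(x,s^*_{-i})-\rc(s^*)\ge 0$ (the gap being a possible saved facility cost, and closed facilities dominated by $\emptyset$ under the $\nBR$ convention), and in (ii) the surplus is exactly $\sum 2\dc(i,j)$ over pairs of $S_k$ separated in $s'$ plus $\sum 4\dc(i,j)$ over pairs kept together, so it is indeed a nonnegative combination of within-$S_k$ weights, and optimality of $s^*$ finishes the argument. Your collective-deviation accounting in (ii) mirrors what the paper does in the ``all nodes'' case of circulation feasibility (and in Case~1 of Theorem~\ref{thm-134-fix3-central-single}), so the techniques overlap, but the overall structure differs. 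What each buys: your argument is self-contained (no max-flow/min-cut machinery), explicitly exhibits budget-balanced prices, and yields the stronger structural fact that under doubled weights \emph{no} inter-agent payments are needed at all; the paper's argument is shorter given the prior theorem and proves a finer, instance-dependent statement --- one only needs to raise each edge $(i,j)$ by the net flow $f(i,j)$, which is at most $\dc(i,j)$ and is nonzero only on edges actually carrying payments, rather than doubling uniformly.
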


\begin{proof}
	Consider the circulation network we constructed in the proof of Theorem \ref{paid-peering-thm1-single}, suppose we are given a feasible circulation, and denote the flow from any node $i$ to $j$ as $v_{ij}$. Between any pair of agents $i$ and $j$, define $f(i, j) = |v_{ij} - v_{ji}|$, which is the absolute value difference of flow between $i$ and $j$. Intuitively, if there exist flows both from $i$ to $j$ and $j$ to $i$, we cancel out the flow in one direction. Next, we increase the disconnection cost between $i$ and $j$ by $f(i, j)$. Denote the new disconnection cost as $d'(i, j)$, i.e., $d'(i, j) = \dc(i, j) + f(i, j)$. Because the capacity of the edge between $i$ and $j$ is $\dc(i,  j)$, we know $d'(i, j) \le 2 \dc(i, j)$.

	We will show that with this new disconnection cost $d'(i, j)$ for every pair of nodes $(i, j)$, $s^*$ is a stable solution. For any node $j$ such that $s^*_j = f_k$ with a positive demand, by the analysis from Theorem \ref{paid-peering-thm1-single}, in a feasible solution of the circulation network, the total flow going into node $j$ equals to the total payments $j$ needs to receive to be stabilized. By increasing the disconnection cost from all $i$ that pays $j$ in the feasible solution by $f(i, j)$, we increase the cost for leaving $f_k$ by the sum of such $f(i, j)$, so $j$ is stabilized. For a node $i$ with a positive supply in the network, the supply equals the total flow going out of $i$, and agent $i$ is stable with a total payment of $\Q_i(s^*)$ to its neighbors and $f_k$. If there is a flow of $f(i, j)$ from $i$ to $j$, then by converting this payment to the increase of edge weight, $i$'s total social cost stays the same, so $i$ is still stable.
\end{proof}

\subsection{Paying agents directly to stabilize $s^*$}
\label{sec-pay-agents}
In this section, we take on the role of a central coordinator, who is paying the agents in order to stabilize the optimum solution $s^*$. We study the relationship between the Price of Stability and the minimum total payments required to stabilize $s^*$. We use the notation of $\Delta_i$ and stability with payments defined in Section~\ref{sec-pricing-stable}. $\Delta_i$ represent the payment each agent $i$ receives from the central coordinator, and the total payments are $\Delta = \sum_i \Delta_i$.

\begin{lemma}
	\label{pay-agents-lemma-delta-single}
	The following pricing strategy $\gamma$ and payments strategy $\Delta$ stabilizes the optimal assignment $s^*$: For each facility $f_k$,

	\textbf{Case 1. $\forall i$ such that $s^*_i = f_k$ and $\Q_i(s^*) < 0$, set $\gamma_i(f_k) = 0$ and $\Delta_i = -\Q_i(s^*)$.}

	\textbf{Case 2. $\forall i$ such that $s^*_i = f_k$ and $\Q_i(s^*) \ge 0$, set $\gamma_i(f_k) = \Q_i(s^*)$ and $\Delta_i = 0$.}\\
\end{lemma}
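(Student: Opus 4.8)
The plan is to verify the two requirements of stability with payments from Section~\ref{sec-pricing-stable}: that every agent is individually stable, and that the resulting state is budget balanced. The individual stability follows almost immediately from Lemma~\ref{lemma-Q}. First I would record that the proposed scheme is well-defined and uses only non-negative quantities: in Case~1 we have $\Q_i(s^*) < 0$, so $\Delta_i = -\Q_i(s^*) > 0$ and $\gamma_i(f_k) = 0$, while in Case~2 we have $\Q_i(s^*) \ge 0$, so $\gamma_i(f_k) = \Q_i(s^*) \ge 0$ and $\Delta_i = 0$. In both cases a one-line computation gives $\gamma_i(f_k) - \Delta_i = \Q_i(s^*)$, so the hypothesis $\gamma_i(s^*_i) - \Delta_i \le \Q_i(s^*)$ of Lemma~\ref{lemma-Q} holds with equality, and hence every agent $i$ is stable at $(s^*,\gamma)$ with payments $\Delta$.

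The remaining, and more delicate, part is budget balance: I must show that for each open facility $f_k$ the prices collected, $\sum_{i \mid s^*_i = f_k} \gamma_i(f_k) = \sum_{i \mid s^*_i = f_k,\, \Q_i(s^*) \ge 0} \Q_i(s^*)$, can be made to equal $c(f_k)$. The key claim is that this sum is at least $c(f_k)$, after which I can lower the prices of some Case~2 agents to reach exact equality, exactly as in the budget-balancing step at the end of the proof of Theorem~\ref{thm-134-fix3-central-single}; since lowering $\gamma_i(f_k)$ only relaxes each agent's incentive constraint (it makes $\gamma_i(f_k) - \Delta_i$ strictly smaller than $\Q_i(s^*)$), the stability established above is preserved.

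To prove the key claim I would reuse the optimality argument from the ``all nodes'' case in the proof of Theorem~\ref{paid-peering-thm1-single}. Consider the alternative assignment $s'$ obtained from $s^*$ by closing $f_k$ and moving every agent with $s^*_i = f_k$ to its next best choice $\nBR_i(s^*)$, which by definition never reopens a facility closed in $s^*$. The total social cost changes by at most $-c(f_k) + \sum_{i \mid s^*_i = f_k} \Q_i(s^*)$, and since $s^*$ is optimal this quantity must be non-negative, giving $\sum_{i \mid s^*_i = f_k} \Q_i(s^*) \ge c(f_k)$. Finally, dropping the negative terms contributed by the Case~1 agents only increases the sum, so $\sum_{i \mid s^*_i = f_k,\, \Q_i(s^*) \ge 0} \Q_i(s^*) \ge \sum_{i \mid s^*_i = f_k} \Q_i(s^*) \ge c(f_k)$, which is precisely the claim.

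I expect the budget-balance step to be the main obstacle rather than the stability step. The stability verification is essentially a single application of Lemma~\ref{lemma-Q}, whereas ensuring that each facility is fully paid for requires invoking the optimality of $s^*$ through the facility-closing exchange argument, together with the observation that discarding the Case~1 deficits only helps; one must then separately confirm that restoring exact budget balance by reducing prices leaves the already-established stability intact.
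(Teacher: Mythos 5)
Your proposal is correct and follows essentially the same route as the paper's own proof: stability is a one-line application of Lemma~\ref{lemma-Q} with $\gamma_i(f_k) - \Delta_i = \Q_i(s^*)$, and budget balance rests on the same facility-closing exchange argument (close $f_k$, move each of its agents to $\nBR_i(s^*)$, use $\tc_i(\hat{s}) \le \tc_i(\nBR_i(s^*), s^*_{-i})$ and optimality of $s^*$), followed by lowering prices to reach exact equality. The only difference is cosmetic: the paper runs the contradiction directly on the assumption $\sum_{i|s^*_i=f_k}\gamma_i(f_k) < c(f_k)$, whereas you first establish $\sum_{i|s^*_i=f_k}\Q_i(s^*) \ge c(f_k)$ and then drop the negative Case~1 terms, which is the same computation reorganized.
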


\begin{proof}

	To prove $(s^*, \gamma)$ is stable and budget balanced, it is enough to show that every agent is stable, and for each facility $f_k$, $\sum_{i|s^*_i = f_k} \gamma_i(f_k) \ge c(f_k)$. If $\sum_{i|s^*_i = f_k} \gamma_i(f_k) > c(f_k)$, then we can always lower some of $\gamma_i(f_k)$ while keeping $i$ stable.

	First, note that in both \textbf{Case 1} and \textbf{Case 2}, $\gamma_i(f_k) - \Delta_i = \Q_i(s^*)$. Thus, by Lemma~\ref{lemma-Q}, every agent $i$ is stable at state $(s^*, \gamma)$ with a payment $\Delta_i$.

	We now prove $(s^*, \gamma)$ is budget balanced. Suppose to the contrary that there exists facility $f_k$ such that $\sum_{i|s^*_i = f_k} \gamma_i(f_k) < c(f_k)$. Then we close $f_k$ and let every agent $i$ using $f_k$ in $s^*$ switch its strategy to $\nBR_i(s^*)$. Denote this new assignment as $\hat{s}$. Because all other agents using $\nBR_i(s^*)$ in $s^*$ stay at $s^*$, we know that $\tc_i(\hat{s})$ is at most $\tc_i(\nBR_i(s^*), s^*_{-i})$, i.e.:

	\begin{equation}
		\label{pay-agents-lemma-delta-single-eq1}
	\tc_i(\hat{s}) \le \tc_i(\nBR_i(s^*), s^*_{-i}).
	\end{equation}

	For every agent $i$ in \textbf{Case 1}, because $\Q_i(s^*) < 0$, by the definition of $\Q_i(s^*)$:

	$$\Q_i(s^*) = \tc_i(\nBR_i(s^*), s^*_{-i}) - \tc_i(s^*) < 0$$

	Combine it with Inequality~\ref{pay-agents-lemma-delta-single-eq1}:

	$$\tc_i(\hat{s}) \le \tc_i(\nBR_i(s^*), s^*_{-i}) < \tc_i(s^*)$$

	Remember $\gamma_i(f_k) = 0$ in this case, so:
	
	$$\tc_i(\hat{s}) \le \tc_i(s^*) =  \tc_i(s^*) + \gamma_i(f_k)$$

	For every agent $i$ in \textbf{Case 2}, $\gamma_i(f_k) = \Q_i(s^*)$, combine with Inequality~\ref{pay-agents-lemma-delta-single-eq1} and the definition of $\Q_i(s^*)$:

	\begin{align*}
		\tc_i(\hat{s}) &\le \tc_i(\nBR_i(s^*), s^*_{-i}) \\
		&= \Q_i(s^*) + \tc_i(s^*)\\
		&= \gamma_i(f_k) + \tc_i(s^*)
	\end{align*}

	Thus, $\tc_i(\hat{s}) \le \tc_i(s^*) +  \gamma_i(f_k)$ in both cases. Sum up all $i$ such that $s^*_i = f_k$:

	$$\sum_{i| s^*_i = f_k} \tc_i(\hat{s}) < \sum_{i| s^*_i = f_k} \tc_i(s^*) + \sum_{i| s^*_i = f_k} \gamma_i(f_k)$$

	Because we assume $\sum_{i|s^*_i = f_k} \gamma_i(f_k) < c(f_k)$, so≈:

	$$\sum_{i| s^*_i = f_k} \tc_i(\hat{s}) < \sum_{i| s^*_i = f_k} \tc_i(s^*) + c(f_k)$$

	Only agents that use $f_k$ in $s^*$ change their strategies in $\hat{s}$, so for every agent $i$ such that $s^*_i \ne f_k$, we have $\tc_i(\hat{s}) \le \tc_i(s^*)$. Now consider the total social cost of state $s$: besides the costs change between $\tc(s^*)$ and $\tc(\hat{s})$, the total facility cost decreases by at least $c(f_k)$. Note that no new facility will open in $\hat{s}$ (compared to $s^*$) by the definition of $\nBR_i(s^*)$. Therefore:

	$$\rc(\hat{s}) - \rc(s^*) \le \sum_i (\tc_i(\hat{s}) - \tc_i(s^*)) - c(f_k) < 0$$

	This contradicts the fact that $s^*$ is the optimal solution. Thus, $(s^*, \gamma)$ must be budget balanced.

\end{proof}

In the following theorem, we show that the total payments $\Delta$ required to stabilize $s^*$ is only a fraction of the social cost of the optimal solution. Actually, there is a tradeoff between $\Delta$ and $PoS$: when $PoS$ is large, e.g., $PoS = 2$, we only need to pay $\frac{1}{5} \rc(s^*)$ to stabilize $s^*$, which is only a small fraction of $\rc(s^*)$. Thus when $PoS$ is small, there already exist good stable solutions by definition of $PoS$, and when $PoS$ is large, only a relatively small amount of payments are necessary to stabilize $s^*$.

\begin{theorem}
	\label{pay-agents-thm-single}
	For any instance, $\frac{\Delta}{\rc(s^*)} \le 1 - \frac{2}{5} PoS$, where $\Delta$ is the payment needed to stabilize $s^*$.
\end{theorem}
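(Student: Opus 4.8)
The plan is to first pin down $\Delta$ exactly and then reduce the inequality to building one cheap stable state. By Lemma~\ref{pay-agents-lemma-delta-single}, paying $\Delta_i=-\Q_i(s^*)$ to every agent with $\Q_i(s^*)<0$ and nothing to the rest stabilizes $s^*$ and is budget balanced; and since stability with payments forces $\gamma_i(s^*_i)-\Delta_i\le \Q_i(s^*)$ (take the deviation $s_i'=\nBR_i(s^*)$) together with $\gamma_i\ge 0$, no stabilizing scheme can spend less, so $\Delta=\sum_{i:\Q_i(s^*)<0}(-\Q_i(s^*))$. Write $U=\{i:\Q_i(s^*)<0\}$ for the unstable agents. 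Letting $\hat s$ be the cheapest stable state, so that $\rc(\hat s)=PoS\cdot\rc(s^*)$ for this instance, the target $\frac{\Delta}{\rc(s^*)}\le 1-\frac25 PoS$ rearranges exactly to $\rc(\hat s)\le\frac52(\rc(s^*)-\Delta)$. Thus it suffices to exhibit some stable state of cost at most $\frac52(\rc(s^*)-\Delta)$; equivalently, I would argue by contradiction, assuming the bound fails and producing either a stable state cheaper than $PoS\cdot\rc(s^*)$ or a state cheaper than $s^*$, contradicting the definition of $PoS$ or the optimality of $s^*$.

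For the construction I would reuse the potential $\Phi$ and the coalitional-deviation process of Theorem~\ref{thm-134-fix3-central-single}, which drives any assignment to a stable state without ever increasing $\Phi$, together with the two facts $\rc(\cdot)\le 2\Phi(\cdot)$ and $\Phi(s^*)=\rc(s^*)-D^*$, where $D^*=\sum_{(i,j)\mid s^*_i\neq s^*_j}\dc(i,j)$ is the single-counted disconnection cost of $s^*$. Running the process from $s^*$ already yields the ``factor-$2$'' bound $\rc(\hat s)\le 2\Phi(s^*)=2(\rc(s^*)-D^*)$. To make $\Delta$ appear I would instead start the process from an assignment in which unstable agents have already jumped to $\nBR_i(s^*)$: if only the unstable agents $U_k$ of a \emph{single} facility $f_k$ move (all other agents staying put), then $\Phi$ drops by exactly $\sum_{i\in U_k}(-\Q_i(s^*))$, with no error term, because no target facility is emptied, so every reconnection anticipated by $\nBR_i(s^*)$ really happens, while two agents of $U_k$ landing together only helps. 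Optimality of $s^*$ applied to each such single-facility deviation then supplies the structural inequalities tying each $\Q_i(s^*)$ to the connected disconnection weight inside $f_k$, which is what lets the $\Delta$-savings be charged against $D^*$.

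The heart of the proof, and the reason the constant is $\frac52$ rather than $2$, is aggregating these single-facility gains into the whole of $\Delta$. If the unstable agents of many facilities move at once, some $i$'s target $\nBR_i(s^*)$ may be exactly a facility that another unstable agent is vacating, so the reconnection that $\Q_i(s^*)$ counted on does not occur; these mutual-targeting (``swap'') interactions inject a loss $L$, and $\Phi$ then falls by only $\Delta-L$. This is precisely the difficulty noted in the introduction: a lone agent's move scales its private saving and the social effect by the same factor, so no single move can ever beat $s^*$, which forces many agents to move simultaneously and then forces the interactions to be tamed. I would bound $L$ against $D^*$ using optimality of $s^*$ on the multi-facility deviations, and at each mutual-targeting pair move only one of the two agents (choosing an orientation that never empties a needed target), so that the certified drop stays a definite fraction of $\Delta$; this produces a second bound that is decreasing in $\Delta$ and increasing in $D^*$. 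A convex combination of this bound with the factor-$2$ bound $\rc(\hat s)\le 2(\rc(s^*)-D^*)$ cancels the $D^*$ terms, and the weight that makes this cancellation work is exactly what converts the factor $2$ into $\frac52$, giving $\rc(\hat s)\le\frac52(\rc(s^*)-\Delta)$. I expect the delicate accounting of the loss $L$ — deciding exactly which subset of unstable agents to move so that the guaranteed potential drop is a large enough multiple of $\Delta$ — to be the main obstacle.
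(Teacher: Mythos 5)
Your scaffolding matches the paper's proof exactly: the characterization $\Delta=\sum_i\max(0,-\Q_i(s^*))$ (Lemma~\ref{pay-agents-lemma-delta-single}), the reduction to exhibiting one stable state of cost at most $\frac52(\rc(s^*)-\Delta)$ via the potential and the coalitional process of Theorem~\ref{thm-134-fix3-central-single} (this is Lemma~\ref{pay-agents-lemma-s-single}), and the handling of mutually-targeting ``swap'' pairs by an orientation that moves only one side (this is the paper's partition $\mathcal{A}_1,\mathcal{A}_2$ and its swap set $\mathcal{Z}$). The problem is that the step you defer as ``the main obstacle'' \emph{is} the theorem, and the completion you sketch can be checked to fall short. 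Write $Z=\sum_{(i,j)\in\mathcal{Z}}\dc(i,j)$ and let $b^i=(b_i,s^*_{-i})$ as in the paper. Your Bound 1 is $\rc(\hat s)\le 2\Phi(s^*)=2(\rc(s^*)-D^*)$. For your Bound 2, starting the process from the state $s^0$ in which every unstable agent moves to $\nBR_i(s^*)$, the correct accounting (the paper's Lemma~\ref{pay-agents-lemma-br-phi-single}) gives $\tPhi(s^0)\le\sum_i\tc_i(b^i)+Z$; since $\sum_i\tc_i(b^i)=\tc(s^*)-\Delta=\tPhi(s^*)+D^*-\Delta$ (individual costs double-count each disconnection while the potential counts it once), this yields $\Phi(s^0)\le\Phi(s^*)+D^*-\Delta+Z$, hence $\rc(\hat s)\le 2(\rc(s^*)-\Delta+Z)\le 2(\rc(s^*)-\Delta+D^*)$ --- indeed ``decreasing in $\Delta$, increasing in $D^*$'' as you ask. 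But the only convex combination of these two bounds that cancels $D^*$ is the equal-weight one, and it gives $\rc(\hat s)\le 2\rc(s^*)-\Delta$, i.e., $\frac{\Delta}{\rc(s^*)}\le 2-PoS$. This is strictly weaker than $1-\frac25 PoS$ whenever $PoS<\frac53$ (at $PoS=1$ it gives $\Delta\le\rc(s^*)$ instead of the required $\Delta\le\frac35\rc(s^*)$), so no choice of weights rescues the scheme; your claim that the cancellation ``converts the factor $2$ into $\frac52$'' does not hold. Using an oriented state instead does not help either: one gets $\Phi(s^1)\le\Phi(s^*)+\DC{s^*}{\mathcal{A}_1}{\mathcal{A}_1}-\sum_{i\in\mathcal{A}_1}\Delta_i$ and its mirror image for $s^2$, and averaging again lands at a bound of the $2-PoS$ type.

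The missing leverage --- and the actual heart of the paper's argument --- is that the optimality of $s^*$ must be invoked against the constructed states \emph{themselves}, not merely to bound the swap loss $Z$ by $D^*$. The paper assumes for contradiction that all three of $s^0,s^1,s^2$ violate the condition $\sum_i\tc_i(b^i)\ge\frac45\tPhi(\cdot)$ of Lemma~\ref{pay-agents-lemma-s-single}. The violations for $s^1,s^2$, played against Lemma~\ref{pay-agents-lemma-a1-single}, bound $\sum_i\tc_i(b^i)$ by the $s^*$-costs of $\mathcal{A}_2$ (resp.\ $\mathcal{A}_1$) alone; the violation for $s^0$ forces $Z>\frac14\sum_i\tc_i(b^i)$; and then, crucially, because $s^1$ and $s^2$ cannot have total cost below $\rc(s^*)$, one additionally obtains $2\sum_{i\in\mathcal{A}_2}\tc_i(b^i)<\CC{s^*}{\mathcal{A}_2}+2\DC{s^*}{\mathcal{A}_2}{\mathcal{A}_2}$ and its symmetric counterpart. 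Chaining all of these finally forces $\tc(s^0)<\tc(s^*)$, hence $\rc(s^0)<\rc(s^*)$, contradicting optimality; the constant $\frac52$ is exactly what makes this chain close. These optimality-driven inequalities appear nowhere in your plan, and without them no weighting of potential bounds alone can beat $2-PoS$. So the gap is genuine: the postponed ``delicate accounting'' is not a technicality to be filled in, but the proof itself.
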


Before we prove this theorem, we define some extra notation. Let $b_i$ denote the strategy of agent $i$ such that $\tc_i(b_i, s^*_{-i})$ is minimized. We can always find such a $b_i$ with either $b_i = \emptyset$, or $b_i$ is open in $s^*$. This is because if there exists $b_i'$ that minimizes $\tc_i(b_i', s^*_{-i})$ and $b_i'$ is closed in $s^*$, then it must be $\tc_i(\emptyset, s^*_{-i}) \le \tc_i(b_i', s^*_{-i})$ because $i$ would be the only agent using $b_i'$. Then we define an assignment $b^i$ for each agent $i$: start from $s^*$, only let agent $i$ switch its strategy to $b_i$ and all other agents stay at their facility in $s^*$, i.e., $b^i = (b_i, s^*_{-i})$.

For any subset of agents $\mathcal{A}_1 \subseteq \mathcal{A}$, let $\CC{s}{\mathcal{A}_1}$ denote the total connection cost for agents in $\mathcal{A}_1$ in assignment $s$:

$$\CC{s}{\mathcal{A}_1} = \sum_{i \in \mathcal{A}_1} \cc(i, s_i) $$

With assignment $s$, for any two subsets of agents $\mathcal{A}_1 \subseteq \mathcal{A}$, $\mathcal{A}_2 \subseteq \mathcal{A}$, define $\DC{s}{\mathcal{A}_1}{\mathcal{A}_2}$ as:

$$\DC{s}{\mathcal{A}_1}{\mathcal{A}_2} = \sum_{(i, j)| i \in \mathcal{A}_1, j \in \mathcal{A}_2, s_i\neq s_j} \dc(i, j)$$

$\DC{s}{\mathcal{A}_1}{\mathcal{A}_2}$ represents the disconnection cost between $\mathcal{A}_1$ and $\mathcal{A}_2$ in $s$. Note that $(i, j)$ is still an unordered pair here. $\DC{s}{\mathcal{A}_1}{\mathcal{A}_1}$ represents the disconnection cost inside $\mathcal{A}_1$.

By the definition of $\tPhi(s)$, we can rewrite it using the notation defined above as:

$$\tPhi(s) = \CC{s}{\mathcal{A}} + \DC{s}{\mathcal{A}}{\mathcal{A}}$$

Also, we can rewrite $\tc(s)$ as:

$$\tc(s) =  \CC{s}{\mathcal{A}} + 2 \DC{s}{\mathcal{A}}{\mathcal{A}}$$

The following lemma gives a condition that would directly imply Theorem~\ref{pay-agents-thm-single}. We first show this lemma, and then prove the theorem.

\begin{lemma}
	\label{pay-agents-lemma-s-single}
	For any $\tau \ge 2$, if there exists a state $s$, such that $\sum_{i \in \mathcal{A}} \tc_i(b^i)
	\ge  \frac{2}{\tau} \tPhi(s)$, then $\frac{\Delta}{\rc(s^*)} \le 1 - \frac{1}{\tau} PoS$.
\end{lemma}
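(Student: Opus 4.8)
The plan is to reduce the statement to a clean inequality about the cost of the best stable solution, using the explicit payment scheme of Lemma~\ref{pay-agents-lemma-delta-single}. First I would pin down $\Delta$ exactly. Under that scheme an agent $i$ with $\Q_i(s^*)<0$ receives $\Delta_i=-\Q_i(s^*)$ and every other agent receives nothing; meanwhile $b_i$ equals $s^*_i$ precisely when $\Q_i(s^*)\ge 0$ and equals $\nBR_i(s^*)$ otherwise, so that $\tc_i(b^i)=\tc_i(s^*)+\min\{0,\Q_i(s^*)\}$. Summing over all agents and using $\sum_i \tc_i(s^*)=\tc(s^*)$ yields the identity $\Delta=\tc(s^*)-\sum_{i\in\mathcal{A}}\tc_i(b^i)$. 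Since this scheme is a valid stabilizing payment, the minimum payment is at most this value, which is all an upper bound on $\Delta$ requires.

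Second, I would rewrite the target. Writing $F(s)=\sum_{f_k \text{ open in } s} c(f_k)$ for the total facility cost of a state and $F^*=F(s^*)$, so that $\rc(s^*)=F^*+\tc(s^*)$, the identity above gives $1-\frac{\Delta}{\rc(s^*)}=\frac{F^*+\sum_i \tc_i(b^i)}{\rc(s^*)}$. Hence the claimed bound $\frac{\Delta}{\rc(s^*)}\le 1-\frac{1}{\tau}\,\mathrm{PoS}$ is equivalent to $\mathrm{PoS}\cdot \rc(s^*)\le \tau\bigl(F^*+\sum_i \tc_i(b^i)\bigr)$. Since $\mathrm{PoS}\cdot \rc(s^*)=\rc(\hat s)$ for the minimum-cost stable state $\hat s$, everything comes down to exhibiting some stable state whose social cost is at most $\tau\bigl(F^*+\sum_i \tc_i(b^i)\bigr)$.

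Third, I would bound $\rc(\hat s)$ through the hypothesized state $s$. Starting the coalitional deviation process of Theorem~\ref{thm-134-fix3-central-single} from $s$ produces a stable state $\bar s$ along which $\Phi$ is nonincreasing; since $\rc(s')\le 2\Phi(s')$ holds for every state (because $2\Phi(s')-\rc(s')=F(s')+\CC{s'}{\mathcal{A}}\ge 0$), I obtain $\rc(\hat s)\le \rc(\bar s)\le 2\Phi(\bar s)\le 2\Phi(s)=2F(s)+2\tPhi(s)$. The relevant states $s$ (built by letting subsets of agents switch to their $b_i$, each of which is either $\emptyset$ or a facility open in $s^*$) open no facility outside $s^*$, so $F(s)\le F^*$; combined with $\tau\ge 2$ this gives $2F(s)\le \tau F^*$. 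Finally the hypothesis $\sum_i \tc_i(b^i)\ge \frac{2}{\tau}\tPhi(s)$ gives $2\tPhi(s)\le \tau \sum_i \tc_i(b^i)$, so $\rc(\hat s)\le \tau F^*+\tau\sum_i \tc_i(b^i)$, exactly the inequality needed.

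The main obstacle I anticipate is the bookkeeping around facility costs: the clean potential bound $\rc\le 2\Phi$ carries the facility term $F(s)$, and the argument only closes because the states $s$ to which the lemma is applied never open a facility absent from $s^*$ (so $F(s)\le F^*$) and because $\tau\ge 2$ lets the factor-$2$ facility term be absorbed into $\tau F^*$. Establishing the exact value of $\Delta$ and correctly tracking the two-versus-one counting of disconnection costs hidden in $\tc(s)=\CC{s}{\mathcal{A}}+2\DC{s}{\mathcal{A}}{\mathcal{A}}$ versus $\tPhi(s)=\CC{s}{\mathcal{A}}+\DC{s}{\mathcal{A}}{\mathcal{A}}$ is the other place where care is required, but the remainder is routine algebra. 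Note that the genuinely hard part, namely \emph{constructing} a state $s$ satisfying the hypothesis $\sum_i \tc_i(b^i)\ge \frac{2}{\tau}\tPhi(s)$, is deferred to the proof of Theorem~\ref{pay-agents-thm-single} and is not needed here.
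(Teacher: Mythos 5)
Your proposal is correct and follows essentially the same route as the paper's own proof: pin down the payments of Lemma~\ref{pay-agents-lemma-delta-single} as $\Delta_i=\tc_i(s^*)-\tc_i(b^i)$, rewrite $\rc(s^*)-\Delta$, then run the coalitional deviation process of Theorem~\ref{thm-134-fix3-central-single} from the hypothesized state $s$ to obtain a stable state of cost at most $2\Phi(s)$, and chain the inequalities. The one place you diverge is also the one place you are more careful than the paper. The paper ``rewrites'' $\rc(s^*)-\Delta$ as $\sum_{f_k \text{ open in } s} c(f_k)+\sum_{i}\tc_i(b^i)$, with the facility term summed over facilities open in $s$; that identity is false in general, since the facility term of $\rc(s^*)$ ranges over facilities open in $s^*$, and it is precisely this slip that makes the paper's displayed chain appear to work for an arbitrary state $s$. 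Your version uses the correct identity with $\sum_{f_k \text{ open in } s^*} c(f_k)$, and consequently needs the extra condition that $s$ opens no facility outside $s^*$, so that $2\sum_{f_k \text{ open in } s} c(f_k)\le \tau\sum_{f_k \text{ open in } s^*} c(f_k)$; you state this explicitly and verify it for the only states ($s^0$, $s^1$, $s^2$, built from $b_i$-moves into facilities open in $s^*$ or $\emptyset$) to which the lemma is ever applied. So, strictly speaking, both you and the paper establish the lemma only for states whose open facilities are among those of $s^*$, not for literally arbitrary $s$ as in the statement; your write-up has the virtue of making that hypothesis visible, and the restricted version is exactly what Theorem~\ref{pay-agents-thm-single} needs.
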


\begin{proof}
	Consider the payment strategy $\Delta$ defined in Lemma \ref{pay-agents-lemma-delta-single}. We first prove that $\Delta_i = \tc_i(s^*) - \tc_i(b^i)$. We discuss the two cases of agents in Lemma \ref{pay-agents-lemma-delta-single} separately.

	In \textbf{Case 1}, $\Q_i(s^*) < 0$. By definition, $\Q_i(s^*) = \tc_i(\nBR_i(s^*), s^*_{-i}) - \tc_i(s^*) < 0$. Remember $b_i$ is the strategy to minimize $\tc_i(b_i, s^*_{-i})$. Because $\tc_i(\nBR_i(s^*), s^*_{-i}) < \tc_i(s^*)$, we know that $s^*_i$ does not minimize $\tc_i$ if $i$ is the only one allowed to change its strategy, i.e. $b_i \ne s^*_i$. By definition, $\nBR_i(s^*)$ minimizes $\tc_i$ if $i$ is the only one allowed to change its strategy, given a condition that $i$ must leave its current facility. Thus, $b_i = \nBR_i(s^*)$. Rewrite $\Delta_i$:

	$$\Delta_i = -\Q_i(s^*) = -(\tc_i(\nBR_i(s^*), s^*_{-i}) - \tc_i(s^*)) = \tc_i(s^*) - \tc_i(b^i).$$

	In \textbf{Case 2}, $\Q_i(s^*) \ge 0$, i.e., $\Q_i(s^*) = \tc_i(\nBR_i(s^*), s^*_{-i}) - \tc_i(s^*) \ge 0$. Similar to the analysis in \textbf{Case 1}, because $\tc_i(\nBR_i(s^*), s^*_{-i}) \ge \tc_i(s^*)$, then it must be $b_i = s^*_i$, so:
	
	$$\Delta_i = 0 = \tc_i(s^*) - \tc_i(b^i).$$

  Rewrite $c(s^*) - \Delta$ as:

	\begin{align*}
		c(s^*) - \Delta &= \sum_{f_k \text{ is open in } s} c(f_k) + \sum_{i \in \mathcal{A}} \tc_i(s^*) - \sum_{i \in \mathcal{A}} \Delta_i \\
		&= \sum_{f_k \text{ is open in } s} c(f_k) +\sum_{i \in \mathcal{A}} \tc_i(s^*) - \sum_{i \in \mathcal{A}}  (\tc_i(s^*) - \tc_i(b^i)) \\
		&= \sum_{f_k \text{ is open in } s} c(f_k) + \sum_{i \in \mathcal{A}} \tc_i(b^i)
	\end{align*}

	Because we assumed there exists $s$ such that $\sum_{i \in \mathcal{A}} \tc_i(b^i)
	\ge  \frac{2}{\tau} \tPhi(s)$, and $\tau \ge 2$, we have that:

	\begin{align*}
		\tau \sum_{i \in \mathcal{A}} \tc_i(b^i) &\ge 2 \tPhi(s)\\
		\tau (\sum_{f_k \text{ is open in } s} c(f_k) + \sum_{i \in \mathcal{A}} \tc_i(b^i) )
		&\ge 2 (\sum_{f_k \text{ is open in } s} c(f_k) + \tPhi(s))\\
		\tau (\sum_{f_k \text{ is open in } s} c(f_k) + \sum_{i \in \mathcal{A}} \tc_i(b^i) )
		&\ge 2 \Phi(s) \\
		\tau (c(s^*) - \Delta)
		&\ge 2 \Phi(s)
		\end{align*}
	
	Starting from assignment $s$, we apply the deviation steps and pricing strategy in Theorem \ref{thm-134-fix3-central-single}. By the analysis in Theorem \ref{thm-134-fix3-central-single}, we will reach a stable state $(\hat{s}, \gamma)$, such that $\Phi(s) \ge \Phi(\hat{s}) \ge \frac{1}{2} c(\hat{s})$, so:

	\begin{align*}
		\tau(c(s^*) - \Delta) &\ge c(\hat{s})\\
		c(s^*) - \Delta &\ge \frac{1}{\tau} c(\hat{s}) \\
		\frac{\Delta}{c(s^*)} &\le 1 - \frac{1}{\tau} \frac{c(\hat{s})}{c(s^*)}
													\le 1 - \frac{1}{\tau} PoS,\\
		\end{align*}
as desired.
\end{proof}

With Lemma~\ref{pay-agents-lemma-s-single}, to prove Theorem~\ref{pay-agents-thm-single}, we only need to find an assignment to satisfy the condition in Lemma~\ref{pay-agents-lemma-s-single} with $\tau = \frac{5}{2}$. We define several assignments $s^0$, $s^1$, $s^2$ below as candidates that may satisfy this condition, and then prove that at least one of them must do so for every instance.

\noindent Define the following state as $s^0$: start from $s^*$, let every agent $i$ switch its strategy to $b_i$, i.e. $s^0_i = b_i$.

We decompose all agents into two groups: $\mathcal{A}_1$ and $\mathcal{A}_2$, such that $\mathcal{A}_1 \cup \mathcal{A}_2 = \mathcal{A}$ and $\mathcal{A}_1 \cap \mathcal{A}_2 = \emptyset$. The agents in $\mathcal{A}_1$ satisfy the following condition: $\forall i, j \in \mathcal{A}_1$, if $b_i \ne b_j$, then it must be the case that $b_i \ne s^*_j$ or $b_j \ne s^*_i$. In other words, it cannot be that by following their best responses, the agents $i$ and $j$ ``switch places", with $i$ moving to $s_j^*$ and $j$ moving to $s_i^*$. The same condition holds for $\mathcal{A}_2$: $\forall i, j \in \mathcal{A}_2$, if $b_i \ne b_j$, then it must be the case that $b_i \ne s^*_j$ or $b_j \ne s^*_i$. There always exist such a decomposition for any instance: For every pair of facilities $f_k,f_\ell$, with $k\leq \ell$, take all the agents $i$ with $s_i^*=f_k$ and $b_i=f_\ell$. Put those in $\mathcal{A}_1$. Similarly, put all agents $j$ with $s_j^*=f_\ell$ and $b_j=f_k$ into $\mathcal{A}_2$. Do this for every pair of facilities, so now $\mathcal{A}_1$ consists of agents who are moving to a higher-numbered facility, and $\mathcal{A}_2$ of agents who are moving to a lower-numbered facility. Finally, for any other agent $i$ such that either $s^*_i = b_i$ or $b_i = \emptyset$ or $s_i^*=\emptyset$, put it in either $\mathcal{A}_1$ or $\mathcal{A}_2$ arbitrarily. The sets $\mathcal{A}_1$ and $\mathcal{A}_2$ which are created clearly satisfy the desired conditions.

Define the following assignment as $s^1$: start from $s^*$ and let every agent $i$ in $\mathcal{A}_1$ switch its strategy to $b_i$, while every agent $i$ in $\mathcal{A}_2$ stays at $s^*_i$. Similarly, define $s^2$ as the assignment that starts from $s^*$ and lets every agent $i$ in $\mathcal{A}_2$ switch its strategy to $b_i$, while every agent $i$ in $\mathcal{A}_1$ stays at $s^*_i$.

Now we will show several lemmas based on the properties of $\mathcal{A}_1$ and $\mathcal{A}_2$.

\begin{lemma}
	\label{pay-agents-lemma-a1-single}
	 For any $\mathcal{A}_1$ and $\mathcal{A}_2$ that satisfy the definition above, these inequalities always hold:

	 $$\sum_{i \in \mathcal{A}_1} \tc_i(b^i) \ge \CC{s^1}{\mathcal{A}_1} + \DC{s^1}{\mathcal{A}_1}{\mathcal{A}_1} + \DC{s^1}{\mathcal{A}_1}{\mathcal{A}_2}$$

	 $$\sum_{i \in \mathcal{A}_2} \tc_i(b^i) \ge \CC{s^2}{\mathcal{A}_2} + \DC{s^2}{\mathcal{A}_2}{\mathcal{A}_2} + \DC{s^2}{\mathcal{A}_1}{\mathcal{A}_2}$$
\end{lemma}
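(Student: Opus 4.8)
The plan is to prove the first inequality directly by expanding the left-hand side and matching it term-by-term against the three pieces on the right; the second inequality then follows by the symmetric argument with the roles of $\mathcal{A}_1$ and $\mathcal{A}_2$ (and of $s^1$ and $s^2$) interchanged. First I would write out
$$\sum_{i \in \mathcal{A}_1} \tc_i(b^i) = \sum_{i \in \mathcal{A}_1} \cc(i, b_i) + \sum_{i \in \mathcal{A}_1} \sum_{j \,|\, s^*_j \ne b_i} \dc(i,j),$$
using that in $b^i$ only agent $i$ deviates (to $b_i$) while every other agent stays at $s^*$. The connection term is immediately $\CC{s^1}{\mathcal{A}_1}$, since by definition $s^1_i = b_i$ for every $i \in \mathcal{A}_1$. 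The remaining work is to show that the double disconnection sum dominates $\DC{s^1}{\mathcal{A}_1}{\mathcal{A}_1} + \DC{s^1}{\mathcal{A}_1}{\mathcal{A}_2}$.

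Next I would split the inner sum according to whether the partner $j$ lies in $\mathcal{A}_2$ or in $\mathcal{A}_1$. For $j \in \mathcal{A}_2$ the match is exact: in both $b^i$ and $s^1$ agent $j$ sits at $s^*_j$ and agent $i$ sits at $b_i$, so the condition $s^*_j \ne b_i$ is literally the condition $s^1_i \ne s^1_j$, and this block equals $\DC{s^1}{\mathcal{A}_1}{\mathcal{A}_2}$ exactly. The delicate part is the $j \in \mathcal{A}_1$ block, because in $b^i$ the partner $j$ still sits at $s^*_j$, whereas in $s^1$ it has moved to $b_j$; the left side therefore compares $b_i$ against $s^*_j$, while the target $\DC{s^1}{\mathcal{A}_1}{\mathcal{A}_1}$ compares $b_i$ against $b_j$.

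To reconcile these I would work pairwise. Each unordered pair $\{i,j\} \subseteq \mathcal{A}_1$ contributes to the left side the amount $\dc(i,j)\big(\mathbf{1}[s^*_j \ne b_i] + \mathbf{1}[s^*_i \ne b_j]\big)$ (the two terms coming from $\tc_i(b^i)$ and $\tc_j(b^j)$), while it contributes $\dc(i,j)\,\mathbf{1}[b_i \ne b_j]$ to $\DC{s^1}{\mathcal{A}_1}{\mathcal{A}_1}$. So it suffices to establish the scalar inequality $\mathbf{1}[s^*_j \ne b_i] + \mathbf{1}[s^*_i \ne b_j] \ge \mathbf{1}[b_i \ne b_j]$ for all $i,j \in \mathcal{A}_1$. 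This is exactly where the defining property of $\mathcal{A}_1$ enters: if $b_i = b_j$ the right side is $0$ and there is nothing to prove, and if $b_i \ne b_j$ then the ``no switching places'' condition guarantees $b_i \ne s^*_j$ or $b_j \ne s^*_i$, so at least one indicator on the left is $1$. Summing this pairwise inequality over all intra-$\mathcal{A}_1$ pairs, then adding the exact connection and cross terms, yields the first inequality; the second is identical after swapping the two groups.

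The main obstacle is precisely this intra-$\mathcal{A}_1$ block: the left-hand side is built from the ``one agent at a time'' deviations $b^i$, whereas the right-hand side lives in the simultaneous-deviation assignment $s^1$, and a naive term-by-term comparison fails because a pair could be connected in $s^1$ yet disconnected in both $b^i$ and $b^j$ (this is exactly the forbidden swap $b_i = s^*_j$, $b_j = s^*_i$). The partition property was engineered to rule this case out, and the pairwise indicator inequality is the clean way to exploit it. The only thing to check carefully is the bookkeeping of ordered-versus-unordered pairs, so that each $\dc(i,j)$ is counted with the correct multiplicity on each side.
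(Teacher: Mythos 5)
Your proposal is correct and takes essentially the same approach as the paper's proof: expand $\tc_i(b^i)$, identify the connection term with $\CC{s^1}{\mathcal{A}_1}$ and the $\mathcal{A}_1$--$\mathcal{A}_2$ cross term with $\DC{s^1}{\mathcal{A}_1}{\mathcal{A}_2}$ exactly, and invoke the no-swap property of the partition to dominate the intra-$\mathcal{A}_1$ disconnection terms. The only cosmetic difference is that you state the key step as the pairwise indicator inequality $\mathbf{1}[s^*_j \ne b_i] + \mathbf{1}[s^*_i \ne b_j] \ge \mathbf{1}[b_i \ne b_j]$, whereas the paper expresses the same fact as an inclusion between index sets of unordered pairs.
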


\begin{proof}

	In assignment $s^1$, every agent $i \in \mathcal{A}_1$ switches its strategy to $b_i$, i.e., $b_i = s^1_i$. Also, every agent $j \in \mathcal{A}_2$ stays at $s^*_j$, i.e., $s^1_j = s^*_j$. In assignment $b^i$, agent $i$ is assigned to $b_i$, i.e, $b^i_i = b_i$, and any other agent $j$ stay at $s^*$, i.e., $b^i_j = s^*_j$.\\

	For every $i \in \mathcal{A}_1$:

	$$
	\tc_i(b^i) = \cc(i, b_i) + \sum_{j| b_i \ne s^*_j} \dc(i, j)
	= \cc(i, s^1_i) + \sum_{j| b_i \ne s^*_j} \dc(i, j)
	$$

	Sum up for all $i \in \mathcal{A}_1$:

	\begin{align*}
		\sum_{i \in \mathcal{A}_1} \tc_i(b^i)
		&= \sum_{i \in \mathcal{A}_1}  \cc(i, s^1_i) + \sum_{i \in \mathcal{A}_1} \sum_{j| b_i \ne s^*_j} \dc(i, j) \\
		&\ge \sum_{i \in \mathcal{A}_1}  \cc(i, s^1_i) + \sum_{(i, j)| i, j \in \mathcal{A}_1, b_i \ne s^*_j \lor b_j \ne s^*_i} \dc(i, j) + \sum_{(i, j)| i \in \mathcal{A}_1, j \in \mathcal{A}_2, b_i \ne s^*_j} \dc(i, j)\\
	\end{align*}

	For all agents $i,j \in \mathcal{A}_1$, remember the condition that if $b_i \ne b_j$, then it must be $b_i \ne s^*_j$ or $b_j \ne s^*_i$. This means $\{(i,j)| i, j \in \mathcal{A}_1, s^1_i \ne s^1_j\} = \{(i,j)| i, j \in \mathcal{A}_1, b_i \ne b_j\} \subseteq \{ (i,j)| i, j \in \mathcal{A}_1, b_i \ne s^*_j \lor s^*_i \ne b_j\}$, so we can bound the inequality above by:

	\begin{align*}
		\sum_{i \in \mathcal{A}_1} \tc_i(b^i)
		&\ge \sum_{i \in \mathcal{A}_1}  \cc(i, s^1_i) + \sum_{(i, j)| i, j \in \mathcal{A}_1, b_i \ne s^*_j \lor b_j \ne s^*_i} \dc(i, j) + \sum_{(i, j)| i \in \mathcal{A}_1, j \in \mathcal{A}_2, b_i \ne s^*_j} \dc(i, j)\\
		&\ge \sum_{i \in \mathcal{A}_1}  \cc(i, s^1_i) + \sum_{(i, j)| i, j \in \mathcal{A}_1, s^1_i \ne s^1_j} \dc(i, j) + \sum_{(i, j)| i \in \mathcal{A}_1, j \in \mathcal{A}_2, s^1_i \ne s^1_j} \dc(i, j)\\
		&= \CC{s^1}{\mathcal{A}_1} + \DC{s^1}{\mathcal{A}_1}{\mathcal{A}_1} + \DC{s^1}{\mathcal{A}_1}{\mathcal{A}_2}
	\end{align*}

	The bound for agents in $\mathcal{A}_2$ can be proved similarly.
\end{proof}

By our construction of $\mathcal{A}_1$ and $\mathcal{A}_2$, if there is a pair of agents $(i, j)$ such that $b_i \ne b_j$, $b_i = s^*_j$, and $b_j = s^*_i$, then it must be that one of $i, j$ is in $\mathcal{A}_1$, and the other one is in $\mathcal{A}_2$. We denote the set of such pairs of $(i, j)$ as $\mathcal{Z}$ for convenience. Formally,

	$$\mathcal{Z} = \{(i, j) | i \in \mathcal{A}_1, j \in \mathcal{A}_2, b_i \ne b_j, b_i = s^*_j, b_j = s^*_i\}$$

Note that by definition, for any pair of agents $(i, j) \in \mathcal{Z}$, it must be the case that $i$ and $j$ are not at the same facility in $s^*$. 
Thus, 

	\begin{equation}
		\label{pay-agents-thm-single-eqZ}
		\DC{s^*}{\mathcal{A}_1}{\mathcal{A}_2}
		 = \sum_{(i,j) | i \in \mathcal{A}_1, j \in \mathcal{A}_2, s^*_i \ne s^*_j} \dc(i, j)
		 \ge  \sum_{(i,j) \in \mathcal{Z}} \dc(i, j)
	\end{equation}

	We first show the following lemma to bound $\tPhi(s^0)$ and $\tc(s^0)$ by the sum of $\tc_i(b^i)$ and the total disconnection cost between agents in $\mathcal{Z}$ in $s^*$:

	\begin{lemma}
		\label{pay-agents-lemma-br-phi-single}
		$$\sum_{i \in \mathcal{A}} \tc_i(b^i) + \sum_{(i, j) \in \mathcal{Z}} \dc(i, j) \ge \tPhi(s^0)$$

		$$2 \sum_{i \in \mathcal{A}} \tc_i(b^i) + 2 \sum_{(i, j) \in \mathcal{Z}} \dc(i, j) \ge \tc(s^0)$$
	\end{lemma}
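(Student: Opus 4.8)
The plan is to prove both inequalities by expanding everything into connection-cost and disconnection-cost sums and then comparing term by term over unordered pairs of agents. First I would record the two expansions that drive the argument. Since $s^0_i = b_i$ for all $i$, we have $\tPhi(s^0) = \CC{s^0}{\mathcal{A}} + \DC{s^0}{\mathcal{A}}{\mathcal{A}} = \sum_{i \in \mathcal{A}} \cc(i, b_i) + \sum_{(i,j)\,|\,b_i \ne b_j} \dc(i,j)$. On the other side, $\tc_i(b^i) = \cc(i, b_i) + \sum_{j\,|\,b_i \ne s^*_j} \dc(i,j)$, so $\sum_{i} \tc_i(b^i) = \sum_{i} \cc(i, b_i) + \sum_{i} \sum_{j\,|\,b_i \ne s^*_j} \dc(i,j)$. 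The connection-cost parts of $\sum_i \tc_i(b^i)$ and of $\tPhi(s^0)$ coincide, so after cancelling them the first inequality reduces to the purely disconnection-cost claim $\sum_{i} \sum_{j\,|\,b_i \ne s^*_j} \dc(i,j) + \sum_{(i,j)\in\mathcal{Z}} \dc(i,j) \ge \sum_{(i,j)\,|\,b_i \ne b_j} \dc(i,j)$.

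I would then establish this last inequality pair by pair. Fix an unordered pair $\{i,j\}$; the coefficient of $\dc(i,j)$ on the right-hand side is $1$ when $b_i \ne b_j$ and $0$ otherwise, while in the double sum on the left it picks up $1$ for each of the two directions $b_i \ne s^*_j$ and $b_j \ne s^*_i$ that holds. The case split is then: if $b_i = b_j$, the right side contributes $0$ and there is nothing to prove; if $b_i \ne b_j$ but the pair does \emph{not} switch places (i.e.\ it is not the case that both $b_i = s^*_j$ and $b_j = s^*_i$), then at least one direction survives, so the double sum alone already supplies coefficient $\ge 1$. The only delicate case is a switch-places pair, where $b_i = s^*_j$ and $b_j = s^*_i$ with $b_i \ne b_j$: here the double sum contributes $0$, but this pair lies in the unordered image of $\mathcal{Z}$, which furnishes exactly the missing $\dc(i,j)$.

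The crux of the whole argument — and the step I expect to need the most care — is verifying that $\mathcal{Z}$ accounts for each switch-places pair exactly once. This is precisely where the defining property of the partition $\{\mathcal{A}_1, \mathcal{A}_2\}$ is used: the requirement that no two agents inside the same part switch places forces every switch-places pair to straddle $\mathcal{A}_1$ and $\mathcal{A}_2$. Consequently exactly one of its two orderings meets the membership constraint ($i \in \mathcal{A}_1$, $j \in \mathcal{A}_2$) in the definition of $\mathcal{Z}$, so it is counted once and never doubled. I would state this straddle observation explicitly and cite the construction of $\mathcal{A}_1, \mathcal{A}_2$ for it; everything else in the pairwise comparison is bookkeeping.

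The second inequality then follows with no further combinatorics. Because all connection costs are nonnegative, $\tc(s^0) = \CC{s^0}{\mathcal{A}} + 2\,\DC{s^0}{\mathcal{A}}{\mathcal{A}} \le 2\bigl(\CC{s^0}{\mathcal{A}} + \DC{s^0}{\mathcal{A}}{\mathcal{A}}\bigr) = 2\,\tPhi(s^0)$. Doubling the first inequality and chaining gives $2\sum_{i} \tc_i(b^i) + 2\sum_{(i,j)\in\mathcal{Z}} \dc(i,j) \ge 2\,\tPhi(s^0) \ge \tc(s^0)$, which is the desired bound.
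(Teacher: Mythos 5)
Your proof is correct, but it takes a genuinely different route from the paper's. The paper derives the lemma from Lemma~\ref{pay-agents-lemma-a1-single}: it takes the two one-sided bounds on $\sum_{i \in \mathcal{A}_1} \tc_i(b^i)$ and $\sum_{i \in \mathcal{A}_2} \tc_i(b^i)$ (stated in terms of $s^1$ and $s^2$), translates them into $s^0$ quantities, sums them, and then splits $\DC{s^0}{\mathcal{A}_1}{\mathcal{A}_2}$ into the pairs satisfying $b_i \ne s^*_j \lor b_j \ne s^*_i$ plus the pairs in $\mathcal{Z}$; the second inequality is obtained by redoing this decomposition with coefficient $2$ on the disconnection terms. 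You bypass Lemma~\ref{pay-agents-lemma-a1-single} and the assignments $s^1, s^2$ entirely: after cancelling the (identical) connection-cost sums, you compare coefficients of $\dc(i,j)$ pair by pair, with the three-case analysis ($b_i = b_j$; disagreeing but not switching places; switching places), invoking the partition $\{\mathcal{A}_1,\mathcal{A}_2\}$ only through the straddle property that forces every switch-places pair into $\mathcal{Z}$ exactly once --- which is indeed the observation the paper records just before the lemma via Inequality~\ref{pay-agents-thm-single-eqZ}. Your derivation of the second inequality is also simpler: instead of a second decomposition you chain $\tc(s^0) \le 2\tPhi(s^0)$ (valid since connection costs are non-negative, a fact the paper's version also uses implicitly) with the doubled first inequality. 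What the paper's route buys is reuse: Lemma~\ref{pay-agents-lemma-a1-single} is needed in the proof of Theorem~\ref{pay-agents-thm-single} anyway, so deriving this lemma from it keeps all manipulations in the same $CC/DC$ bookkeeping. What your route buys is transparency and self-containment: it isolates exactly why $\mathcal{Z}$ is the right correction term (it covers precisely the pairs invisible to both ordered sums $\sum_j [\,b_i \ne s^*_j\,]\dc(i,j)$), and it makes clear that nothing about the partition is used here beyond the straddle property.
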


	\begin{proof}

	By lemma \ref{pay-agents-lemma-a1-single}, we know:

	\begin{equation}
		\label{pay-agents-lemma-br-phi-single-eq1}
	\sum_{i \in \mathcal{A}_1} \tc_i(b^i) \ge\CC{s^1}{\mathcal{A}_1} + \DC{s^1}{\mathcal{A}_1}{\mathcal{A}_1} + \DC{s^1}{\mathcal{A}_1}{\mathcal{A}_2}
	\end{equation}

	For agent $i \in \mathcal{A}_1$, $i$'s strategy is $b_i$ in both $s^1$ and $s^0$, so the connection cost to these agents are the same in $s_1$ and $s^0$. Also, for each pair of $(i, j)$ that are both in $\mathcal{A}_1$, the disconnection cost between them is the same in $s^1$ and $s^0$. For a pair $(i, j)$ such that $i \in \mathcal{A}_1$, $j \in \mathcal{A}_2$, we know $i$'s strategy is $b_i$ in both $b^i$ and $s^1$, and $j$ stay at $s^*_j$ in both $b^i$ and $s^1$, so $s^1_i \ne s^1_j$ is equivalent to $b_i \ne s^*_j$. Thus, Inequality~\ref{pay-agents-lemma-br-phi-single-eq1} is equivalent to:

	$$\sum_{i \in \mathcal{A}_1} \tc_i(b^i) \ge \CC{s^0}{\mathcal{A}_1} + \DC{s^0}{\mathcal{A}_1}{\mathcal{A}_1} + \sum_{(i,j)| i \in \mathcal{A}_1, j \in \mathcal{A}_2, b_i \ne s^*_j} d(i, j)$$

	Similarly, we can get the following bounds for agents in $\mathcal{A}_2$:

	$$\sum_{i \in \mathcal{A}_2} \tc_i(b^i) \ge \CC{s^0}{\mathcal{A}_2} + \DC{s^0}{\mathcal{A}_2}{\mathcal{A}_2} + \sum_{(i,j)| i \in \mathcal{A}_1, j \in \mathcal{A}_2, b_j \ne s^*_i} d(i, j)$$

	Summing them up:
	\begin{equation}
		\label{pay-agents-lemma-br-phi-single-eq2}
	\sum_{i \in \mathcal{A}} \tc_i(b^i)
	\ge \CC{s^0}{\mathcal{A}} + \DC{s^0}{\mathcal{A}_1}{\mathcal{A}_1} + \DC{s^0}{\mathcal{A}_2}{\mathcal{A}_2}
	+ \sum_{(i,j)| i \in \mathcal{A}_1, j \in \mathcal{A}_2, b_i \ne s^*_j \lor s^*_i \ne b_j} d(i, j)
	\end{equation}

	By definition, $\tPhi(s^0)$ equals:

	\begin{align*}
	\tPhi(s^0) &= \CC{s^0}{\mathcal{A}} + \DC{s^0}{\mathcal{A}}{\mathcal{A}}\\
	&= \CC{s^0}{\mathcal{A}} + \DC{s^0}{\mathcal{A}_1}{\mathcal{A}_1} + \DC{s^0}{\mathcal{A}_2}{\mathcal{A}_2} + \DC{s^0}{\mathcal{A}_1}{\mathcal{A}_2}
	\end{align*}

	The only difference between $\tPhi(s^0)$ and the right hand side of Inequality~\ref{pay-agents-lemma-br-phi-single-eq2} is the last term. We further decompose $\DC{s^0}{\mathcal{A}_1}{\mathcal{A}_2}$:

	\begin{align}
		\DC{s^0}{\mathcal{A}_1}{\mathcal{A}_2}
		&=  \sum_{(i,j)| i \in \mathcal{A}_1, j \in \mathcal{A}_2, b_i \ne b_j} \dc(i, j) \nonumber \\
		&=  \sum_{(i,j)| i \in \mathcal{A}_1, j \in \mathcal{A}_2, b_i \ne b_j, b_i \ne s^*_j \lor b_j \ne s^*_i} \dc(i, j)
		+  \sum_{(i,j)| i \in \mathcal{A}_1, j \in \mathcal{A}_2, b_i \ne b_j, b_i = s^*_j, b_j = s^*_i} \dc(i, j)\nonumber \\
		&\le  \sum_{(i,j)| i \in \mathcal{A}_1, j \in \mathcal{A}_2, b_i \ne s^*_j \lor b_j \ne s^*_i} \dc(i, j)
		+ \sum_{(i,j) \in \mathcal{Z}} \dc(i, j) \label{pay-agents-lemma-br-phi-single-eq3}
	\end{align}

	Put it back to the definition of $\tPhi(s^0)$:
	\begin{align*}
		\tPhi(s^0) &\le \CC{s^0}{\mathcal{A}} + \DC{s^0}{\mathcal{A}_1}{\mathcal{A}_1} + \DC{s^0}{\mathcal{A}_2}{\mathcal{A}_2}\\
		&+ \sum_{(i,j)| i \in \mathcal{A}_1, j \in \mathcal{A}_2, b_i \ne s^*_j \lor s^*_i \ne b_j} d(i, j)
		+ \sum_{(i,j) \in \mathcal{Z}} d(i, j)
	\end{align*}

	Combine the inequality above with Inequality~\ref{pay-agents-lemma-br-phi-single-eq2}, then we have proved the bound for $\tPhi(s^0)$, as desired:

	$$\sum_{i \in \mathcal{A}} \tc_i(b^i) + \sum_{(i, j) \in \mathcal{Z}} d(i, j) \ge \tPhi(s^0)$$

	We get the bound for $\tc(s^0)$ similarly. By the definition of $\tc(s^0)$ and Inequality~\ref{pay-agents-lemma-br-phi-single-eq3}:
	\begin{align*}
	\tc(s^0) &= \CC{s^0}{\mathcal{A}} + 2 \DC{s^0}{\mathcal{A}_1}{\mathcal{A}_1} + 2 \DC{s^0}{\mathcal{A}_2}{\mathcal{A}_2} + 2 \DC{s^0}{\mathcal{A}_1}{\mathcal{A}_2}\\
	&\le \CC{s^0}{\mathcal{A}} + 2\DC{s^0}{\mathcal{A}_1}{\mathcal{A}_1} + 2\DC{s^0}{\mathcal{A}_2}{\mathcal{A}_2}\\
	&+ 2\sum_{(i,j)| i \in \mathcal{A}_1, j \in \mathcal{A}_2, b_i \ne s^*_j \lor s^*_i \ne b_j} d(i, j)
	+ 2 \sum_{(i,j) \in \mathcal{Z}} d(i, j)
	\end{align*}

	Combine the inequality above with Inequality~\ref{pay-agents-lemma-br-phi-single-eq2}, we have proved the bound for $\tc(s^0)$:

	$$2 \sum_{i \in \mathcal{A}} \tc_i(b^i) + 2 \sum_{(i, j) \in \mathcal{Z}} d(i, j) \ge \tc(s^0).$$
  \end{proof}

Using the previous lemmas, we are now ready to prove Theorem~\ref{pay-agents-thm-single}.

\begin{proof}
	By Lemma~\ref{pay-agents-lemma-s-single}, to prove Theorem~\ref{pay-agents-thm-single}, we only need to prove there exists $s$, such that:

	\begin{equation}
		\label{pay-agents-thm-single-eq-goal}
	\sum_{i \in \mathcal{A}} \tc_i(b^i)
	\ge  \frac{4}{5} \tPhi(s).
	\end{equation}

	We prove the conclusion by contradiction. Suppose there does not exist any state $s$ to make Inequality \ref{pay-agents-thm-single-eq-goal} hold.

	By our assumption, Inequality \ref{pay-agents-thm-single-eq-goal} does not hold for $s^1$:

	\begin{align}
	\sum_{i \in \mathcal{A}} \tc_i(b^i)
	&< \frac{4}{5} \tPhi(s^1) \nonumber \\
	\sum_{i \in \mathcal{A}_1} \tc_i(b^i) + \sum_{i \in \mathcal{A}_2} \tc_i(b^i)
	&< \frac{4}{5} (\CC{s^1}{\mathcal{A}} + \DC{s^1}{\mathcal{A}}{\mathcal{A}}) \nonumber \\
	\sum_{i \in \mathcal{A}_1} \tc_i(b^i) + \sum_{i \in \mathcal{A}_2} \tc_i(b^i)
	&< \frac{4}{5} (\CC{s^1}{\mathcal{A}_1} +  \CC{s^1}{\mathcal{A}_2} \\
		&+ \DC{s^1}{\mathcal{A}_1}{\mathcal{A}_1} + \DC{s^1}{\mathcal{A}_2}{\mathcal{A}_2} + \DC{s^1}{\mathcal{A}_1}{\mathcal{A}_2} )
	\label{pay-agents-thm-single-eq2}
	\end{align}

  By Lemma \ref{pay-agents-lemma-a1-single}:
	$$\frac{4}{5} \sum_{i \in \mathcal{A}_1} \tc_i(b^i) \ge \frac{4}{5}(\CC{s^1}{\mathcal{A}_1} + \DC{s^1}{\mathcal{A}_1}{\mathcal{A}_1} + \DC{s^1}{\mathcal{A}_1}{\mathcal{A}_2})$$

	Subtract both sides from Inequality \ref{pay-agents-thm-single-eq2}, we get:
	\begin{equation}
		\label{pay-agents-thm-single-eq3}
		\frac{1}{5} \sum_{i \in \mathcal{A}_1} \tc_i(b^i) + \sum_{i \in \mathcal{A}_2} \tc_i(b^i)
		< \frac{4}{5}(\CC{s^1}{\mathcal{A}_2} + \DC{s^1}{\mathcal{A}_2}{\mathcal{A}_2})
	\end{equation}

	In assignment $s^1$, for any agent $i \in \mathcal{A}_2$, it stays at the same facility as in $s^*$, so the connection cost to these agents are the same as in $s^*$. Also, for a pair of agents $(i, j)$ both in $\mathcal{A}_2$, because they both stay at the facility in $s^*$, the disconnection cost between is also the same as in $s^*$. So we can rewrite Inequality \ref{pay-agents-thm-single-eq3} as:

	\begin{equation}
		\label{pay-agents-thm-single-eq4}
		\frac{1}{5} \sum_{i \in \mathcal{A}_1} \tc_i(b^i) + \sum_{i \in \mathcal{A}_2} \tc_i(b^i)
		< \frac{4}{5}(\CC{s^*}{\mathcal{A}_2} + \DC{s^*}{\mathcal{A}_2}{\mathcal{A}_2})
	\end{equation}

	Similarly, by the assumption that Inequality \ref{pay-agents-thm-single-eq-goal} does not hold for $s^2$, we can get:
	\begin{equation}
		\label{pay-agents-thm-single-eq5}
		\frac{1}{5} \sum_{i \in \mathcal{A}_2} \tc_i(b^i) + \sum_{i \in \mathcal{A}_1} \tc_i(b^i)
		< \frac{4}{5}(\CC{s^*}{\mathcal{A}_1} + \DC{s^*}{\mathcal{A}_1}{\mathcal{A}_1})
	\end{equation}

	Sum up Inequality \ref{pay-agents-thm-single-eq4} and \ref{pay-agents-thm-single-eq5}:
	\begin{align}
		\frac{6}{5} \sum_{i \in \mathcal{A}} \tc_i(b^i)	&< \frac{4}{5}(\CC{s^*}{\mathcal{A}_1} + \DC{s^*}{\mathcal{A}_1}{\mathcal{A}_1} + \CC{s^*}{\mathcal{A}_2} + \DC{s^*}{\mathcal{A}_2}{\mathcal{A}_2}) \nonumber \\
		\frac{3}{2} \sum_{i \in \mathcal{A}} \tc_i(b^i)	&< \CC{s^*}{\mathcal{A}} + \DC{s^*}{\mathcal{A}_1}{\mathcal{A}_1} + \DC{s^*}{\mathcal{A}_2}{\mathcal{A}_2} \label{pay-agents-thm-single-eq6}
	\end{align}

	By Lemma ~\ref{pay-agents-lemma-br-phi-single}:

	$$\sum_{i \in \mathcal{A}} \tc_i(b^i) + \sum_{(i, j) \in \mathcal{Z}} \dc(i, j) \ge \tPhi(s^0)$$

	And we assume there is no assignment $s$ that makes Inequality~\ref{pay-agents-thm-single-eq-goal} hold, so:

	$$\sum_{i \in \mathcal{A}} \tc_i(b^i) <  \frac{4}{5} \tPhi(s^0) $$

	Then combine the two inequalities above, we get $\sum_{(i, j) \in \mathcal{Z}} \dc(i, j) > \frac{1}{4} \sum_{i \in \mathcal{A}} \tc_i(b^i)$.

	We add $\frac{1}{2} \sum_{i \in \mathcal{A}} \tc_i(b^i)$ and $2\sum_{(i, j) \in \mathcal{Z}} \dc(i, j)$ to the left and right hand sides of Inequality~\ref{pay-agents-thm-single-eq6}:

	$$2 \sum_{i \in \mathcal{A}} \tc_i(b^i)
	< \CC{s^*}{\mathcal{A}} + \DC{s^*}{\mathcal{A}_1}{\mathcal{A}_1} + \DC{s^*}{\mathcal{A}_2}{\mathcal{A}_2}
	+ 2\sum_{(i, j) \in \mathcal{Z}} \dc(i, j)$$

	Remember from Inequality~\ref{pay-agents-thm-single-eqZ}, we know $\DC{s^*}{\mathcal{A}_1}{\mathcal{A}_2} \ge \sum_{(i,j) \in \mathcal{Z}} \dc(i, j)$, so:

	\begin{align*}
	2 \sum_{i \in \mathcal{A}} \tc_i(b^i)
	&< \CC{s^*}{\mathcal{A}} + \DC{s^*}{\mathcal{A}_1}{\mathcal{A}_1} + \DC{s^*}{\mathcal{A}_2}{\mathcal{A}_2}
	+ 2\DC{s^*}{\mathcal{A}_1}{\mathcal{A}_2}\\
	&\le \CC{s^*}{\mathcal{A}} + 2\DC{s^*}{\mathcal{A}_1}{\mathcal{A}_1} + 2\DC{s^*}{\mathcal{A}_2}{\mathcal{A}_2}
	+ 2\DC{s^*}{\mathcal{A}_1}{\mathcal{A}_2}\\
	&= \tc(s^*).
	\end{align*}

	It is easy to see if $2 \sum_{i \in \mathcal{A}} \tc_i(b^i)	\ge \tc(s^1)$ or $2 \sum_{i \in \mathcal{A}} \tc_i(b^i)	\ge \tc(s^2)$, then there exists an assignment better than $s^*$. We will get more conditions from this contradiction, and then finish our proof.

	If $2 \sum_{i \in \mathcal{A}} \tc_i(b^i)	\ge \tc(s^1)$, by above we get $\tc(s^1) < \tc(s^*)$. And every facility that is open in $s^1$ must also be open in $s^*$: this is because in $s^1$, every agent $i$ in $\mathcal{A}_1$ switches its strategy to $b_i$, and $b_i$ is open in $s^*$ by definition. So the total facility cost in $s^1$ is no larger than that in $s^*$, i.e., $\rc(s^1) < \rc(s^*)$, which contradicts the fact that $s^*$ is the optimal assignment. So we know the assumption that $2 \sum_{i \in \mathcal{A}} \tc_i(b^i)	\ge \tc(s^1)$ must be false, so it must be $2 \sum_{i \in \mathcal{A}} \tc_i(b^i) < \tc(s^1)$:

	\begin{align}
	2 \sum_{i \in \mathcal{A}} \tc_i(b^i)	&< \tc(s^1) \nonumber \\
	2 \sum_{i \in \mathcal{A}_1} \tc_i(b^i) +  2 \sum_{i \in \mathcal{A}_2} \tc_i(b^i)
	&< \CC{s^1}{\mathcal{A}_1} + \CC{s^1}{\mathcal{A}_2} \nonumber \\
	&+ 2 \DC{s^1}{\mathcal{A}_1}{\mathcal{A}_1} + 2 \DC{s^1}{\mathcal{A}_2}{\mathcal{A}_2} + 2 \DC{s^1}{\mathcal{A}_1}{\mathcal{A}_2} \label{pay-agents-thm-single-eq7}
	\end{align}

	By Lemma~\ref{pay-agents-lemma-a1-single}:

	\begin{align}
	\sum_{i \in \mathcal{A}_1} \tc_i(b^i)
	&\ge \CC{s^1}{\mathcal{A}_1} + \DC{s^1}{\mathcal{A}_1}{\mathcal{A}_1} + \DC{s^1}{\mathcal{A}_1}{\mathcal{A}_2} \nonumber \\
	2\sum_{i \in \mathcal{A}_1} \tc_i(b^i)
	&\ge \CC{s^1}{\mathcal{A}_1} + 2\DC{s^1}{\mathcal{A}_1}{\mathcal{A}_1} + 2\DC{s^1}{\mathcal{A}_1}{\mathcal{A}_2} \label{pay-agents-thm-single-eq8}
	\end{align}

	Combine Inequality \ref{pay-agents-thm-single-eq7} and \ref{pay-agents-thm-single-eq8}:
	$$2 \sum_{i \in \mathcal{A}_2} \tc_i(b^i)
	< \CC{s^1}{\mathcal{A}_2} + 2\DC{s^1}{\mathcal{A}_2}{\mathcal{A}_2} $$

	In assignment $s^1$, the agents in $\mathcal{A}_2$ stay at their facility in $s^*$, so the connection cost to these agents are the same as in $s^*$. Also, for two agents $(i, j)$ both in $\mathcal{A}_2$, the disconnection cost between them is also the same as in $s^*$. So the inequality above is equivalent to:

	$$2 \sum_{i \in \mathcal{A}_2} \tc_i(b^i)
	< \CC{s^*}{\mathcal{A}_2} + 2\DC{s^*}{\mathcal{A}_2}{\mathcal{A}_2} $$

	Similarly, if $2 \sum_{i \in \mathcal{A}} \tc_i(b^i)	\ge \tc(s^2)$, we can get $\tc(s^2) < \tc(s^*)$, by the same argument as above, we get:
	$$2 \sum_{i \in \mathcal{A}_1} \tc_i(b^i)
	< \CC{s^*}{\mathcal{A}_1} + 2\DC{s^*}{\mathcal{A}_1}{\mathcal{A}_1} $$

	Summing them up:
	\begin{align*}
		2 \sum_{i \in \mathcal{A}_2} \tc_i(b^i) + 2 \sum_{i \in \mathcal{A}_2} \tc_i(b^i)
	&< \CC{s^*}{\mathcal{A}_2} + 2\DC{s^*}{\mathcal{A}_2}{\mathcal{A}_2} + \CC{s^*}{\mathcal{A}_1} + 2\DC{s^*}{\mathcal{A}_1}{\mathcal{A}_1} \\
	2 \sum_{i \in \mathcal{A}} \tc_i(b^i)
	&< \CC{s^*}{\mathcal{A}} + 2\DC{s^*}{\mathcal{A}_1}{\mathcal{A}_1} + 2\DC{s^*}{\mathcal{A}_2}{\mathcal{A}_2}
	\end{align*}

	Add $2\DC{s^*}{\mathcal{A}_1}{\mathcal{A}_2}$ to both sides:

	\begin{equation*}
	2 \sum_{i \in \mathcal{A}} \tc_i(b^i) + 2\DC{s^*}{\mathcal{A}_1}{\mathcal{A}_2} < \tc(s^*)
	\end{equation*}

	Finally, combine the inequality above with Lemma~\ref{pay-agents-lemma-br-phi-single} and Inequality~\ref{pay-agents-thm-single-eqZ}:

	$$\tc(s^0) \le 2 \sum_{i \in \mathcal{A}} \tc_i(b^i) + 2 \sum_{(i, j) \in \mathcal{Z}} \dc(i, j)
	\le 2 \sum_{i \in \mathcal{A}} \tc_i(b^i) + 2 \DC{s^*}{\mathcal{A}_1}{\mathcal{A}_2}
	< \tc(s^*)$$

	Remember $s^0$ is the assignment such that every agent $i$ in $\mathcal{A}_1$ switches its strategy to $b_i$, and no agent switches to a closed facility in $s^*$ by the definition of $b_i$. So the total facility cost in $s^0$ is no more than that in $s^*$. Thus we get $\rc(s^0) < \rc(s^*)$, which contradicts the fact that $s^*$ is the optimal assignment. Thus, the assumption that no assignment $s^0, s^1, s^2$ satisfies Inequality~\ref{pay-agents-thm-single-eq-goal} is false, and this proves the theorem.
\end{proof}

\section{Multiple Facilities per Agent}
\label{sec-multi}
In this section, we consider the case that each agent is allowed to use multiple facilities. Most of our results still hold in this setting, but with a constraint on possible deviations: when agents switch their strategies, they are only allowed to drop from at most one facility each time, although they can join as many new facilities as they want to. Most of our notation and proofs are similar to the single facility setting; we include them here for completeness.

\subsection{Model in the Multiple Facilities Setting}
\label{sec-model-multi}
We are given a set of $m$ facilities $\mathcal{F} = \{f_1, f_2, \dots, f_m\}$ and a set of $n$ agents $\mathcal{A} = \{1, 2, \dots, n\}$. An agent $i$ can use any facility $f_k$ by paying a connection cost $\cc(i, f_k)$. A pair of agents $(i, j)$ can form connections through facility $f_k$ if they are both using $f_k$. However, if $i$ and $j$ do not share any facility that they are using, then both of them are charged a disconnection cost $\dc(i, j)$. We say a facility $f_k$ is open if and only if there exists an agent using it. There is a fixed facility cost of $c(f_k) \ge 0$ for any open facility $f_k$.

 A facility assignment $s = \{s_1, s_2, \dots, s_n\}$ assigns the set of facilities that each agent uses: $s_i$ denotes the set of facilities that agent $i$ uses in assignment $s$. In the case that agent $i$ does not use any facility, define $s_i = \emptyset$ and $\cc(i, s_i) = 0$. A pricing strategy $\gamma = \{\gamma_1, \gamma_2, \dots, \gamma_n\}$ assigns the price for using each facility $f_k$ to every agent $i$.  $\gamma_i(f_k)$ is a non-negative number that denotes agent $i$'s share of the facility cost for using $f_k$. $(s, \gamma)$ is a state with assignment $s$ and pricing strategy $\gamma$. Note that agent $i$ only pays its share of the facility cost to $f_k$ if $i$ uses $f_k$, i.e., $\gamma_i(f_k) > 0$ only if $f_k \in s_i$. \\

The total cost of agent $i$ in a state $(s, \gamma)$ is the sum of the following three parts:

\begin{enumerate}
	\item If $i$ uses facility $f_k$, then there is a connection cost $\cc(i, f_k)$ to $i$.
	\item For each agent $j$ that do not share any facility with agent $i$, i.e., $s_i \cap s_j = \emptyset$, there is a disconnection cost $\dc(i, j)$ to both $i$ and $j$.
	\item If $i$ uses facility $f_k$, then there is a facility cost $\gamma_i(f_k)$ to $i$.
\end{enumerate}

We denote the total cost of agent $i$ as $\rc_i(s, \gamma)$. Sum up the three types of cost mentioned above:

$$\rc_i(s, \gamma) = \sum_{f_k \in s_i} (\cc(i, f_k) + \gamma_i(f_k)) + \sum_{j | s_i \cap s_j= \emptyset} \dc(i, j)$$

We denote the cost of agent $i$ without facility cost as $\tc_i(s)$:

$$\tc_i(s) = \sum_{f_k \in s_i} \cc(i, f_k) + \sum_{j | s_i \cap s_j= \emptyset} \dc(i, j)$$

The total social cost of a state $(s, \gamma)$ equals the sum of $\rc_i(s, \gamma)$, plus the total cost of all open facilities. For each facility $f_k$, the cost is $c(f_k)$ minus the sum of $\gamma_i(f_k)$ of each agent using $f_k$, i.e., $c(f_k) - \sum_{i|s_i = f_k} \gamma_i(f_k)$. The sum of $\gamma_i(f_k)$ cancels out, and the total social cost is actually the sum of $\tc_i(s)$ plus the sum of $c(f_k)$ of open facilities:

\begin{align*}
	c(s) &= \sum_{f_k \in \mathcal{F}, f_k \text{ is open}} c(f_k) +
					\sum_{i \in \mathcal{A}} ( \sum_{f_k \in s_i} \cc(i, f_k) + \sum_{j | s_i \cap s_j= \emptyset} \dc(i, j))\\
			 &= \sum_{f_k \in \mathcal{F}, f_k \text{ is open}} c(f_k)
					+ \sum_{i \in \mathcal{A}} \sum_{f_k \in s_i} \cc(i, f_k) + 2\sum_{(i, j) | s_i \cap s_j= \emptyset} \dc(i, j)
	\end{align*}

We consider $(i, j)$ as an unordered pair, therefore in $\sum_{i \in \mathcal{A}} \sum_{j | s_i \cap s_j= \emptyset} \dc(i, j)$, each unordered pair $(i, j)$ that $s_i \cap s_j= \emptyset$ is counted twice.

In this paper, we study the game where each agent's goal is to minimize its total social cost, and the central coordinator's goal is to find a budget balanced and stable state $(s, \gamma)$ that (approximately) minimizes the total social cost. A state is budget balanced if each facility $f_k$ is fully paid with the facility cost $c(f_k)$, formally:

\begin{definition}
	A state $(s, \gamma)$ is \textbf{budget balanced} if for each facility $f_k$, $\sum_{i | f_k \in s_i} \gamma_i(f_k) = c(f_k)$.
\end{definition}

In the multiple facilities setting, when agents deviate from their current strategies, they are only allowed to drop from at most one facility each time, formally:

\begin{definition}
	\label{def-valid-dev-multi}
	Given a state $(s, \gamma)$, $s'_i$ is agent $i$'s valid deviation if $|s_i\backslash s'_i| \le 1$.
\end{definition}

Before defining the stability of a state, we first define an agent's best response and stability. In this paper, an agent assumes it will be charged 0 facility cost for joining a new facility; see Section \ref{sec-model} for why this actually makes our results stronger than any other assumption, since if a stable solution exists with this assumption, then it will still be stable if agents assume they would have to pay a different cost. In other words, if an agent is stable when assuming it will be charged 0 for joining some facilities, then it would also be stable with a higher cost.

\begin{definition}
	Given a state $(s, \gamma)$, $s_i'$ is agent $i$'s \textbf{best response} if for any valid deviation $s_i''$,

	$$\tc_i(s_i', s_{-i}) + \sum_{f_k \in s_i \cap s_i'} \gamma_i(f_k) \le \tc_i(s_i'', s_{-i}) + \sum_{f_k \in s_i \cap s_i''} \gamma_i(f_k)$$

	We denote $i$'s best response at state $(s, \gamma)$ as $\BR_i(s, \gamma)$.
\end{definition}

Then we define the stability of an agent:
\begin{definition}
	Agent $i$ is stable at state $(s, \gamma)$ if for any valid deviation $s_i'$:
	$$\rc_i(s, \gamma) \le \tc_i(s_i', s_{-i}) + \sum_{f_k \in s_i \cap s_i'} \gamma_i(f_k)$$
\end{definition}

In other words, agent $i$ is stable at $(s, \gamma)$ if $s_i$ is $i$'s best response at $(s, \gamma)$.

We define a state $(s, \gamma)$ to be stable if it is budget balanced, and every agent is stable:
\begin{definition}
	A state $(s, \gamma)$ is \textbf{stable} if it is budget balanced, and for each agent $i$, for any valid deviation $s_i'$:
	$$\rc_i(s, \gamma) \le \tc_i(s_i', s_{-i}) + \sum_{f_k \in s_i \cap s_i'} \gamma_i(f_k)$$
\end{definition}

\subsection{Pricing Strategies and Stability}
\label{sec-pricing-stable-multi}
Remember a stable state $(s, \gamma)$ is budget balanced, and every agent $i$ is stable. Suppose there is no other constraint on the pricing strategy, then we ask the following question in order to find a budget balanced state: in an assignment $s$, how much facility cost can we charge an agent while keeping it stable? Consider an agent $i$ such that $f_k \in s_i$. To see how much we can charge $i$ without causing it to deviate away from $f_k$, we first define a special type of best response: with an assignment $s$, let $\nBR_i(s, f_k)$ denote $i$'s best response with regard to $\tc_i(s)$, given $i$ is forced to stop using $f_k$ (and forbidden to join $f_k$ again). If $i$ does not use any facility in $s$, $\nBR_i(s, \emptyset)$ is just $i$'s best response with regard to $\tc_i(s)$. $\nBR_i(s, f_k)$ is agent $i$'s ``next best choice'' if $i$ is forced to stop using $f_k$. Intuitively, the ``value'' of facility $f_k$ to agent $i$ is how much $i$'s cost would increase if $i$ is forced to leave $f_k$ and join the next best choice  $\nBR_i(s, f_k)$. If there are multiple strategies that all satisfy the definition of $\nBR_i(s, f_k)$, we choose an arbitrary one except in one case: we never choose a facility set that contains a closed facility in $s$ as $\nBR_i(s, f_k)$. We can always do this because if there exists such strategy $s_i'$, that $s_i'$ contains a closed facility $f_k'$ in $s$, then compare $\tc_i(s_i'-\{f_k'\}, s_{-i})$ with $\tc_i(s_i', s_{-i})$. The connection cost in $\tc_i(s_i'-\{f_k'\}, s_{-i})$ is no more than that in $\tc_i(s_i', s_{-i})$, and the disconnection cost is no less than that in $\tc_i(s_i', s_{-i})$, because $i$ would be the only agent using $f_k'$ in $s_i'$. So it must be $\tc_i(s_i'-\{f_k'\}, s_{-i}) \le \tc_i(s_i', s_{-i})$, then we can remove all facilities that are open in $s_i'$ but closed in $s$ to reach an assignment $\hat{s}_i$ such that $\tc_i(\hat{s}_i, s_{-i}) \le \tc_i(s_i', s_{-i})$. We define $\nBR_i(s) = \hat{s}_i$ in this case.

Note that some agents might be unstable even with 0 facility cost. Thus, we also consider the case that agents need to receive payments to be stable. If agent $i$ uses facility $f_k$ in $s$, let $\Delta_i(f_k)$ denote the payment that agent $i$ receives if it does not deviate away from $f_k$ at state $(s, \gamma)$, and denote the total payments as $\Delta = \sum_i \sum_{f_k \in s_i} \Delta_i (f_k)$. Then we define the stability with payments as follows: a state $(s, \gamma)$ with payments $\Delta$ is stable if it is budget balanced, and for each agent $i$, for any valid deviation $s_i'$ of agent $i$:

$$\rc_i(s, \gamma) - \sum_{f_k \in s_i} \Delta_i(f_k) \le \tc_i(s_i', s_{-i}) + \sum_{f_k \in s_i \cap s_i'} \gamma_i(f_k) - \sum_{f_k \in s_i \cap s_i'} \Delta_i(f_k)$$

For every agent $i$ using facility $f_k$ in assignment $s$, define $\Q_i(s, f_k) = \tc_i(\nBR_i(s, f_k), s_{-i}) - \tc_i(s)$. We will show a condition that guarantees agent $i$ is stable in the following lemma.

\begin{lemma}
	\label{lemma-Q-multi}
Given any state $(s, \gamma)$, agent $i$ is stable with payments if it satisfies the following two conditions:

\begin{enumerate}

\item For any $s_i'$ such that $s_i \subseteq s_i'$ (when deviating to $s_i'$, $i$ does not leave any facility in $s_i$), $\tc_i(s) \le \tc_i(s_i', s_{-i})$.

\item $\forall f_k \in s_i$, $\gamma_i(f_k) - \Delta_i(f_k) \le \Q_i(s, f_k)$.

\end{enumerate}

\end{lemma}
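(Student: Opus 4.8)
The plan is to start from the definition of stability with payments in the multiple–facilities setting and reduce it, by a short algebraic simplification, to a single per-deviation inequality, after which the two cases permitted by a \emph{valid deviation} can be dispatched directly using the two hypotheses. First I would substitute $\rc_i(s,\gamma) = \sum_{f_k \in s_i}(\cc(i,f_k)+\gamma_i(f_k)) + \sum_{j\,|\,s_i\cap s_j=\emptyset}\dc(i,j) = \tc_i(s) + \sum_{f_k \in s_i}\gamma_i(f_k)$ into the stability condition
$$\rc_i(s,\gamma) - \sum_{f_k \in s_i}\Delta_i(f_k) \le \tc_i(s_i', s_{-i}) + \sum_{f_k \in s_i \cap s_i'}\gamma_i(f_k) - \sum_{f_k \in s_i \cap s_i'}\Delta_i(f_k),$$
and then cancel the quantity $\gamma_i(f_k)-\Delta_i(f_k)$ for every facility $f_k \in s_i\cap s_i'$, which occurs on both sides. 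Since $s_i\cap s_i' \subseteq s_i$, the surviving terms on the left range over $f_k \in s_i\setminus s_i'$, so it suffices to prove, for every valid deviation $s_i'$,
$$\tc_i(s) + \sum_{f_k \in s_i \setminus s_i'}\bigl(\gamma_i(f_k) - \Delta_i(f_k)\bigr) \le \tc_i(s_i', s_{-i}).$$

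Next I would invoke the valid–deviation constraint $|s_i \setminus s_i'| \le 1$ (Definition~\ref{def-valid-dev-multi}) to split into two cases. In the case $s_i \setminus s_i' = \emptyset$, i.e.\ $s_i \subseteq s_i'$, the remaining sum is empty and the target inequality is precisely Condition~1. In the case $s_i \setminus s_i' = \{f_k\}$ for a single $f_k \in s_i$, the deviation $s_i'$ stops using $f_k$ (and never rejoins it) while keeping every other facility of $s_i$, so it lies in exactly the family over which $\nBR_i(s,f_k)$ is defined to minimize $\tc_i(\cdot,s_{-i})$; hence $\tc_i(\nBR_i(s,f_k),s_{-i}) \le \tc_i(s_i',s_{-i})$. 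Combining this with Condition~2 and the definition $\Q_i(s,f_k)=\tc_i(\nBR_i(s,f_k),s_{-i})-\tc_i(s)$ yields
$$\tc_i(s) + \bigl(\gamma_i(f_k)-\Delta_i(f_k)\bigr) \le \tc_i(s) + \Q_i(s,f_k) = \tc_i(\nBR_i(s,f_k),s_{-i}) \le \tc_i(s_i',s_{-i}),$$
which is the required inequality. Thus every valid deviation satisfies the reduced inequality, and $i$ is stable.

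The routine part is the cancellation of the shared $\gamma_i(f_k)-\Delta_i(f_k)$ terms; the one step needing care is the bound $\tc_i(\nBR_i(s,f_k),s_{-i}) \le \tc_i(s_i',s_{-i})$ in the second case. I would make explicit that a valid deviation dropping exactly $f_k$ keeps all of $s_i\setminus\{f_k\}$ and may add new facilities, and that this is the same strategy set over which $\nBR_i(s,f_k)$ minimizes $\tc_i$, so the inequality is immediate from optimality of $\nBR_i(s,f_k)$. I would also note that the convention excluding closed facilities from $\nBR_i(s,f_k)$ does not weaken this bound, since removing a facility that is closed in $s$ only lowers $\tc_i$ (as argued when $\nBR_i(s,f_k)$ was defined). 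Everything else then follows mechanically from Conditions~1 and~2, with the crux being the clean separation between the canceled terms for facilities $i$ keeps and the single surviving term for the facility $i$ drops.
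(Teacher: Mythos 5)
Your proposal is correct and follows essentially the same route as the paper's proof: both handle superset deviations ($s_i \subseteq s_i'$) directly from Condition~1 and handle deviations dropping a single facility $f_k$ by combining Condition~2 with the definition of $\Q_i(s,f_k)$ and the optimality of $\nBR_i(s,f_k)$ over valid deviations excluding $f_k$. Your reorganization --- cancelling the shared $\gamma_i(f_k)-\Delta_i(f_k)$ terms up front to reduce stability to a single per-deviation inequality --- is only a cosmetic streamlining of the same algebra the paper carries out inline.
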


\begin{proof}

	First consider the first type of deviation $s_i'$, in which agent $i$ does not leave any facility, but might join some new ones. In other words, the set $\{f_k | f_k \in s_i\}$ is equivalent to $\{f_k | f_k \in s_i \cap s_i'\}$. With the given condition:

	\begin{align*}
	\tc_i(s, \gamma) &\le \tc_i(s_i', s_{-i})\\
	\tc_i(s, \gamma) + \sum_{f_k \in s_i} \gamma_i(f_k) - \sum_{f_k \in s_i} \Delta_i(f_k)
	&\le \tc_i(s_i', s_{-i}) + \sum_{f_k \in s_i \cap s_i'} \gamma_i(f_k) - \sum_{f_k \in s_i \cap s_i'} \Delta_i(f_k) \\
	\rc_i(s, \gamma) - \sum_{f_k \in s_i} \Delta_i(f_k)
	&\le \tc_i(s_i', s_{-i}) + \sum_{f_k \in s_i \cap s_i'} \gamma_i(f_k) - \sum_{f_k \in s_i \cap s_i'} \Delta_i(f_k) \\
  \end{align*}

	Thus, agent $i$ would not deviate to $s_i'$ such that $s_i \subseteq s_i'$.

	Then consider the type of deviation that agent $i$ deviates to $s_i'$ such that $s_i \nsubseteq s_i'$. Because agent $i$ can only leave one facility in each deviation, we will consider the deviation includes leaving each facility separately. For any facility $f_k \in s_i$, with $\gamma_i(f_k) - \Delta_i(f_k) \le \Q_i(s, f_k)$, by the definition of agent $i$'s total cost:
	\begin{align*}
		&\rc_i(s, \gamma) - \sum_{f_k' \in s_i} \Delta_i(f_k')\\
		&= \tc_i(s) + \sum_{f_k' \in s_i} \gamma_i(f_k') - \sum_{f_k' \in s_i} \Delta_i(f_k')\\
		&= \tc_i(s) + \sum_{f_k' | f_k' \in s_i, f_k' \ne f_k} \gamma_i(f_k') + \gamma_i(f_k) - \sum_{f_k' | f_k' \in s_i, f_k' \ne f_k} \Delta_i(f_k') - \Delta_i(f_k)\\
		&\le \tc_i(s) + \Q_i(s, f_k) + \sum_{f_k' |f_k' \in s_i, f_k' \ne f_k} \gamma_i(f_k') - \sum_{f_k' | f_k' \in s_i, f_k' \ne f_k} \Delta_i(f_k') \\
		&= \tc_i(s) + \tc_i(\nBR_i(s, f_k), s_{-i}) - \tc_i(s) + \sum_{f_k' |f_k' \in s_i, f_k' \ne f_k} \gamma_i(f_k') - \sum_{f_k' | f_k' \in s_i, f_k' \ne f_k} \Delta_i(f_k') \\
		&= \tc_i(\nBR_i(s, f_k), s_{-i}) + \sum_{f_k' |f_k' \in s_i, f_k' \ne f_k} \gamma_i(f_k') - \sum_{f_k' | f_k' \in s_i, f_k' \ne f_k} \Delta_i(f_k')
		\end{align*}

	By the definition of $\nBR_i(s, f_k)$, for any valid deviation $s_i'$ with $f_k \notin s_i'$:
	$$\tc_i(\nBR_i(s, f_k), s_{-i}) \le \tc_i(s_i', s_{-i})$$

	Also, because $f_k \notin s_i'$ and $i$ can leave at most one facility at a time, so $\{f_k' | f_k' \in s_i, f_k' \ne f_k\} = \{f_k' | f_k' \in s_i \cap s_i'\}$. Therefore, $i$ would not deviate to any $s_i'$ that $i$ leaves at most one facility compared to $s_i$. So every $i$ is stable to any valid deviation.
\end{proof}

In this paper, the default setting is that agents do not receive payments ($\Delta_i = 0$), but we do consider the cases that agents are allowed to be paid by their neighbors in Section~\ref{sec-paid-peering-multi}.

\subsection{Facility cost $c(f_k) = 0$ for every $f_k$}
In this section, we consider the case that there is no facility cost, i.e. $\forall k, c(f_k) = 0$. Set the pricing strategy to be $\gamma_i(f_k) = 0$ for any agent $i$ and facility $f_k$, then $\gamma$ is budget balanced. In this setting, for any agent $i$ in assignment $s$, we have $\rc_i(s, \gamma) = \tc_i(s)$. A state $s$ is stable if for each agent $i$ and assignment $s_i'$, $\tc_i(s) \le \tc_i(s_i', s_{-i})$.

Define potential function $\tPhi(s)$ as:

$$\tPhi(s) = \sum_{i \in \mathcal{A}} \sum_{f_k \in s_i} \cc(i, f_k) + \sum_{(i,j) | s_i \cap s_j= \emptyset} \dc(i, j)$$

When an agent $i$ switches its strategy from $s_i$ to $s_i'$, it is easy to see that the change of $i$'s cost is captured exactly by the change of $\tPhi(s)$, so $\tPhi(s)$ is an exactly potential function.

\begin{theorem}
\label{thm-14-multi}
	If $\forall k, c(f_k) = 0$, then PoS is at most 2 and this bound is tight.
\end{theorem}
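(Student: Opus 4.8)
The plan is to mirror the single-facility argument of Theorem~\ref{thm-14-single}, now using the potential function $\tPhi(s)$ just defined for the multiple-facility setting. First I would confirm that $\tPhi$ is an exact potential function even for arbitrary deviations, not merely valid ones: when only agent $i$ changes its facility set, the terms of $\tPhi$ involving solely other agents are unchanged, the connection-cost term changes by exactly $i$'s change in connection cost, and since each pair $(i,j)$ is counted once in $\tPhi$, the disconnection-cost term changes by exactly $i$'s change in disconnection cost. As $\gamma\equiv 0$ here we have $\rc_i=\tc_i$, so this total change equals $i$'s change in cost, giving $\tPhi(s_i',s_{-i})-\tPhi(s)=\tc_i(s_i',s_{-i})-\tc_i(s)$.

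Next, let $\hat{s}$ be an assignment minimizing $\tPhi$, which exists since each agent has only finitely many strategies (subsets of $\mathcal{F}$). Because $\tPhi$ is an exact potential, any deviation from $\hat{s}$ weakly increases $\tPhi$ and hence weakly increases the deviating agent's cost, so no agent can strictly improve by any deviation; in particular no agent has a beneficial \emph{valid} deviation, so $\hat{s}$ is stable. I would emphasize here that restricting to valid deviations only shrinks the set of allowed moves and therefore can only enlarge the set of stable states, so it poses no obstacle.

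Finally I would bound the social cost. Writing $\rc(s)=\sum_{i}\sum_{f_k\in s_i}\cc(i,f_k)+2\sum_{(i,j)\mid s_i\cap s_j=\emptyset}\dc(i,j)$ and comparing term by term with $\tPhi(s)$, nonnegativity of the connection and disconnection costs yields $\rc(s)\le 2\tPhi(s)$ and $\tPhi(s)\le \rc(s)$ for every $s$. Chaining these gives $\rc(\hat{s})\le 2\tPhi(\hat{s})\le 2\tPhi(s^*)\le 2\rc(s^*)$, where the middle step uses that $\hat{s}$ minimizes $\tPhi$, establishing $PoS\le 2$. For tightness I would reuse the two-agent, single-facility instance of Theorem~\ref{thm-14-single}: there each agent's only strategies are $\emptyset$ and $\{f_1\}$, so the multiple-facility game coincides with the single-facility one and the same instance forces $PoS\to 2$ as $\epsilon\to 0$.

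This argument is essentially routine once the single-facility version is in hand; the only point deserving a moment's care is confirming that the global $\tPhi$-minimizer remains stable under the more restrictive valid-deviation notion, but as noted this is immediate, so I do not anticipate a genuine obstacle.
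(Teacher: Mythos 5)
Your proposal is correct and follows essentially the same route as the paper's own proof: the same exact potential $\tPhi$, the same argument that its minimizer $\hat{s}$ is stable, the same chain $\rc(\hat{s})\le 2\tPhi(\hat{s})\le 2\tPhi(s^*)\le 2\rc(s^*)$, and the same two-agent lower-bound instance. Your explicit observation that stability against arbitrary deviations implies stability against the restricted valid deviations (and your use of a weak rather than strict inequality in the middle of the chain) are small points of care that the paper glosses over, but they do not constitute a different approach.
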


\begin{proof}
	The proof is almost the same as Theorem\ref{thm-14-single}, with the potential function $\tPhi(s)$ defined above. We include the proof of this theorem as well as other theorems in this section in the appendix for completeness, and only explain the difference between the single and multiple facilities setting in the main body of this paper.
\end{proof}

\begin{theorem}
\label{thm-14-poa-multi}
  PoA is unbounded in our setting.
\end{theorem}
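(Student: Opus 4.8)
The plan is to mirror the single-facility argument of Theorem~\ref{thm-14-poa-single}: exhibit a tiny instance possessing a stable state whose social cost is bounded away from zero while the optimal assignment has cost zero, so that the ratio $c(\hat{s})/c(s^*)$ is unbounded. Concretely, I would take a single facility $f_1$ with $c(f_1)=0$ and two agents with $\cc(1,f_1)=\cc(2,f_1)=0$ and $\dc(1,2)=1$. The optimum $s^*$ has both agents using $f_1$, incurring no connection, facility, or disconnection cost, so $c(s^*)=0$. The candidate bad equilibrium $\hat{s}$ is the assignment in which neither agent uses any facility, so $s_1=s_2=\emptyset$; since $s_1\cap s_2=\emptyset$ each agent pays the disconnection cost, giving $c(\hat{s})=2\dc(1,2)=2$.

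The only real content is checking that $\hat{s}$ is stable under the multiple-facilities deviation rule of Definition~\ref{def-valid-dev-multi}, where an agent may drop at most one facility but join arbitrarily many. Starting from $s_i=\emptyset$ every assignment is a valid deviation (nothing is dropped), so I must verify that none of them is improving. Any facility agent $1$ joins still leaves $s_1\cap s_2=\emptyset$, because agent $2$ uses nothing; hence agent $1$ remains disconnected and pays $\dc(1,2)=1$ plus a nonnegative connection cost, so its cost cannot fall below its current value of $1$. The same reasoning applies to agent $2$. Thus no valid deviation is improving, and $\hat{s}$ is (weakly) stable with the all-zero pricing $\gamma\equiv 0$, which is trivially budget balanced. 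Therefore the PoA ratio is $2/0$, i.e.\ unbounded.

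I expect essentially no obstacle here beyond being careful that the deviation model does not hand an agent a cheap way to reconnect: because eliminating the disconnection cost requires the \emph{other} agent to share a facility as well, a unilateral deviation can never remove it, which is exactly what forces the bad state to be stable. If a strictly positive optimum is preferred over the $c(s^*)=0$ formulation, I would instead scale a family of instances (for example, $n$ mutually disconnected agents on a single zero-cost facility, giving equilibrium cost $n(n-1)$) to make the absolute gap grow without bound; but since the optimum here is already $0$, the two-agent instance already yields an unbounded ratio, matching the single-facility case.
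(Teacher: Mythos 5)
Your proposal is correct and is essentially identical to the paper's proof: the paper simply observes that the two-agent, single zero-cost facility example from Theorem~\ref{thm-14-poa-single} carries over to the multiple-facilities setting, which is exactly the instance you construct. Your additional verification that no valid deviation under Definition~\ref{def-valid-dev-multi} can be improving (since a unilateral join cannot remove the disconnection cost) is the same observation the paper leaves implicit.
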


\begin{proof}
	 The example in Theorem~\ref{thm-14-poa-single} also works in the multiple facilities setting.
\end{proof}

\subsection{Non-zero Facility Costs: Price of Stability}
\label{sec-pos-multi}
Same as in Section \ref{sec-pos-single}, we consider the general case that the facility cost $c(f_k)$ is a fixed constant when $f_k$ is open. The only differences from Section \ref{sec-pos-single} is that each agent is allowed to use multiple facilities, and the valid deviation of an agent is defined in Definition~\ref{def-valid-dev-multi}.

We define potential function $\Phi(s)$:

\begin{align*}
	\Phi(s) &= \sum_{f_k | f_k \text{ is open}} c(f_k) + \sum_{i \in \mathcal{A}} \sum_{f_k \in s_i} \cc(i, f_k) + \sum_{(i,j) | s_i \cap s_j= \emptyset} \dc(i, j)\\
	&=  \sum_{f_k | f_k \text{ is open}} c(f_k) + \tPhi(s)
\end{align*}

We will show that $\Phi(s)$ is an ordinal potential function when any agent $i$ switches to a strategy that decreases its cost without the facility cost.

\begin{lemma}
	\label{lemma-phi-basic-multi}
	In an assignment $s$, if any agent $i$ switch its strategy to $s_i'$ such that $\tc_i(s_i', s_{-i}) < \tc_i(s)$ and $s_i'$ does not contain any closed facility in $s$, then $\Phi(s_i', s_{-i}) < \Phi(s)$.
\end{lemma}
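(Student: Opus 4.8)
The plan is to mirror the proof of the single-facility analogue, Lemma~\ref{lemma-phi-basic}, since the decomposition $\Phi(s) = \sum_{f_k \text{ open}} c(f_k) + \tPhi(s)$ holds verbatim in the multiple-facilities setting. The argument splits the change in $\Phi$ into the change of the facility-cost term and the change of $\tPhi$, and shows that the former is non-increasing while the latter strictly decreases, so their sum is strictly negative.

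First I would invoke the fact, established just before Theorem~\ref{thm-14-multi}, that $\tPhi$ is an \emph{exact} potential function in the multiple-facilities model. When a single agent $i$ changes its facility set from $s_i$ to $s_i'$, only the connection-cost terms of $i$ and the pairwise disconnection terms involving $i$ are affected, and these changes coincide exactly with the change in $\tc_i$. Hence $\tPhi(s_i', s_{-i}) - \tPhi(s) = \tc_i(s_i', s_{-i}) - \tc_i(s)$, which is strictly negative by the hypothesis $\tc_i(s_i', s_{-i}) < \tc_i(s)$; so $\tPhi$ strictly decreases.

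Next I would handle the facility-cost term. Since $s_i'$ contains no facility that is closed in $s$, every facility that $i$ newly joins (those in $s_i' \setminus s_i$) is already open in $s$, so joining them opens no new facility. Dropping facilities (those in $s_i \setminus s_i'$) can only ever \emph{close} a facility, namely when $i$ was its sole user, which removes a non-negative term $c(f_k) \ge 0$. Therefore the set of facilities open in $(s_i', s_{-i})$ is a subset of the set open in $s$, and the term $\sum_{f_k \text{ open}} c(f_k)$ is non-increasing under this deviation.

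Combining the two pieces, $\Phi(s_i', s_{-i}) - \Phi(s)$ equals the (non-positive) change in the facility-cost term plus the (strictly negative) change in $\tPhi$, which yields $\Phi(s_i', s_{-i}) < \Phi(s)$, as desired. I do not expect a genuine obstacle here; the only points requiring care are to confirm that the exact-potential property of $\tPhi$ survives the passage to set-valued strategies (covered by the single-agent argument above) and that joining already-open facilities never inflates the facility-cost term. Note that the valid-deviation restriction of Definition~\ref{def-valid-dev-multi} plays no role in this lemma, since the statement concerns an arbitrary single-agent strategy change satisfying the two stated hypotheses.
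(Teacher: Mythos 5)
Your proof is correct and takes essentially the same route as the paper: the paper's proof of Lemma~\ref{lemma-phi-basic-multi} simply states it is identical to that of Lemma~\ref{lemma-phi-basic}, which is exactly the argument you spell out (exact-potential property of $\tPhi$ gives a strict decrease, and the no-closed-facility hypothesis keeps the facility-cost term non-increasing). Your added observation that the valid-deviation restriction plays no role here is also consistent with the paper's treatment.
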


\begin{proof}
	The proof is exactly the same as for Lemma~\ref{lemma-phi-basic}
\end{proof}

\begin{theorem}
	\label{thm-134-fix3-central-multi}
	Suppose there is a central coordinator to determine $\gamma$. When agents deviate, they can drop from at most one facility in each deviation, and are allowed to join as many facilities as they would like to. Then PoS is at most 2 and this bound is tight.
\end{theorem}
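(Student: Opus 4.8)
The plan is to mirror the coalitional-deviation argument of Theorem~\ref{thm-134-fix3-central-single}, using $\Phi(s)$ as an ordinal potential together with Lemma~\ref{lemma-phi-basic-multi} (so that local $\tc$-improving moves strictly decrease $\Phi$) and Lemma~\ref{lemma-Q-multi} (to convert a ``$\tc$-stable'' assignment into a priced stable state). Starting from $s^*$, I would first run a \emph{local phase}: as long as some agent $i$ has a valid deviation $s_i'$ (dropping at most one facility, opening no facility closed in $s$) with $\tc_i(s_i',s_{-i})<\tc_i(s)$, let $i$ perform it; by Lemma~\ref{lemma-phi-basic-multi} each such step strictly decreases $\Phi$. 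Since there are finitely many assignments, this phase terminates at a state $s$ in which no agent can lower $\tc_i$ this way. Because joining a closed facility only adds a nonnegative connection cost without creating any new shared facility, this immediately gives condition~1 of Lemma~\ref{lemma-Q-multi} (no $s_i'\supseteq s_i$ improves $\tc_i$) and guarantees $\Q_i(s,f_k)\ge 0$ for every $f_k\in s_i$.

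Next comes the \emph{facility phase}, split into two cases exactly as before. In Case~2, where $c(f_k)\le\sum_{i\mid f_k\in s_i}\Q_i(s,f_k)$, set $\gamma_i(f_k)=\Q_i(s,f_k)$ (scaled down to meet the budget exactly); by Lemma~\ref{lemma-Q-multi} every such agent is stable and the facility is paid for. In Case~1, where $c(f_k)>\sum_{i\mid f_k\in s_i}\Q_i(s,f_k)$, close $f_k$ and let every agent $i$ using it move to $\nBR_i(s,f_k)$, producing an assignment $s'$; I would then show $\Phi(s')-\Phi(s)\le -c(f_k)+\sum_{i\mid f_k\in s_i}\Q_i(s,f_k)<0$. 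Alternating the local and facility phases, $\Phi$ strictly decreases at every step, so the process terminates at a stable state $\hat{s}$ in which every open facility falls under Case~2. Finally, since $c(f_k)\ge 0$ and $\cc\ge 0$, we get $c(\hat{s})\le 2\Phi(\hat{s})\le 2\Phi(s^*)\le 2\,c(s^*)$, giving $PoS\le 2$; tightness follows from the zero-facility-cost example of Theorem~\ref{thm-14-multi}, which already uses a single facility per agent.

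The crux --- and the only genuine departure from the single-facility proof --- is verifying the Case~1 inequality $\Phi(s')-\Phi(s)\le -c(f_k)+\sum_i\Q_i(s,f_k)$ when many agents leave $f_k$ simultaneously. The connection-cost change matches $\sum_i\Q_i(s,f_k)$ term-by-term, so everything reduces to the disconnection cost. The danger is that two agents $i,j$ that both use $f_k$ could ``swap'' facilities and become disconnected in $s'$ even though neither $\Q_i$ nor $\Q_j$ accounts for it. Here the valid-deviation constraint of Definition~\ref{def-valid-dev-multi} is exactly what rescues the argument: dropping only $f_k$ forces $\nBR_i(s,f_k)\supseteq s_i\setminus\{f_k\}$, so if $i,j$ both use $f_k$ and are disconnected in $s'$ (i.e.\ $\nBR_i\cap\nBR_j=\emptyset$) then $\nBR_i\cap s_j=\emptyset$ and $\nBR_j\cap s_i=\emptyset$, meaning this pair is counted --- in fact counted twice --- among the positive disconnection terms of $\sum_i\Q_i(s,f_k)$. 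I would therefore bound the newly disconnected pairs of $s'$ (each counted once) by those positive terms, the factor-two slack coming from this double counting, and observe that every newly connected pair must involve an agent outside $f_k$ and hence is captured exactly once by the negative terms. Combining these two bounds with the Case~1 hypothesis yields the strict decrease, completing the argument.
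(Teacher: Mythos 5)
Your proposal is correct and follows essentially the same route as the paper's own proof: the same potential $\Phi$, the same alternation of a local $\tc$-improving phase with the two facility cases, the same charging scheme $\gamma_i(f_k)=\Q_i(s,f_k)$ via Lemma~\ref{lemma-Q-multi}, and the same key Case~1 estimate, including the crucial observation that the one-facility-drop rule forces $\nBR_i(s,f_k)\supseteq s_i\setminus\{f_k\}$ so that newly disconnected pairs are double-counted in the positive terms of $\sum_i\Q_i(s,f_k)$ while newly connected pairs involve exactly one agent of $f_k$. This matches the paper's argument (which phrases the same point as agents having no ``unexpected disconnection cost''), so no further changes are needed.
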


\begin{proof}

	This proof is very similar to the proof of Theorem~\ref{thm-134-fix3-central-single}. We also define deviation steps that converge to a stable state, with $\Phi(s)$ decreases in each step of the deviation. For each facility $f_k$, there are two cases:

	\textbf{Case 1, $c(f_k) > \sum_{i|f_k \in s_i} \Q_i(s, f_k)$.} In this case, we close $f_k$ and let each agent $i$ using $f_k$ in $s$ switch its strategy to $\nBR_i(s, f_k)$. Similar to Theorem~\ref{thm-134-fix3-central-single}, we consider the cost of newly disconnected pair of agents and newly connected pair of agents after closing $f_k$. Because the agents are only allowed to drop from one facility, which is $f_k$ in this case, so no agent would have ``unexpected disconnection cost'' when it switches to $\nBR_i(s, f_k)$ from other agents switching away from $\nBR_i(s, f_k)$. The rest of the proof in this case is the same as in the single facility setting, with modified notation.

	\textbf{Case 2, $c(f_k) \le \sum_{i|f_k \in s_i} \Q_i(s, f_k)$.} In this case, we consider pricing strategy $\gamma$, such that $\gamma_i(f_k) = \Q_i(s, f_k)$. First, no agent wants to deviate by only joining some new facilities, but not leaving any facility. This is because every agent is stable without considering facility cost, and their facility cost does not change if they do not leave any facility. By Lemma~\ref{lemma-Q-multi}, for every facility $f_k$, every agent $i$ would not deviate away from $f_k$ with $\gamma_i(f_k) = \Q_i(s, f_k)$. Thus, every agent is stable at ($s$, $\gamma$). The rest of the proof is the same as in the single facility setting.
	
	See the appendix for the full proof.
\end{proof}

\begin{theorem}
	\label{thm-alpha-multi}
	 There always exists an $\alpha$-approximate stable state $(\hat{s}, \gamma)$ such that $\frac{c(\hat{s})}{c(s^*)} \le \frac{2}{\alpha}$.
\end{theorem}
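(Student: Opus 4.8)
The plan is to mirror the argument of Theorem~\ref{thm-alpha-single}, but built on top of the coalitional deviation machinery of Theorem~\ref{thm-134-fix3-central-multi} so that the ``drop at most one facility'' constraint is respected throughout. First I would define the $\alpha$-weighted potential for the multiple facilities setting,
$$\Phi_{\alpha}(s) = \sum_{i \in \mathcal{A}} \sum_{f_k \in s_i} \cc(i, f_k) + \alpha \sum_{(i,j)| s_i \cap s_j = \emptyset} \dc(i, j) + \sum_{f_k | f_k \text{ is open}} c(f_k),$$
and reinterpret the relevant notions of deviation in the approximate sense: an agent $i$ is willing to move only if some valid deviation $s_i'$ (one dropping at most one facility $f_k \in s_i$) satisfies $\tc_i(s_i', s_{-i}) < \frac{1}{\alpha}\tc_i(s)$, and the ``value'' $\Q_i(s,f_k)$ of facility $f_k$ is replaced by its $\alpha$-scaled analog governing how much can be charged while keeping $i$ from an $\alpha$-improving move.

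Next I would run the same two-phase process as in Theorem~\ref{thm-134-fix3-central-multi}, starting from $s^*$: repeatedly let agents perform $\alpha$-improving valid deviations (joining arbitrarily many facilities but dropping at most one), and whenever a facility $f_k$ falls into Case~1, where its cost exceeds the total chargeable to its users in the $\alpha$-sense, close it and send each user to its forced next-best choice $\nBR_i(s,f_k)$. The monotonicity claim is the $\alpha$-analog of Lemma~\ref{lemma-phi-basic-multi} together with the Case~1 computation: because agents drop only $f_k$, no agent suffers an unexpected disconnection from another agent simultaneously abandoning the facility it moves toward, so the same Case~1 inequality that drove $\Phi(s)$ down now drives $\Phi_\alpha(s)$ strictly down; the Case~2 pricing $\gamma_i(f_k) = \Q_i(s,f_k)$ leaves every agent $\alpha$-approximate stable by Lemma~\ref{lemma-Q-multi}. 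Since $\Phi_\alpha$ strictly decreases at each step, the process terminates at a budget-balanced $\alpha$-approximate stable state $(\hat{s},\gamma)$ with $\Phi_\alpha(\hat{s}) \le \Phi_\alpha(s^*)$.

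Finally I would close the accounting. Writing the social cost as
$$\rc(s) = \sum_{f_k | f_k \text{ is open}} c(f_k) + \sum_{i \in \mathcal{A}} \sum_{f_k \in s_i} \cc(i, f_k) + 2 \sum_{(i,j)| s_i \cap s_j = \emptyset} \dc(i, j),$$
and comparing term by term against $\frac{2}{\alpha}\Phi_\alpha$, the connection-cost and facility-cost terms satisfy $x \le \frac{2}{\alpha}x$ (using $\alpha \le 2$) while the disconnection terms match exactly, since their coefficient is $2$ on both sides. Hence $\rc(\hat{s}) \le \frac{2}{\alpha}\Phi_\alpha(\hat{s})$. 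The analogous comparison in the other direction gives $\Phi_\alpha(s^*) \le \rc(s^*)$, again because $\alpha \le 2$. Chaining these with monotonicity yields $\rc(\hat{s}) \le \frac{2}{\alpha}\Phi_\alpha(\hat{s}) \le \frac{2}{\alpha}\Phi_\alpha(s^*) \le \frac{2}{\alpha}\rc(s^*)$, which is the claim.

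The hard part will be the monotonicity of $\Phi_\alpha$ under coalitional deviations in the multiple-facility model, specifically verifying the Case~1 bookkeeping when several agents leave $f_k$ at once while each is free to join many new facilities. One must confirm that the newly disconnected pairs are contained in the pairs previously connected only through $f_k$, and that no agent incurs a disconnection it did not anticipate in its own best-response calculation; this is exactly where the restriction to dropping a single facility is essential, and the step that must be checked carefully rather than inherited verbatim from the single-facility proof.
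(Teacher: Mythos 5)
Your proposal is correct and takes essentially the same route as the paper: the paper's own proof of this theorem is precisely to rerun the coalitional deviation and pricing process of Theorem~\ref{thm-134-fix3-central-multi} with $\alpha$-improving valid deviations and the potential $\Phi_{\alpha}$, then bound $\rc(\hat{s}) \le \frac{2}{\alpha}\Phi_{\alpha}(\hat{s}) \le \frac{2}{\alpha}\Phi_{\alpha}(s^*) \le \frac{2}{\alpha}\rc(s^*)$. The details you fill in (the $\alpha$-scaled analog of $\Q_i(s,f_k)$, the Case~1 bookkeeping under the drop-at-most-one-facility restriction, and the implicit use of $1 \le \alpha \le 2$) are exactly what the paper leaves implicit by deferring to Theorems~\ref{thm-alpha-single} and~\ref{thm-134-fix3-central-multi}.
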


\begin{proof}
	The proof is essentially the same as that of Theorem~\ref{thm-alpha-single}, with the following potential function:
	$$\Phi_{\alpha}(s) = \sum_{i \in \mathcal{A}} \sum_{f_k \in s_i} \cc(i, f_k) + \alpha \sum_{(i,j)| s_i \cap s_j = \emptyset} \dc(i, j) + \sum_{f_k \in \mathcal{F} | f_k \text{ is open}}$$
\end{proof}

\subsection{Agents paying each other}
\label{sec-paid-peering-multi}
The assumptions in this section are the same as in Section~\ref{sec-paid-peering}, but in the multiple facilities setting. We consider the case that agents can pay each other to stabilize the optimal assignment. Formally, for a pair $(i, j)$ with $f_k \in s_i \cap s_j$, $i$ can pay $j$ up to $\dc(i, j)$ to stabilize the current assignment.

In the optimal assignment $s^*$ with a pricing strategy $\gamma$, we will consider the stability of every agent using $f_k$. First, no agent could lower its cost by only joining some other facilities, but not leaving any facility. Suppose to the contrary that there exists an agent $i$ that can deviate to $s_i'$ by only joining some extra facilities to lower its cost. Because $i$ does not leave any facility, then $\rc_i(s)$ decreases means $\tc_i(s)$ also decreases. Any other agent $j$'s $\rc_j(s)$ does not increase, because the connection cost of $j$ does not change, and disconnection cost does not increase. Also, agent $i$ would not join a facility that is not open in $s^*$ because being the only agent at that facility would not benefit it, so the total facility cost is also non-increasing. Thus, in the assignment $(s_i', s^*_{-i})$, the total social cost is lower than that of $s^*$, which is a contradiction.

The above argument satisfies the first condition in Lemma~\ref{lemma-Q-multi}. Thus, every agent $i$ would be stable if $\gamma_i(f_k) - \Delta_i(f_k) \le \Q_i(s^*, f_k)$ for all $f_k \in s^*_i$. For a pair of agents $(i, j)$ using $f_k$ in $s^*$, suppose $\Q_i(s^*, f_k) \ge 0$, and $\Q_j(s^*, f_k) < 0$, which means we can get some payments from $i$ while keeping it stable, but $j$ needs to be paid to keep stable at $s^*$, then we allow $i$ to pay $j$ to stop it from deviating. $i$ would not pay more than $\dc(i, j)$, which is the maximum increase of $i$'s cost as a result of $j$'s deviation. In this section, we consider the stability with payments defined in Section~\ref{sec-pricing-stable-multi}.

 $p_{ij}$ is defined the same as in Section~\ref{sec-paid-peering}, and $\Delta_i(f_k) = \sum_{j|f_k \in s^*_i \cap s^*_j} p_{ji}$.

\begin{theorem}
	\label{paid-peering-thm1-multi}
 If agents can connect to multiple facilities, and we allow agents to pay their neighbors, with $i$ paying $j$ no more than $\dc(i, j)$, then there exist $\gamma$ and payments of players to each other so that the resulting solution $(s^*,\gamma)$ is stable, with $s^*$ being the solution minimizing social cost. In other words, the price of stability becomes 1.
\end{theorem}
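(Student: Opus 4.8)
The plan is to adapt the circulation-network argument from the single-facility case (Theorem~\ref{paid-peering-thm1-single}) to the multiple-facilities setting. For each open facility $f_k$ in $s^*$, I would build a separate circulation network whose nodes are the agents $i$ with $f_k \in s^*_i$, assigning each such node a supply of $\Q_i(s^*, f_k)$ (possibly negative, giving a demand), adding bidirectional edges of capacity $\dc(i,j)$ between every pair of co-users of $f_k$, a node $f_k$ with supply $-c(f_k)$ fed by infinite-capacity edges from each agent, and a dummy sink $z$ absorbing the net supply. A feasible circulation then yields payments $p_{ij} = v_{ij} - v_{ji}$ and prices $\gamma_i(f_k) = v_{if_k}$ exactly as before, so that $\gamma_i(f_k) - \Delta_i(f_k) = \Q_i(s^*, f_k)$ for every agent and facility. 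The crucial point distinguishing this case is that stability now requires verifying the hypotheses of Lemma~\ref{lemma-Q-multi} rather than Lemma~\ref{lemma-Q}: its first condition (no profitable deviation that only adds facilities) is handled by the optimality argument already given at the start of Section~\ref{sec-paid-peering-multi}, and its second condition is precisely what the circulation supplies encode.

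With the reduction to feasibility in place, I would invoke the same Hall-type / Max-Flow Min-Cut criterion: the circulation is feasible if and only if for every subset of nodes, the total supply of the subset plus the capacity of edges entering it is non-negative. The case analysis mirrors the single-facility proof: subsets containing $z$ but not $f_k$, or containing $f_k$ but missing some agent, have an infinite-capacity incoming edge and are trivially fine; the full node set has total supply $\sum_i \Q_i(s^*, f_k) - c(f_k)$, whose non-negativity follows from optimality of $s^*$ against the deviation that closes $f_k$ and sends each user to $\nBR_i(s^*, f_k)$; and for a subset $\mathcal{B}$ of agents not containing $f_k$, I would expand $\sum_{i \in \mathcal{B}} \Q_i(s^*, f_k)$ plus the crossing disconnection-cost capacities and show the result is a lower bound on the social-cost change of the deviation in which exactly the agents in $\mathcal{B}$ leave $f_k$ for their respective $\nBR_i(s^*, f_k)$, again contradicting optimality if it were negative.

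The main obstacle — and the reason this is not a verbatim copy of the single-facility argument — is the bookkeeping of disconnection costs when agents hold multiple facilities simultaneously. In the single-facility case, agent $i$ pays $\dc(i,j)$ exactly when $s^*_i \neq s^*_j$, so leaving $f_k$ cleanly converts co-users into disconnected pairs; here an agent disconnects from $j$ only when $s_i \cap s_j = \emptyset$, so dropping $f_k$ may not disconnect $i$ from a co-user $j$ if they still share another facility. I must therefore argue that the deviation ``everyone in $\mathcal{B}$ drops $f_k$ and switches the remainder of their strategy to $\nBR_i(s^*, f_k)$'' is a valid deviation (each agent drops only the single facility $f_k$, per Definition~\ref{def-valid-dev-multi}) and that its social-cost change is still bounded above by the quantity appearing in the cut inequality. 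The key facts that rescue the argument are that $\nBR_i(s^*, f_k)$ is defined to avoid opening any facility closed in $s^*$ (so no new facility costs appear), that the capacity $\dc(i,j)$ still correctly upper-bounds the incremental disconnection cost created between a leaving agent and a staying neighbor, and that the condition in Lemma~\ref{lemma-Q-multi} was engineered precisely so that these per-facility $\Q_i(s^*, f_k)$ values compose correctly across the agent's whole strategy.

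Finally, once feasibility is established for every facility's network, I would assemble the global solution by summing the per-facility prices and payments, noting that the payments $p_{ij}$ for a pair $(i,j)$ sharing several facilities simply accumulate over those facilities while still respecting $p_{ij} = -p_{ji}$ and the per-facility cap $\dc(i,j)$ on each contributing network. Budget balance holds facility-by-facility since $\sum_{i \mid f_k \in s^*_i} \gamma_i(f_k) = \sum_i v_{if_k} \geq c(f_k)$, with any surplus removable by lowering prices as in the earlier proofs, and overall stability then follows directly from Lemma~\ref{lemma-Q-multi}. This yields that $(s^*, \gamma)$ with these payments is stable, establishing $PoS = 1$ in the multiple-facilities setting.
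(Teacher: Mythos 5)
Your plan matches the paper's proof essentially step for step: the same per-facility circulation network with supplies $\Q_i(s^*,f_k)$, the same extraction of prices $\gamma_i(f_k)=v_{if_k}$ and payments $p_{ij}=v_{ij}-v_{ji}$ from a feasible flow, stability via the two conditions of Lemma~\ref{lemma-Q-multi} (with the first condition discharged by the optimality argument preceding the theorem), and feasibility via the min-cut criterion with the identical case analysis — including the key observation that since a deviation drops only $f_k$, the strategy $\nBR_i(s^*,f_k)$ retains all other facilities, which is exactly how the paper handles the shared-facility disconnection bookkeeping. One small correction of wording: in the subset-$\mathcal{B}$ case the cut quantity (supply plus incoming capacity) must be an \emph{upper} bound on the social-cost change of the deviation, not a lower bound, so that its negativity forces the deviation to cost less than $s^*$ and produce the contradiction.
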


\begin{proof}
The proof is similar to Theorem~\ref{paid-peering-thm1-single}, except that each agent is allowed to use multiple facilities and the valid deviations are limited to leaving one facility each time.
See the appendix for the full proof.
\end{proof}

\section{Computation of Optimum Solutions}
In this section, we discuss approximation algorithms to calculate the optimal assignment in polynomial time.

\subsection{Single Facility per Agent}
\label{sec-opt-single}
\begin{theorem}
  If $\forall k, c(f_k) = 0$ and each agent is only allowed to use one facility, then computing the optimum solution $s^*$ is NP-Hard, but there exists a poly-time 2-approximation algorithm.
\end{theorem}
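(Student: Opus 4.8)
The plan is to recognize the $c(f_k)=0$ special case as an instance of \emph{uniform metric labeling} and then attack the two halves separately: NP-hardness by a reduction from \textsc{Multiway Cut}, and the approximation by rounding the natural labeling LP. First I would note that when all facility costs vanish the social cost is $\rc(s)=\sum_{i}\cc(i,s_i)+2\sum_{(i,j)\,|\,s_i\neq s_j}\dc(i,j)$, which is a per-agent assignment cost plus a pairwise cost charged exactly when two agents carry different labels. The one mismatch with standard labeling is the empty strategy, since two agents both playing $\emptyset$ still count as separated; I would eliminate it by giving each agent $i$ a private facility $g_i$ usable only by $i$ with $\cc(i,g_i)=0$, so that $\emptyset$ becomes the label $g_i$ and the instance is a clean labeling instance over $\mathcal{F}\cup\{g_1,\dots,g_n\}$ (the global factor $2$ is harmless for both hardness and approximation).

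For hardness I would reduce from \textsc{Multiway Cut} with terminals $t_1,\dots,t_r$, which is NP-hard for $r\ge 3$. Take the agents to be the vertices and the facilities to be $f_1,\dots,f_r$; set $\dc(u,v)$ to the weight of edge $\{u,v\}$ (and $0$ on non-edges), let every connection cost be $0$, and pin each terminal $t_\ell$ to $f_\ell$ by charging it $W+1$ at every other facility, where $W$ is the total edge weight. In an optimal assignment no terminal moves (any move costs more than all edges together) and no non-terminal plays $\emptyset$ (joining a facility is free and can only remove separation cost), so optimal assignments are exactly partitions of the vertices into the $r$ terminal classes, and the social cost of such an assignment equals twice the weight of the corresponding multiway cut. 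Hence an exact algorithm for $s^*$ would solve \textsc{Multiway Cut}, establishing NP-hardness.

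For the algorithm I would solve the standard labeling LP: variables $x_i^a\ge 0$ with $\sum_a x_i^a=1$ (support restricted to the labels available to $i$), separation variables $z_{ij}=\tfrac12\sum_a|x_i^a-x_j^a|$ linearized the usual way, and objective $\sum_{i,a}\cc(i,a)x_i^a+2\sum_{(i,j)}\dc(i,j)z_{ij}$, which lower-bounds $\rc(s^*)$. I would then apply the Kleinberg--Tardos phase rounding for the uniform metric: repeatedly pick a uniformly random label $a$ and a uniformly random threshold $\theta\in[0,1]$, and assign every still-unassigned agent $i$ with $x_i^a\ge\theta$ to $a$. For the uniform metric this assigns each agent $i$ to $a$ with probability $x_i^a$ and incurs expected total cost at most twice the LP value, hence at most $2\,\rc(s^*)$; conditional expectations then derandomize it into a deterministic poly-time $2$-approximation.

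The labeling reformulation and the multiway-cut reduction are routine, and the rounding is a known technique, so I do not expect a genuinely hard step. The only place needing care is confirming that the private-label gadget does not disturb the rounding guarantee: because the LP places no mass on $g_j$ for $j\neq i$, drawing such a label leaves $i$ unassigned, matching $x_i^{g_j}=0$, so the factor-$2$ analysis carries over verbatim. I therefore expect the main (minor) obstacle to be the bookkeeping around $\emptyset$ and the per-agent label sets rather than any new algorithmic idea.
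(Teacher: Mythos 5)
Your proposal is correct and follows essentially the same route as the paper: hardness by viewing the problem as a generalization of \textsc{Multiway Cut}, and the 2-approximation by reducing to uniform labeling with the identical personal-label gadget for the empty strategy (cost 0 for the owner, prohibitive for everyone else) and doubled separation costs, then applying the Kleinberg--Tardos algorithm. The only difference is presentational: you spell out the terminal-pinning reduction and inline the LP phase rounding, whereas the paper invokes both as known results with citations.
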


\begin{proof}
	First, notice that this setting is a simple generalization of the multi-way cut problem, which is proved to be NP-Hard \cite{dahlhaus1994complexity}, so this problem is also NP-Hard.

	Next, we will show that this setting can be reduced to the uniform labeling problem. In the uniform labeling problem, we are give an undirected graph and a set of labels. The goal is to assign every node in the graph a label to minimize the sum of two costs: (1) there is a cost $c(x, i)$ for assigning label $x$ to node $i$; (2) there is a separation cost $c(i, j)$ for neighbors $i$ and $j$ with different labels.
	
	Given an instance of our group formation problem, we create a label for each facility, and create a node for each agent. Assign the labeling cost between agent $i$ and facility $f_k$ as the connection cost $\cc(i, f_k)$. Also, for each pair of agents $i, j$ such that $\dc(i, j) > 0$, create an edge between node $i$ and $j$, and assign the separation cost between them as $2\dc(i, j)$. In addition, create a personal label for each agent which corresponds to this agent not joining any facility: this label would have cost 0 for this agent, but a very large cost for all other agents. Using this reduction, it is easy to see that our group formation problem can by solved by converting it to a uniform labeling problem while preserving the approximation factor, and there is a known 2-approximation algorithm for uniform labeling which runs in polynomial time \cite{kleinberg2002approximation}.
\end{proof}

Although there is a polynomial time algorithm that gives a 2-approximation to the optimal solution when the facility costs are 0, the optimal solution is much harder to approximate in the case that facilities costs are not all 0. Consider a weaker setting where all the disconnection costs are 0: then our problem is equivalent to the general facility location problem. \cite{hochbaum1982heuristics} gives a $O(\log n)$-approximation algorithm to this problem and shows that it is harder than the set cover problem, which means it is inapproximable to better than $\Omega(\log n)$ unless P=NP. We will show that in our setting, when the facility costs are arbitrary, and even if the agents are allowed to use multiple facilities, there still exists a polynomial time algorithm that gives a $O(\log n)$-approximation.

\subsection{Multiple Facilities per Agent}
\label{sec-opt-multi}
In this section, we assume each agent is allowed to use multiple facilities, and show that there exist polynomial time algorithms that give a $\min\{O(\log n), m+1\}$-approximation with high probability.


We first model this problem by an integer program and then relax it to a linear program. For every agent $i$ and facility $f_k$, let variable $x_{ik} = 1$ represent that $i$ uses $f_k$, and $x_{ik} = 0$ otherwise. Let $w_{ik} = \cc(i, f_k)$ be the connection cost. For each pair of agents $(i, j)$, let variable $x_{ij} = 1$ represent that $i$ and $j$ are not connected via any facility, and $x_{ij} = 0$ otherwise. Note that $(i,j)$ is still an unordered pair here. Let $w_{ij} = \dc(i, j)$ be the disconnection cost. For each pair of agents $(i, j)$ and facility $f_k$, let variable $x_{ijk} = 1$ represent that $i$ and $j$ are connected via facility $f_k$, and $x_{ijk} = 0$ otherwise. Finally, let $x_k$ represent whether facility $f_k$ is open or not. Then, computing the optimum solution $s^*$ is equivalent to the following integer program:

	\begin{equation}
		\label{opt-IP}
		\begin{array}{ll@{}ll}
			\text{minimize}    & \sum_{i, k} w_{ik} x_{ik} + 2 \sum_{(i, j)} w_{ij} x_{ij} + \sum_{k} c(f_k) x_k & \\
			\text{subject to}  & x_{ijk} \le x_{ik} & \forall (i,j), k \\
												 & x_{ijk} \le x_{jk} & \forall (i,j), k \\
												 & 1 - x_{ij} \le \sum_{k} x_{ijk} & \forall (i,j)\\
												 & x_k \ge x_{ik}  & \forall i, k \\
												 & x_{ijk} \in \{0, 1\} & \forall (i,j), k \\
												 & x_{ij} \in \{0, 1\}  & \forall (i,j)\\
												 & x_k \in \{0, 1\}  & \forall k\\
		\end{array}
	\end{equation}

	Relax the integer program above to a linear program:

	\begin{equation}
		\label{opt-LP}
		\begin{array}{ll@{}ll}
			\text{minimize}    & \sum_{i, k} w_{ik} x_{ik} + 2 \sum_{(i, j)} w_{ij} x_{ij} + \sum_{k} c(f_k) x_k  & \\
			\text{subject to}  & x_{ijk} \le x_{ik} & \forall (i,j), k \\
												 & x_{ijk} \le x_{jk} & \forall (i,j), k \\
												 & 1 - x_{ij} \le \sum_{k} x_{ijk} & \forall (i,j)\\
												 & x_k \ge x_{ik}  & \forall i, k \\
												 & 0 \le x_{ij} \le 1 & \forall (i,j)\\
												 & 0 \le x_{ijk} \le 1 & \forall (i,j), k \\
												 & 0 \le x_k \le 1 & \forall k \\
		\end{array}
		\end{equation}

\begin{algorithm}
\label{opt-k}
Let $x_{ik}^*, x_{ij}^*, x_{k}^*, x_{ijk}^*$ denote the optimal solution to LP \ref{opt-LP}. For all $x_{ik}^* \ge \frac{1}{m+1}$, set $\hat{x}_{ik} = 1$, otherwise $\hat{x}_{ik} = 0$. For all $(i,j)$ and $k$, set $\hat{x}_{ijk} = \min\{\hat{x}_{ik}, \hat{x}_{jk}\}$. Then for all $(i,j)$, set $\hat{x}_{ij} = \max\{0, 1 - \sum_k \hat{x}_{ijk}\} $. Finally, for all $k$, set $\hat{x}_k = \max_i \hat{x}_{ik}$. Return $\hat{x}_{ik}$, $\hat{x}_{ijk}$, $\hat{x}_{ij}$, $\hat{x}_k$ as the solution for IP \ref{opt-IP}.
\end{algorithm}

\begin{theorem}
  Algorithm \ref{opt-k} gives a $(m+1)$-approximation to the optimal solution of IP \ref{opt-IP}.
\end{theorem}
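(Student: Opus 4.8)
The plan is to use the standard LP-rounding paradigm. Since LP~\ref{opt-LP} is a relaxation of IP~\ref{opt-IP}, its optimum value is a lower bound on the integral optimum, so it suffices to show two things: that the rounded solution $\hat{x}$ is feasible for IP~\ref{opt-IP}, and that its objective value is at most $(m+1)$ times the LP optimum. Feasibility is immediate from the construction: $\hat{x}_{ijk}=\min\{\hat{x}_{ik},\hat{x}_{jk}\}$ gives $\hat{x}_{ijk}\le\hat{x}_{ik}$ and $\hat{x}_{ijk}\le\hat{x}_{jk}$; the definition $\hat{x}_{ij}=\max\{0,1-\sum_k\hat{x}_{ijk}\}$ gives $1-\hat{x}_{ij}\le\sum_k\hat{x}_{ijk}$; and $\hat{x}_k=\max_i\hat{x}_{ik}$ gives $\hat{x}_k\ge\hat{x}_{ik}$. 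Moreover all rounded variables lie in $\{0,1\}$: note in particular that $\sum_k\hat{x}_{ijk}$ is a nonnegative integer, so $\hat{x}_{ij}\in\{0,1\}$ as well.

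To bound the cost I would argue term by term, proving that each rounded variable is at most $(m+1)$ times its fractional counterpart $x^*$, which immediately bounds the whole objective since the two objectives have identical coefficients. For the connection term this follows directly from the rounding threshold: whenever $\hat{x}_{ik}=1$ we have $x_{ik}^*\ge\frac{1}{m+1}$, hence $\hat{x}_{ik}=1\le(m+1)x_{ik}^*$, and the bound is trivial when $\hat{x}_{ik}=0$. For the facility term, if $\hat{x}_k=1$ then some agent $i$ has $\hat{x}_{ik}=1$, i.e.\ $x_{ik}^*\ge\frac{1}{m+1}$, and the LP constraint $x_k^*\ge x_{ik}^*$ then forces $x_k^*\ge\frac{1}{m+1}$, so $\hat{x}_k\le(m+1)x_k^*$.

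The main obstacle is the disconnection term $2\sum_{(i,j)}w_{ij}\hat{x}_{ij}$, because $\hat{x}_{ij}$ is not a rounding of a single LP variable but is instead determined by the joint behavior of $i$ and $j$ across all $m$ facilities. Here I would prove $\hat{x}_{ij}\le(m+1)x_{ij}^*$ by contradiction. Suppose $\hat{x}_{ij}=1$ (the only nontrivial case) yet $x_{ij}^*<\frac{1}{m+1}$. The LP constraint $1-x_{ij}^*\le\sum_k x_{ijk}^*$ then gives $\sum_k x_{ijk}^*>\frac{m}{m+1}$; since this sum has only $m$ terms, by averaging some facility $k$ satisfies $x_{ijk}^*>\frac{1}{m+1}$. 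The constraints $x_{ijk}^*\le x_{ik}^*$ and $x_{ijk}^*\le x_{jk}^*$ then force $x_{ik}^*>\frac{1}{m+1}$ and $x_{jk}^*>\frac{1}{m+1}$, so the algorithm sets $\hat{x}_{ik}=\hat{x}_{jk}=1$, whence $\hat{x}_{ijk}=1$ and therefore $\hat{x}_{ij}=0$, contradicting $\hat{x}_{ij}=1$. This pigeonhole step over the $m$ facilities is exactly what pins down both the rounding threshold $\frac{1}{m+1}$ and the resulting factor $m+1$. Combining the three per-term bounds shows the rounded objective is at most $(m+1)$ times the LP optimum, which in turn is at most $(m+1)$ times the IP optimum, completing the argument.
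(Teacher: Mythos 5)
Your proposal is correct and follows essentially the same approach as the paper: verify feasibility of the rounded solution, then bound each variable by $(m+1)$ times its fractional counterpart using the threshold $\frac{1}{m+1}$ and the LP constraints. Your handling of the disconnection variables via contradiction and pigeonhole is just the contrapositive of the paper's direct argument (which shows $\hat{x}_{ij}=1$ forces $\sum_k x_{ijk}^* < \frac{m}{m+1}$ and hence $x_{ij}^*\ge\frac{1}{m+1}$), so the two proofs are logically identical.
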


\begin{proof}

	Algorithm \ref{opt-k} gives a valid solution to IP~\ref{opt-IP}. First, all variables are either set to 0 or 1. We first round all $\hat{x}_{ik}$, then set $\hat{x}_{ijk} = \min\{\hat{x}_{ik}, \hat{x}_{jk}\}$, so $\hat{x}_{ijk} \le \hat{x}_{ik}$ and $\hat{x}_{ijk} \le \hat{x}_{jk}$. $\hat{x}_{ij} = \max\{0, 1 - \sum_k \hat{x}_{ijk}\}$ guarantees $1 - \hat{x}_{ij} \le \sum_{k} \hat{x}_{ijk}$. Finally $\hat{x}_k = \max_i \hat{x}_{ik}$, so $\hat{x}_k \ge \hat{x}_{ik}$ is satisfied.

	By the rounding in Algorithm \ref{opt-k}, for all $i$ and $k$, $\hat{x}_{ik} \le (m+1) x_{ik}^*$. We will then compare $x_{ij}^*$ with $\hat{x}_{ij}$ and $x_{k}^*$ with $\hat{x}_{k}$.

	$\hat{x}_{ij} = 1$ if and only if $\sum_k \hat{x}_{ijk} = 0$, which means $\forall k$, $\min\{\hat{x}_{ik}, \hat{x}_{jk}\} = 0$, then it must be the case that $\min\{x_{ik}^*, x_{jk}^*\} < \frac{1}{m+1}$. Because $x_{ijk}^* \le \min\{x_{ik}^*, x_{jk}^*\}$ and $1 - x_{ij}^* \le \sum_{k} x_{ijk}^*$, we know $x_{ij}^* \ge 1 - m \times \frac{1}{m+1} = \frac{1}{m+1}$. So $\hat{x}_{ij} = 1$ only if $x_{ij}^* \ge \frac{1}{m+1}$, and thus $\hat{x}_{ij} \le (m+1) x_{ij}^*$.

	$\hat{x}_{k} = 1$ only if $\exists i$, $\hat{x}_{ik} = 1$, which means $\exists i$, $x_{ik}^* \ge \frac{1}{m+1}$. So $x_{k}^* = \max_{i} x_{ik}^* \ge \frac{1}{m+1}$, then $\hat{x}_{k} \le (m+1) x_{k}^*$.

	Because all the variables given by the rounding method in Algorithm \ref{opt-k} are at most $(m+1)$ times the optimal LP solution, we have that Algorithm \ref{opt-k} gives a $(m+1)$-approximation to the optimal solution of LP \ref{opt-LP}, and \ref{opt-IP}.
\end{proof}

\begin{algorithm}
	\label{opt-logn}
	Let $x_{ik}^*, x_{ij}^*, x_{k}^*, x_{ijk}^*$ denote the optimal solution to LP \ref{opt-LP}. For each facility $f_k$, we apply correlated randomized rounding on all $x_{ik}^*$ as follows: first order all agents $i$ by increasing order of $x_{ik}^*$. Without loss of generality, suppose $x_{1k}^* \le x_{2k}^* \le \dots \le x_{nk}^*$. With probability $x_{1k}^*$, assign $x_{ik} = 1$ for all $i$. With probability $x_{2k}^* - x_{1k}^*$, assign $x_{ik} = 1$ for all $i \ge 2$, and $x_{ik} = 0$ for all $i < 2$. With probability $x_{jk}^* - x_{(j-1)k}^*$, assign $x_{ik} = 1$ for all $i \ge j$, and $x_{ik} = 0$ for all $i < j$. Finally, with probability $1 - x_{nk}$, assign  $x_{ik} = 0$ for all $i$. Once all $x_{ik}$ are assigned, for all $(i,j)$ and $k$, set $x_{ijk} = \min\{x_{ik}, x_{jk}\}$. Then for all $(i,j)$, set $x_{ij} = \max\{0, 1 - \sum_k x_{ijk}\} $. For all $k$, set $x_k = \max_i x_{ik}$. Repeat this randomized rounding process for $4 \ln 10n$ times, then assign $\hat{x}_{ik} = 1$ if $x_{ik}$ is assigned to 1 in any one of the $4 \ln 10n$ runs, otherwise $\hat{x}_{ik}=0$. Assign $\hat{x}_{ijk} = \min\{\hat{x}_{ik}, \hat{x}_{jk}\}$. Assign $\hat{x}_{ij} = 1$ if and only if $x_{ij}$ is assigned to 1 in every single run. Finally, assign $\hat{x}_{k} = 1$ if $x_{k}$ is assigned to 1 in any single run, otherwise $\hat{x}_{k} = 0$. Return $\hat{x}_{ik}$, $\hat{x}_{ijk}$, $\hat{x}_{ij}$, $\hat{x}_k$ as the solution for IP \ref{opt-IP}.
\end{algorithm}

\begin{theorem}
   With high probability, the social cost of the solution given by Algorithm \ref{opt-logn} is no more than $O(\ln n)\cdot \rc(s^*)$.
\end{theorem}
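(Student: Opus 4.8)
The plan is to read Algorithm \ref{opt-logn}'s rounding as a collection of independent threshold roundings, one per facility, and then analyze the three cost terms of IP \ref{opt-IP} separately. In a single run, the correlated rounding for facility $f_k$ is equivalent to drawing one threshold $\theta_k$ uniformly from $[0,1]$, independently across facilities, and setting $x_{ik}=1$ iff $x_{ik}^* \ge \theta_k$. Consequently, in one run $\Pr[x_{ik}=1] = x_{ik}^*$, the events ``$i$ uses $f_k$'' are independent across $k$, a pair $(i,j)$ shares $f_k$ with probability $\min\{x_{ik}^*, x_{jk}^*\}$, and $f_k$ opens with probability $\max_i x_{ik}^* = x_k^*$. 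I would first observe that the returned solution is always feasible for IP \ref{opt-IP}: the only nontrivial constraint is $1 - \hat{x}_{ij} \le \sum_k \hat{x}_{ijk}$, and whenever $\hat{x}_{ij}=0$ there is a run in which $i$ and $j$ shared some $f_k$, so $\hat{x}_{ik}=\hat{x}_{jk}=1$ in the union and hence $\hat{x}_{ijk}=1$.

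For the connection and facility costs I would use the union over the $T = 4\ln(10n)$ runs. Since $\hat{x}_{ik}=1$ iff $i$ used $f_k$ in at least one run, $\Pr[\hat{x}_{ik}=1] = 1-(1-x_{ik}^*)^T \le T\, x_{ik}^*$, and likewise $\Pr[\hat{x}_k=1] \le T\, x_k^*$. Hence the expected connection cost is at most $T \sum_{i,k} w_{ik} x_{ik}^*$ and the expected facility cost at most $T \sum_k c(f_k) x_k^*$, each $O(\log n)$ times the corresponding term of the LP optimum.

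The disconnection term is the crux. In a single run the pair $(i,j)$ is disconnected with probability $p_{ij} = \prod_k (1 - \min\{x_{ik}^*, x_{jk}^*\})$ by independence across facilities; using $1-y \le e^{-y}$ together with the LP constraints $x_{ijk}^* \le \min\{x_{ik}^*, x_{jk}^*\}$ and $\sum_k x_{ijk}^* \ge 1 - x_{ij}^*$ gives $p_{ij} \le e^{-(1-x_{ij}^*)}$. Since $\hat{x}_{ij}=1$ only if all $T$ independent runs disconnect the pair, $\Pr[\hat{x}_{ij}=1] \le p_{ij}^{\,T} \le e^{-T(1-x_{ij}^*)}$. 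The obstacle is that for pairs with $x_{ij}^*$ near $0$ the LP pays essentially nothing, so a purely multiplicative bound against the LP disconnection term is impossible; the fix is to split on the value of $x_{ij}^*$. For $x_{ij}^* > \tfrac12$ I bound trivially $\hat{x}_{ij} \le 1 \le 2 x_{ij}^*$, so these pairs contribute at most twice the LP disconnection cost. For $x_{ij}^* \le \tfrac12$ we have $\Pr[\hat{x}_{ij}=1] \le e^{-T/2} = (10n)^{-2}$, and a union bound over the fewer than $n^2$ such pairs shows that with probability at least $99/100$ none of them is disconnected at all. Thus, with high probability the entire disconnection cost comes from pairs with $x_{ij}^* > \tfrac12$ and is at most twice the LP disconnection term. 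This is exactly why the algorithm repeats the rounding $\Theta(\log n)$ times and takes the union of the join decisions.

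Finally I would combine the three bounds. The disconnection bound already holds with probability at least $99/100$; applying Markov's inequality to the connection and facility costs (whose expectations are $O(\log n)$ times their LP values) shows each exceeds four times its expectation with probability at most $1/4$. A union bound then yields, with constant probability bounded away from $0$, a feasible solution of total social cost at most $O(\log n)$ times the LP optimum. Since LP \ref{opt-LP} is a relaxation of IP \ref{opt-IP}, its optimum is at most $\rc(s^*)$, giving an $O(\log n)$-approximation; and because feasibility is guaranteed deterministically and the cost of any output can be evaluated directly, the success probability can be amplified to high probability by repeating the whole procedure $O(\log n)$ times and returning the cheapest feasible output.
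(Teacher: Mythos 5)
Your proof is correct, and its skeleton matches the paper's: analyze single-run probabilities of the correlated rounding ($\Pr[x_{ik}=1]=x_{ik}^*$, $\Pr[x_k=1]=x_k^*$, pairs share $f_k$ with probability $\min\{x_{ik}^*,x_{jk}^*\}$), split the disconnection pairs at $x_{ij}^*=\tfrac12$, kill the light pairs by a union bound over the $4\ln(10n)$ runs, and charge connection/facility costs via a union over runs. But three of your steps genuinely diverge from the paper, and each is arguably an improvement. First, for heavy pairs ($x_{ij}^*>\tfrac12$) you use the deterministic bound $\hat{x}_{ij}\le 1\le 2x_{ij}^*$, whereas the paper bounds the per-run probability by the chord of the convex function $\bigl(\frac{m-(1-z)}{m}\bigr)^m$ through $(\tfrac12,\tfrac{1}{\sqrt{e}})$ and $(1,1)$ to get $\Pr[x_{ij}=1]\le\frac{2}{\sqrt{e}}x_{ij}^*$; your version needs no probabilistic argument at all for these pairs. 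Second, you assemble the final bound by an unconditional union bound over three bad events (light pair disconnected, connection cost exceeds $4\times$ its expectation, facility cost exceeds $4\times$ its expectation), while the paper conditions on the event $\bar{A}$ (no light pair disconnected) and computes $E[\rc(\hat{s})\mid\bar{A}]$, asserting that $\Pr[\hat{x}_{ik}=1\mid\bar{A}]\le 4\ln(10n)\,x_{ik}^*$ holds ``with any condition''---a step that is not justified as written, since conditioning on $\bar{A}$ does change the joint distribution of the rounding; your structure sidesteps this delicacy entirely. Third, the paper stops at success probability $0.7$, so its ``high probability'' is really a fixed constant; your amplification step (repeat the whole procedure $O(\log n)$ times, keep the cheapest of the always-feasible outputs) is what actually delivers probability $1-1/\mathrm{poly}(n)$, at the cost of an extra logarithmic factor in running time but not in the approximation ratio. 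Two tiny quibbles: you should note explicitly that independence across facilities within a run is an assumption about the algorithm (the paper uses it implicitly in the product $\prod_k \Pr[x_{ijk}=0]$), and your exponential bound $1-y\le e^{-y}$ replaces the paper's AM--GM step, which is fine and slightly cleaner.
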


\begin{proof}

	First we show that Algorithm \ref{opt-logn} gives a valid solution to IP~\ref{opt-IP}. All variables are either set to 0 or 1. $\hat{x}_{ik}=1$ if $x_{ik}$ is assigned to 1 in any single run, and we set $\hat{x}_{ijk} = \min\{\hat{x}_{ik}, \hat{x}_{jk}\}$, so $\hat{x}_{ijk} \le \hat{x}_{ik}$ and $\hat{x}_{ijk} \le \hat{x}_{jk}$. $\hat{x}_{ij} = 1$ if and only if $x_{ij}$ is assigned to 1 in {\em every} single run, so $1 - \hat{x}_{ij} \le \sum_{k} \hat{x}_{ijk}$ holds, because when $x_{ij}$ is assigned to 1 in every run, it means $x_{ijk} = 0$ for all $k$ in every run, then it must be $x_{ik} = 0$ and $x_{jk} = 0$ in every run, so $\hat{x}_{ik} = \hat{x}_{jk} = \hat{x}_{ijk} = 0$. When $\hat{x}_{ij}$ is assigned to 1, $1 - \hat{x}_{ij} \le \sum_{k} \hat{x}_{ijk}$ obviously always holds. Finally, $\hat{x}_{k} = 1$ if $x_{k} = 1$ in any single run, so $\hat{x}_k \ge \hat{x}_{ik}$ must hold because $\hat{x}_{ik}$ is also assigned to 1 if $x_{ik} = 1$ in any single run.

	We will first show that in one run of the randomized rounding method in Algorithm \ref{opt-logn}, $P[x_{ik} = 1] = x_{ik}^*$ and $P[x_{k} = 1] = x_{k}^*$. For any agent $i$, the probability that $x_{ik}$ is set to 1 is exactly $x_{1k}^* + (x_{2k}^* - x_{1k}^*) + \dots + (x_{ik}^* - x_{(i-1)k}^*) = x_{ik}^*$.  $x_{k} = 1$ if and only if $\exists i$, $x_{ik} = 1$, so $P[x_{k} = 1] = \max_{i}\{x_{ik}^*\} = x_{k}^*$.

	$x_{ijk} = 1$ if and only if $\min\{x_{ik}, x_{jk}\} = 1$. By our randomized rounding strategy, $P[\min\{x_{ik}, x_{jk}\} = 1] = \min\{x_{ik}^*, x_{jk}^*\}$. (Note: this is the key point which makes our correlated randomized rounding strategy work. The rounding of $x_{ik}$ and $x_{jk}$ is {\em not} independent, so it can never happen that lower $x_{ik}^*$ gets rounded to 1 but higher one does not.) Thus, $P[x_{ijk} = 1] = \min\{x_{ik}^*, x_{jk}^*\}$. Because $x_{ijk}^* \le \min\{x_{ik}^*, x_{jk}^*\}$, $P[x_{ijk} = 1] \ge x_{ijk}^*$.

	$x_{ij} = 1$ if and only if $\sum_k x_{ijk} = 0$, so in any single run of the randomized algorithm:
	\begin{align*}
		P[x_{ij} = 1] &= P[x_{ij1}=0] \times P[x_{ij1}=0] \times \dots \times P[x_{ijm}=0] \\
					&\le \prod_{k = 1}^m (1 - x_{ijk}^*) \\
					&\le \left( \frac{\sum_{k=1}^m (1-x_{ijk}^*)}{m} \right) ^ m \\
					&= \left( \frac{m - \sum_{k=1}^m x_{ijk}^*}{m} \right) ^ m \\
	\end{align*}

	Because $1 - x_{ij}^* \le \sum_{k} x_{ijk}^*$:

	\begin{equation}
		\label{opt-logn-eq1}
		P[x_{ij} = 1] \le \left( \frac{m - (1 - x_{ij}^*)}{m} \right) ^ m
	\end{equation}

	When $x_{ij}^* \le \frac{1}{2}$,
	$$P[x_{ij} = 1] \le \left( \frac{m - \frac{1}{2}}{m} \right) ^ m \le \frac{1}{\sqrt{e}}$$

	When we repeat the randomized rounding process for $t = 4 \ln 10n$ times, if $x_{ij}^* \le \frac{1}{2}$, the probability that $x_{ij}$ is set to 1 in every single run is at most $\frac{1}{\sqrt{e}^t} = \frac{1}{10 n^2}$. Since there are $n(n-1)$ pairs of agents $(i,j)$, then the probability that any $x_{ij}^* \le \frac{1}{2}$ and $\hat{x}_{ij} = 1$ is at most $n^2 \frac{1}{10 n^2} = 0.1$.
	

	When $x_{ij}^* > \frac{1}{2}$, we will prove that $P[x_{ij} = 1]$ can be bounded by a linear function $a x_{ij}^*$ for some constant $a$. We have shown $P[x_{ij} = 1] \le \frac{1}{\sqrt{e}}$ when $x_{ij}^* = \frac{1}{2}$. By Inequality~\ref{opt-logn-eq1}, when $x_{ij}^* = 1$, $P[x_{ij} = 1] \le 1$. As a function of $x_{ij}^*$, the right-hand side of Inequality~\ref{opt-logn-eq1} is strictly increasing and convex.
%
%
%
%
%
Therefore, this formula can be upper bounded by a linear function passing through the points $(\frac{1}{2}, \frac{1}{\sqrt{e}})$ and $(1, 1)$, which means:

	$$P[x_{ij} = 1] \le 2(1 - \frac{1}{\sqrt{e}}) x_{ij}^* + \frac{2}{\sqrt{e}} - 1$$

	When $x_{ij}^* \in [\frac{1}{2}, 1]$, the right hand side of the above inequality is at most $\frac{2}{\sqrt{e}} x_{ij}^*$, so:
	
	$$P[x_{ij} = 1 |  x_{ij}^*>\frac{1}{2}] \le \frac{2}{\sqrt{e}} x_{ij}^*$$

	Let $A$ denote the event that there exist a pair $(i, j)$ such that $x_{ij}^* \le \frac{1}{2}$ and $\hat{x}_{ij} = 1$ ($x_{ij}$ set to 1 in every single run). By our analysis earlier, $P[A] \le 0.1$. Let $\hat{s}$ denote the assignment returned by Algorithm \ref{opt-logn}. We will consider $\rc(\hat{s})$ with condition $A$ and $\bar{A}$ separately. First calculate the expected social cost of $\hat{s}$ with condition $\bar{A}$:
	\begin{align}
		E[\rc(\hat{s}) | \bar{A}]
		&=E[\sum_{i, k} w_{ik} \hat{x}_{ik} + 2 \sum_{(i, j)} w_{ij} \hat{x}_{ij} + \sum_{k} c(f_k) \hat{x}_k | \bar{A}] \nonumber \\
		&= \sum_{i, k} w_{ik} P[\hat{x}_{ik}=1 | \bar{A}] + 2 \sum_{(i, j)} w_{ij} P[\hat{x}_{ij}=1 | \bar{A}] + \sum_{k} c(f_k) P[\hat{x}_k=1 | \bar{A}] \label{opt-logn-eq2}
	\end{align}

	In any single run, it is always true that $P[x_{ik} = 1] = x_{ik}^*$ and $P[x_{k} = 1] = x_{k}^*$, with any condition. Remember $\hat{x}_{ik} = 1$ if $x_{ik} = 1$ in any one of the $4 \ln 10n$ runs, so $P[\hat{x}_{ik} = 1 | \bar{A}] \le x_{ik}^*4 \ln 10n $. By the same reason, $P[\hat{x}_k=1 | \bar{A}] \le x_k^*4 \ln10n $. Finally, we bound $P[\hat{x}_{ij}=1 | \bar{A}]$ by further decomposing $\bar{A}$ into two cases: $x_{ij}^* \le \frac{1}{2}$ and $x_{ij}^* > \frac{1}{2}$. When $x_{ij}^* \le \frac{1}{2}$, by the definition of event $\bar{A}$, for every pair of agents $(i, j)$ that $x_{ij}^* \le \frac{1}{2}$, we have $\hat{x}_{ij}=0$. Thus, $P[\hat{x}_{ij}=1 | \bar{A}] = P[\hat{x}_{ij}=1 | x_{ij}^* > \frac{1}{2}] \le \frac{2}{\sqrt{e}} x_{ij}^*$. Apply these bounds to Inequality \ref{opt-logn-eq2}:
	\begin{align*}
		&E[\rc(\hat{s}) | \bar{A}] \\
		&\le 4 \ln 10n \sum_{i, k} w_{ik} x_{ik}^* + \frac{4}{\sqrt{e}} \sum_{(i, j)} w_{ij} x_{ij}^* + 4 \ln 10n  \sum_{k} c(f_k) x_k^*\\
		&\le (4\ln 10n) (\sum_{i, k} w_{ik} x_{ik}^* + 2 \sum_{(i, j)} w_{ij} x_{ij}^* + \sum_{k} c(f_k) x_k^*)\\
		&= (4\ln 10n) \rc(s^*)
	\end{align*}

	By Markov's inequality:

	$$P[\rc(\hat{s}) \ge (20\ln 10n) \rc(s^*) | \bar{A}] \le P[\rc(\hat{s}) \ge 5 E[\rc(\hat{s}) | \bar{A}]] \le \frac{1}{5}$$

	Finally, we bound the probability that $\rc(\hat{s}) \ge (20\ln 10n) \rc(s^*)$ by considering event $A$ and $\bar{A}$ separately:
	\begin{align*}
		&P[\rc(\hat{s}) \ge (20 \ln 10n) \rc(s^*)] \\
		&= P[\rc(\hat{s}) \ge (20 \ln 10n) \rc(s^*) | A] \times P[A] + P[\rc(\hat{s}) \ge (20 \ln 10n) \rc(s^*) | \bar{A}] \times P[\bar{A}] \\
		&\le 1 \times 0.1 + 0.2 \times 1\\
		&= 0.3
	\end{align*}
Thus, we have shown that with probability of at least $0.7$, the social cost of the solution given by Algorithm \ref{opt-logn} is no more than $(20 \ln 10n) \rc(s^*)$.
\end{proof}

\subsection*{Acknowledgements} This work was partially supported by NSF awards CNS-1816396 and CCF-1527497.


\begin{thebibliography}{10}

\bibitem{ager2012anatomy}
Bernhard Ager, Nikolaos Chatzis, Anja Feldmann, Nadi Sarrar, Steve Uhlig, and
  Walter Willinger.
\newblock Anatomy of a large european ixp.
\newblock In {\em ACM SIGCOMM 2012}.

\bibitem{anshelevich2014approximate}
Elliot Anshelevich and Shreyas Sekar.
\newblock Approximate equilibrium and incentivizing social coordination.
\newblock In {\em AAAI 2014}.

\bibitem{apt2014coordination}
Krzysztof~R Apt, Mona Rahn, Guido Sch{\"a}fer, and Sunil Simon.
\newblock Coordination games on graphs.
\newblock In {\em WINE 2014}.

\bibitem{augustine2011dynamics}
John Augustine, Ning Chen, Edith Elkind, Angelo Fanelli, Nick Gravin, and
  Dmitry Shiryaev.
\newblock Dynamics of profit-sharing games.
\newblock In {\em IJCAI 2011}.

\bibitem{auletta2016generalized}
Vincenzo Auletta, Ioannis Caragiannis, Diodato Ferraioli, Clemente Galdi, and
  Giuseppe Persiano.
\newblock Generalized discrete preference games.
\newblock In {\em IJCAI 2016}.

\bibitem{auletta2017robustness}
Vincenzo Auletta, Ioannis Caragiannis, Diodato Ferraioli, Clemente Galdi, and
  Giuseppe Persiano.
\newblock Robustness in discrete preference games.
\newblock In {\em AAMAS 2017}.

\bibitem{auletta2016discrete}
Vincenzo Auletta, Ioannis Caragiannis, Diodato Ferraioli, Clemente Galdi, and
  Giuseppe Persiano.
\newblock Discrete preference games in heterogeneous social networks: Subverted
  majorities and the swing player.
\newblock {\em CoRR, abs/1603.02971}, 2016.

\bibitem{aziz2019fractional}
Haris Aziz, Florian Brandl, Felix Brandt, Paul Harrenstein, Martin Olsen, and
  Dominik Peters.
\newblock Fractional hedonic games.
\newblock {\em ACM Transactions on Economics and Computation (TEAC)},
  7(2):1--29, 2019.

\bibitem{aziz2013computing}
Haris Aziz, Felix Brandt, and Hans~Georg Seedig.
\newblock Computing desirable partitions in additively separable hedonic games.
\newblock {\em Artificial Intelligence}, 195:316--334, 2013.

\bibitem{hedonic}
Haris Aziz and Rahul Savani.
\newblock {\em Hedonic Games}.
\newblock In Handbook of Computational Social Choice, Chapter 15, Cambridge
  University Press, 2016.

\bibitem{bachrach2009cost}
Yoram Bachrach, Edith Elkind, Reshef Meir, Dmitrii Pasechnik, Michael
  Zuckerman, J{\"o}rg Rothe, and Jeffrey~S Rosenschein.
\newblock The cost of stability in coalitional games.
\newblock In {\em SAGT 2009}.

\bibitem{bhalgat2010approximating}
Anand Bhalgat, Tanmoy Chakraborty, and Sanjeev Khanna.
\newblock Approximating pure nash equilibrium in cut, party affiliation, and
  satisfiability games.
\newblock In {\em EC 2010}.

\bibitem{bilo2014nash}
Vittorio Bil{\`o}, Angelo Fanelli, Michele Flammini, Gianpiero Monaco, and Luca
  Moscardelli.
\newblock Nash stability in fractional hedonic games.
\newblock In {\em WINE 2014}.

\bibitem{bilo2015price}
Vittorio Bil{\`o}, Angelo Fanelli, Michele Flammini, Gianpiero Monaco, and Luca
  Moscardelli.
\newblock On the price of stability of fractional hedonic games.
\newblock In {\em AAMAS 2015}.

\bibitem{branzei2009coalitional}
Simina Br{\^a}nzei and Kate Larson.
\newblock Coalitional affinity games and the stability gap.
\newblock In {\em IJCAI 2009}.

\bibitem{branzei2011social}
Simina Br{\^a}nzei and Kate Larson.
\newblock Social distance games.
\newblock In {\em IJCAI 2011}.

\bibitem{carosi2019generalized}
Raffaello Carosi and Gianpiero Monaco.
\newblock Generalized graph k-coloring games.
\newblock {\em Theory of Computing Systems}, pages 1--14, 2019.

\bibitem{chierichetti2013discrete}
Flavio Chierichetti, Jon Kleinberg, and Sigal Oren.
\newblock On discrete preferences and coordination.
\newblock In {\em EC 2013}.

\bibitem{christodoulou2005price}
George Christodoulou and Elias Koutsoupias.
\newblock The price of anarchy of finite congestion games.
\newblock In {\em STOC 2005}.

\bibitem{dahlhaus1994complexity}
Elias Dahlhaus, David~S. Johnson, Christos~H. Papadimitriou, Paul~D. Seymour,
  and Mihalis Yannakakis.
\newblock The complexity of multiterminal cuts.
\newblock {\em SIAM Journal on Computing}, 23(4):864--894, 1994.

\bibitem{dreze1980hedonic}
Jacques~H Dreze and Joseph Greenberg.
\newblock Hedonic coalitions: Optimality and stability.
\newblock {\em Econometrica: Journal of the Econometric Society}, pages
  987--1003, 1980.

\bibitem{feldman2015unified}
Michal Feldman and Ophir Friedler.
\newblock A unified framework for strong price of anarchy in clustering games.
\newblock In {\em ICALP 2015}.

\bibitem{feldman2015hedonic}
Moran Feldman, Liane Lewin-Eytan, and Joseph Naor.
\newblock Hedonic clustering games.
\newblock {\em ACM Transactions on Parallel Computing (TOPC)}, 2(1):1--48,
  2015.

\bibitem{gairing2010computing}
Martin Gairing and Rahul Savani.
\newblock Computing stable outcomes in hedonic games.
\newblock In {\em SAGT 2010}.

\bibitem{gourves2010max}
Laurent Gourv{\`e}s and J{\'e}r{\^o}me Monnot.
\newblock The max k-cut game and its strong equilibria.
\newblock In {\em TAMC 2010}.

\bibitem{hochbaum1982heuristics}
Dorit~S Hochbaum.
\newblock Heuristics for the fixed cost median problem.
\newblock {\em Mathematical programming}, 22(1):148--162, 1982.

\bibitem{hoefer2007cost}
Martin Hoefer.
\newblock {\em Cost sharing and clustering under distributed competition}.
\newblock PhD thesis, Ph. D. Thesis, University of Konstanz, 2007.

\bibitem{kleinberg2002approximation}
Jon Kleinberg and Eva Tardos.
\newblock Approximation algorithms for classification problems with pairwise
  relationships: Metric labeling and markov random fields.
\newblock {\em Journal of the ACM (JACM)}, 49(5):616--639, 2002.

\bibitem{kleinberg2006algorithm}
Jon Kleinberg and Eva Tardos.
\newblock {\em Algorithm Design}, chapter~7, pages 381--383.
\newblock Pearson Education India, 2006.

\bibitem{monderer1996potential}
Dov Monderer and Lloyd~S Shapley.
\newblock Potential games.
\newblock {\em Games and economic behavior}, 14(1):124--143, 1996.

\bibitem{rosenthal1973class}
Robert~W Rosenthal.
\newblock A class of games possessing pure-strategy nash equilibria.
\newblock {\em International Journal of Game Theory}, 2(1):65--67, 1973.

\bibitem{shrimali2006paid}
Gireesh Shrimali and Sunil Kumar.
\newblock Paid peering among internet service providers.
\newblock In {\em GameNets 2006}, pages 11--es, 2006.

\bibitem{tardos2007network}
Eva Tardos and Tom Wexler.
\newblock Network formation games and the potential function method.
\newblock {\em Chapter 19, Algorithmic Game Theory}, pages 487--516, 2007.

\end{thebibliography}


\appendix

\section{Proofs in Section \ref{sec-multi}}

The proofs for agents being allowed to connect to multiple facilities are almost the same as for the setting where they can only connect to a single facility. We include the proofs of these results in this appendix for completeness, and only explain the difference between the single and multiple facility setting in the main body of this paper.

\subsection{Proof of Theorem \ref{thm-14-multi}}

\begin{proof}
	With the above definition of $\tPhi(s)$, when an agent $i$ switches its strategy from $s_i$ to $s_i'$, it is easy to see that the change of $i$'s cost is captured exactly by the change of $\tPhi(s)$:

	$$\tc_i(s_i', s_{-i}) - \tc_i(s) = \tPhi(s_i', s_{-i}) - \tPhi(s)$$

	Thus, $\tPhi(s)$ is an exact potential function.

	The total social cost in this case is:
	$$c(s) = \sum_{i \in \mathcal{A}} \sum_{f_k \in s_i} \cc(i, f_k) + 2 \sum_{(i, j) | s_i \cap s_j= \emptyset} \dc(i, j)$$

	Consider the assignment $\hat{s}$ that minimizes $\tPhi$. $\hat{s}$ must be stable, because the exact potential function $\tPhi$ is minimized, so no agent could deviate to lower its cost. Bound the social cost of $\hat{s}$:

	\begin{align*}
	\rc(\hat{s}) &= \sum_{i \in \mathcal{A}} \sum_{f_k \in \hat{s}_i} \cc(i, f_k) + 2 \sum_{(i, j) | \hat{s}_i \cap \hat{s}_j = \emptyset } \dc(i, j)\\
	&\le 2 \tPhi(\hat{s})\\
	&< 2 \tPhi(s^*)\\
	&\le 2 \rc(s^*)
	\end{align*}

	Denote the final stable state as $\hat{s}$. Similar to the analysis in Theorem \ref{thm-14-single},

	$$\rc(\hat{s}) \le 2 \Phi(\hat{s}) <  2 \Phi(s^*) \le 2 \rc(s^*)$$

	Thus, PoS is at most 2.

	Consider an example with one facility and two agents. $\cc(1, f_1) = 0$, $\cc(2, f_1) = 1-\epsilon$, $\dc(1,2) = 1$. The only stable state is agent 1 and 2 both do not use $f_k$, i.e., $s_i \ne s_j$. When $\epsilon$ approaches 0, PoS approaches 2.
\end{proof}

\subsection{Proof of Theorem \ref{thm-134-fix3-central-multi}}

\begin{proof}
	We define a deviation process that converges to a stable state, with $\Phi(s)$ decreases in each step of the deviation.

	Start with the optimal assignment $s^*$, if there exists an agent $i$ that when $i$ switches its strategy to $s_i'$, in which $s_i'$ does not contain any closed facility in $s^*$ and $\tc_i(s_i', s^*_{-i}) < \tc_i(s^*)$, then let agent $i$ switch to $s_i'$. Select another agent to repeat this process until no such agent exists. By Lemma~\ref{lemma-phi-basic-multi}, $\Phi(s)$ decreases during each step in this process.

	We know that no agent $i$ can decrease $\tc_i(s)$ by switching to another valid deviation that does not contain any closed facility in $s$. Note that even if $i$ switches to a set of facilities that contains a closed facility in $s$, it would not be able to lower $\tc_i(s)$. Suppose there exists a valid deviation $s_i'$ that $\tc_i(s_i', s_{-i}) < \tc_i(s)$, then we can always create another deviation $s_i''$ by removing all facilities not open in $s$ from $s_i'$, and it is obvious that $\tc_i(s_i'', s_{-i}) \le \tc_i(s_i', s_{-i})$. This contradicts the fact that no agent $i$ can decrease $\tc_i(s)$ by switching to any valid deviation that does not contain any closed facility in $s$.	Thus, in this ``stable'' state, every agent is stable if it is charged 0 facility cost.
	
	By Lemma~\ref{lemma-Q-multi}, for any agent $i$ that uses $f_k$, we can charge a facility cost of $\Q_i(s, f_k)$ to agent $i$ while keeping it stable:

	\begin{align}
		\Q_i(s, f_k)
		&= \tc_i(\nBR_i(s, f_k), s_{-i}) - \tc_i(s) \nonumber \\
		&= \sum_{f_k' \in \nBR_i(s, f_k) - s_i} \cc(i, f_k') + \sum_{j | f_k \in s_j, \nBR_i(s, f_k) \cap s_j = \emptyset} \dc(i,  j) - \cc(i, f_k) - \sum_{j | (\nBR_i(s, f_k) - s_i) \cap s_j \ne \emptyset} \dc(i,  j)\label{thm-134-fix3-central-multi-eq1}
	\end{align}

	Remember $\nBR_i(s, f_k)$ denote $i$'s best response given $i$ is forced to stop using $f_k$ (and forbidden to join $f_k$ again), while the assignment of all other agents do not change. In the multiple facility setting, we assume each agent can leave at most one facility in every deviation. So if we compare $\nBR_i(s, f_k)$ with $s_i$, $f_k$ is the only facility that is in $s_i$ but not in $\nBR_i(s, f_k)$. For each facility $f_k$, consider the following two cases:\\

	\textbf{Case 1, $c(f_k) > \sum_{i|f_k \in s_i} \Q_i(s, f_k)$.} In this case, we close $f_k$ and let each agent $i$ using $f_k$ in $s$ deviate to $\nBR_i(s, f_k)$. Denote the assignment after closing $f_k$ as $s'$, then:

	\begin{align*}
	\Phi(s') - \Phi(s)
	&= -c(f_k) + \sum_{i|f_k \in s_i} (\sum_{f_k' \in \nBR_i(s, f_k) - s_i} \cc(i, f_k') - \cc(i, f_k)) \\
	&+ \sum_{(i,j) | f_k \in s_i \cap s_j, s'_i \cap s'_j = \emptyset} \dc(i, j) - \sum_{(i,j) | s_i \cap s_j = \emptyset, s'_i \cap s'_j \ne \emptyset} \dc(i, j)\\
	\end{align*}

	Note that only agents using $f_k$ in $s$ change their strategies, and all other agents keep their strategies at $s$. Thus, the newly disconnected agent pairs in $s'$ are at most those only share $f_k$ in $s$:

	\begin{align*}
	\sum_{(i,j) | f_k \in s_i \cap s_j, s'_i \cap s'_j = \emptyset} \dc(i, j)
	&\le \sum_{i | f_k \in s_i} \sum_{j | f_k \in s_j, s_i \cap s_j = f_k } \dc(i,  j)\\
	&\le \sum_{i | f_k \in s_i} \sum_{j | f_k \in s_j, \nBR_i(s, f_k) \cap s_j = \emptyset} \dc(i,  j)
	\end{align*}
	
	The last line follows because by the definition of $\nBR_i(s, f_k)$, $f_k \notin \nBR_i(s, f_k)$. Therefore, if $s_i \cap s_j = f_k$, then it must be $\nBR_i(s, f_k) \cap s_j = \emptyset$.

	Also, the newly connected agent pairs in $s'$ are created by the agents using $f_k$ deviating to their $\nBR_i(s, f_k)$, i.e.,

	$$ \sum_{(i,j) |  s_i \cap s_j = \emptyset, s'_i \cap s'_j \ne \emptyset} \dc(i, j) \ge  \sum_{i| f_k \in s_i} \sum_{j | (\nBR_i(s, f_k) - s_i) \cap s_j \ne \emptyset} \dc(i,  j) $$

  With the condition of \textbf{Case 1}, we can bound $\Phi(s') - \Phi(s)$ by:

	\begin{align*}
	&\Phi(s') - \Phi(s)\\
	&= -c(f_k) + \sum_{i | f_k \in s_i} (\sum_{f_k' \in \nBR_i(s, f_k) - s_i} \cc(i, f_k') - \cc(i, f_k))
	+ \sum_{(i,j) | f_k \in s_i \cap s_j, s'_i \cap s'_j = \emptyset} \dc(i, j) - \sum_{(i,j) |  s_i \cap s_j = \emptyset, s'_i \cap s'_j \ne \emptyset} \dc(i, j)\\
	&\le -c(f_k) + \sum_{i | f_k \in s_i} (\sum_{f_k' \in \nBR_i(s, f_k) - s_i} \cc(i, f_k') - \cc(i, f_k))
	+ \sum_{ i | f_k \in s_i} \sum_{j | f_k \in s_j, \nBR_i(s, f_k) \cap s_j = \emptyset} \dc(i, j) \\
	&- \sum_{i | f_k \in s_i} \sum_{j | (\nBR_i(s, f_k) - s_i) \cap s_j \ne \emptyset} \dc(i,  j)\\
	&= -c(f_k) + \sum_{i | f_k \in s_i} (\sum_{f_k' \in \nBR_i(s, f_k) - s_i} \cc(i, f_k') + \sum_{j | f_k \in s_j, \nBR_i(s, f_k) \cap s_j = \emptyset} \dc(i, j)
	- \cc(i, f_k) - \sum_{j | (\nBR_i(s, f_k) - s_i) \cap s_j \ne \emptyset} \dc(i,  j))\\
	&= -c(f_k) + \sum_{i | f_k \in s_i} \Q_i(s, f_k)\\
	&< 0 \\
  \end{align*}

	Thus, $\Phi(s)$ does not increase after $f_k$ is closed and the agents deviate to $\nBR_i(s, f_k)$. \\

	Then repeat the above two steps: let agents switch strategies one at a time to reach a 'stable' state $s$ if all agents ignore the facility cost, then if there exist a facility $f_k$ that satisfies the condition in \textbf{Case 1}, we close $f_k$ and let every agent $i$ using it switches its strategy to $\nBR_i(s, f_k)$. We repeat these two steps until we reach state $s$ that every agent is ``stable'' if all agents ignore the facility cost, and every open facility does not satisfy \textbf{Case 1}. Note that $\Phi(s)$ decreases in each step, so this process always converges to such an assignment $s$. Then each open facility must satisfy the following \textbf{Case 2}:

	\textbf{Case 2, $c(f_k) \le \sum_{i|f_k \in s_i} \Q_i(s, f_k)$.} In this case, we consider pricing strategy $\gamma$, that $\gamma_i(f_k) = \Q_i(s, f_k)$. First, no agent wants to deviate by only joining some new facilities, but not leaving any facility. Because every agent is stable without considering facility cost, and their facility cost does not change if they do not leave any facility. By Lemma~\ref{lemma-Q-multi}, for every facility $f_k$, every agent $i$ would not deviate away from $f_k$ with $\gamma_i(f_k) = \Q_i(s, f_k)$. Thus, every agent is stable at ($s$, $\gamma$). Also, by definition, $\Q_i(s, f_k) = \tc_i(\nBR_i(s, f_k), s_{-i}) - \tc_i(s)$. Because every agent is stable without considering facility costs, so $\tc_i(s) \le \tc_i(\nBR_i(s, f_k), s_{-i})$. Therefore, we are charging a non-negative facility cost to each agent. By the condition of \textbf{Case 2}, $\sum_{i|f_k \in s_i} \gamma_i(f_k) \ge c(f_k)$. If $\sum_{i|f_k \in s_i} \gamma_i(f_k) > c(f_k)$, then to get a budget-balanced cost assignment, we can lower the facility cost of some agents to make the sum of $\gamma_i(f_k)$ exactly $c(f_k)$, because the agents would not deviate with $\gamma_i(f_k)$, then they would not deviate with a lower facility cost. Thus, we have reached a stable state.

  $\Phi(s)$ decreases in each deviation, so it is a ordinal potential function for our deviation processes.

	The total social cost is:

	$$c(s) = \sum_{f_k \in \mathcal{F}| f_k \text{ is open}} c(f_k) + \sum_{i \in \mathcal{A}} \sum_{f_k \in s_i} \cc(i, f_k) + 2\sum_{(i, j)| s_i \cap s_j = \emptyset} \dc(i, j)$$

	Denote the final stable state as $\hat{s}$. Similar to the analysis in Theorem \ref{thm-14-single},

	$$\rc(\hat{s}) \le 2 \Phi(\hat{s}) <  2 \Phi(s^*) \le 2 \rc(s^*)$$

	Thus, PoS is at most 2.\\

	See Theorem~\ref{thm-14-single} for the lower bound example (assuming $c(f_k) = 0$).
\end{proof}

\subsection{Proof of Theorem\ref{paid-peering-thm1-multi}}

\begin{proof}
	We construct a circulation network \cite{kleinberg2006algorithm} for each facility $f_k$ as follows: start from the optimal assignment $s^*$. Create a node for each agent $i$ such that $f_k \in s^*_i$, and we set a supply of $\Q_i(s^*, f_k)$ to it. (note that this value might be negative, in which case the node has a demand instead of supply). For each pair of agents $(i, j)$, we create directed edges from $i$ to $j$ and from $j$ to $i$, both with capacity $\dc(i, j)$. Create a node $f_k$ with a supply of $-c(f_k)$ and and edge from each node to $f_k$ with infinite capacity. Finally, create a dummy node $z$ with a demand of the sum of supplies of all other nodes, and add an edge from $f_k$ to $z$ with infinite capacity. This is to make sure the total supply meets the total demand in the network.

	Suppose there is a feasible solution, then we can get the flow on each edge to create a stable state: First, denote the flow from any node $i$ to $j$ as $v_{ij}$. For each pair of nodes $i$ and $j$ such that $f_k \in s^*_i = s^*_j$, set $p_{ij} = v_{ij} - v_{ji}$ and $p_{ji} = v_{ji} - v_{ij}$. Also, for every agent $i$ such that $f_k \in s^*_i$, set $\gamma_i(f_k) = v_{if_k}$. A feasible solution guarantees that facility $f_k$ is fully paid, because $\sum_{i|f_k \in s^*_i} \gamma_i(f_k) = \sum_{i|f_k \in s^*_i} v_{if_k} = c(f_k)$. Also, every agent is stable. First, by the definition of $\Delta_i(f_k)$ in this section, $\Delta_i(f_k) = \sum_{j|f_k \in s^*_i \cap s^*_j} p_{ji} = \sum_{j|f_k \in s^*_i \cap s^*_j} (v_{ji} - v_{ij})$. For agent $i$ such that $f_k \in s^*_i$, the supply of node $i$ is $\Q_i(s^*, f_k)$, which equals the total flow going out of $i$ minus the total flow going into $i$:

	$$\Q_i(s^*, f_k) = v_{if_k} + \sum_{j|f_k \in s^*_i \cap s^*_j} (v_{ij} - v_{ji}) = \gamma_i(f_k) - \Delta_i(f_k)$$
	
	By Lemma~\ref{lemma-Q-multi}, every agent is stable with because $\forall f_k$ and $i$, $\gamma_i(f_k) - \Delta_i(f_k) \le \Q_i(s^*, f_k)$.

	This circulation network is feasible if and only if we can stabilize $s^*$ by allowing agents to pay their neighbors. Also, the facility is fully paid for (budget balanced). By a standard Max-Flow and Min-Cut analysis \cite{kleinberg2006algorithm}, if for any subset of nodes in the circulation network, the total supply of the subset plus the total capacity of edges going into the subset is non-negative, then the circulation network is feasible.
	
		First consider any subset that includes $z$. If the subset does not include $f_k$, then there must be an edge with infinite capacity going into the subset, so the conclusion holds.
	
		Next, consider a subset includes $f_k$ and $z$. If the subset does not include all agents in $f_k$, then there must be an edge with infinite capacity going into the subset, so the conclusion holds. If the subset does include all agents in $f_k$, then by the definition of $z$, the total supply is 0.
	
		Then, consider a subset includes $f_k$ but not $z$. If the subset does not include all agents in $f_k$, then the conclusion still holds. If the subset is actually all the nodes in the network, then the total supply is:
	
		$$ \sum_{i \in \mathcal{A}} \Q_i(s^*, f_k) - c(f_k)$$
	
		Suppose the total supply is negative instead. Then consider an assignment $s'$ that $f_k$ is closed, and every agent $i$ that uses $f_k$ in $s^*$ switch its strategy to $\nBR_i(s^*, f_k)$. For any agent $j$ that does not use $f_k$ in $s^*$, $j$ stay at $s^*_j$. It it easy to see that $\tc_j(s') \le \tc_j(s^*)$ for every $j$ that $f_k \notin s^*_j$. For any agent $i$ that $f_k \in s^*_i$, it must be $\tc_i(s') \le \tc_i(\nBR_i(s^*, f_k), s^*_{-i})$. This is because $\tc_i(\nBR_i(s^*, f_k), s^*_{-i})$ is the cost that only $i$ switches to $\nBR_i(s^*, f_k)$ with all other agents stay at $s^*$, while in $s'$ only agents using $f_k$ switches their strategies. Because agents using $\nBR_i(s^*, f_k)$ in $s^*$ all stay at $s^*$, $i$ would not get any ``unexpected cost'' in $s'$, so $\tc_i(s') \le \tc_i(\nBR_i(s^*, f_k), s^*_{-i})$. Also, because $f_k$ is closed in $s'$, the facility cost decreases by at least $c(f_k)$. No new facility will open in $s'$ (compared to $s^*$) because we have excluded this possibility in the definition of $\nBR_i(s^*, f_k)$. Thus, the total social cost of $s'$ increases by at most:
	
		$$-c(f_k) + \sum_{i | f_k \in s^*_i} (\tc_i(\nBR_i(s^*, f_k), s^*_{-i}) - \tc_i(s^*))
		= -c(f_k) + \sum_{i | f_k \in s^*_i} \Q_i(s^*, f_k)$$
	
		By our assumption, this number is negative, which means $s'$ has less total cost than $s^*$, which contradicts the fact that $s^*$ is optimal. Thus, the total supply must be non-negative in this case.
	
		We denote $\nBR_i(s^*, f_k)$ as $R_i$ for simplification in the following proof. Now consider a subset of nodes does not include $f_k$. Suppose there exists a subset $\mathcal{B}$, that the total supply of nodes in $\mathcal{B}$ plus the total capacity of edges going into the subset is negative, i.e.,
	
		\begin{equation}
			\label{paid-peering-multi-eq1}
			\sum_{i \in \mathcal{B}} \Q_i(s^*, f_k) + \sum_{(i, j) | i \in \mathcal{B}, j \notin \mathcal{B}, f_k \in s^*_j} \dc(i, j) < 0
		\end{equation}
	
		By the definition of $\Q_i(s^*, f_k)$:
		\begin{align}
			&\sum_{i \in \mathcal{B}} \Q_i(s^*, f_k) \nonumber \\
			&= \sum_{i \in \mathcal{B}} (\sum_{f_k' \in R_i - s^*_i} \cc(i, f_k') + \sum_{j | f_k \in s^*_j, R_i \cap s^*_j = \emptyset} \dc(i,  j) - \cc(i, f_k) - \sum_{j | (R_i - s^*_i) \cap s^*_j \ne \emptyset} \dc(i,  j)) \nonumber \\
			&= \sum_{i \in \mathcal{B}} \sum_{f_k' \in R_i - s^*_i} \cc(i, f_k') + \sum_{i \in \mathcal{B}} \sum_{j | f_k \in s^*_j, R_i \cap s^*_j = \emptyset} \dc(i,  j) - \sum_{i \in \mathcal{B}} \cc(i, f_k) - \sum_{i \in \mathcal{B}} \sum_{j | (R_i - s^*_i) \cap s^*_j \ne \emptyset} \dc(i,  j) \label{paid-peering-multi-eq2}
		 \end{align}
	
			Decompose $\sum_{i \in \mathcal{B}} \sum_{j | f_k \in s^*_j, R_i \cap s^*_j = \emptyset} \dc(i,  j)$ into two parts:
	
			\begin{align*}
				\sum_{i \in \mathcal{B}} \sum_{j | f_k \in s^*_j, R_i \cap s^*_j = \emptyset} \dc(i,  j)
				&= \sum_{i \in \mathcal{B}} (\sum_{j \in \mathcal{B},f_k \in s^*_j, R_i \cap s^*_j = \emptyset} \dc(i, j) +  \sum_{j \notin \mathcal{B}, f_k \in s^*_j, R_i \cap s^*_j = \emptyset} \dc(i, j) ) \\
				&= \sum_{i \in \mathcal{B}} \sum_{j \in \mathcal{B}, R_i \cap s^*_j = \emptyset} \dc(i, j) + \sum_{i \in \mathcal{B}} \sum_{j \notin \mathcal{B}, f_k \in s^*_j, R_i \cap s^*_j = \emptyset} \dc(i, j) \\
				&= \sum_{i \in \mathcal{B}} \sum_{j \in \mathcal{B}, R_i \cap s^*_j = \emptyset}  \dc(i, j) + \sum_{(i, j) | i \in \mathcal{B}, j \notin \mathcal{B},  f_k \in s^*_j, R_i \cap s^*_j = \emptyset} \dc(i, j) \\
				&\ge 2 \sum_{(i, j) | i \in \mathcal{B}, j \in \mathcal{B}, R_i \cap R_j = \emptyset}  \dc(i, j) + \sum_{(i, j) | i \in \mathcal{B}, j \notin \mathcal{B},  f_k \in s^*_j, R_i \cap s^*_j = \emptyset} \dc(i, j) \\
			\end{align*}
	
			Remember we only create nodes for every agent $j$ that $s^*_j = f_k$, and $\mathcal{B}$ is a subset of the nodes, so $\{j \in \mathcal{B}, f_k \in s^*_j, R_i \cap s^*_j = \emptyset\} = \{j \in \mathcal{B}, R_i \cap s^*_j = \emptyset\}$ in the first line of the inequality above. The last line of the inequality above follows because by the definition of $R_j$, because $f_k \notin R_i$, and an agent can only leave one facility when it deviates, so $R_j$ includes all facilities in $s^*$ except $f_k$. Thus, $\{R_i \cap R_j = \emptyset\} \subseteq \{R_i \cap s^*_j = \emptyset \}$.
	
			Together with Inequality~\ref{paid-peering-multi-eq1} and ~\ref{paid-peering-multi-eq2},
			\begin{align}
				&\sum_{i \in \mathcal{B}} \sum_{f_k' \in R_i - s^*_i} \cc(i, f_k')
				+ 2 \sum_{(i, j) | i \in \mathcal{B}, j \in \mathcal{B}, R_i \cap R_j = \emptyset}  \dc(i, j)
				+ \sum_{(i, j) | i \in \mathcal{B}, j \notin \mathcal{B}, f_k \in s^*_j, R_i \cap s^*_j = \emptyset} \dc(i, j) \nonumber \\
				&- \sum_{i \in \mathcal{B}} \cc(i, f_k)
				- \sum_{i \in \mathcal{B}} \sum_{j | (R_i - s^*_i) \cap s^*_j \ne \emptyset} \dc(i, j)
				+ \sum_{(i, j) | i \in \mathcal{B}, j \notin \mathcal{B}, f_k \in s^*_j} \dc(i, j)
				< 0 \nonumber \\
				&\sum_{i \in \mathcal{B}} \sum_{f_k' \in R_i - s^*_i} \cc(i, f_k')
				+ 2 \sum_{(i, j) | i \in \mathcal{B}, j \in \mathcal{B}, R_i \cap R_j = \emptyset}  \dc(i, j)
				+ 2 \sum_{(i, j) | i \in \mathcal{B}, j \notin \mathcal{B}, f_k \in s^*_j, R_i \cap s^*_j = \emptyset} \dc(i, j) \nonumber \\
				&< \sum_{i \in \mathcal{B}} \cc(i, f_k)
				+ \sum_{i \in \mathcal{B}} \sum_{ j | (R_i - s^*_i) \cap s^*_j \ne \emptyset} \dc(i, j) \label{paid-peering-multi-eq3}
			\end{align}
	
			The inequality above follows because $\{(i, j) | i \in \mathcal{B}, j \notin \mathcal{B}, f_k \in s^*_j, R_i \cap s^*_j = \emptyset\} \subseteq \{(i, j) | i \in \mathcal{B}, j \notin \mathcal{B}, f_k \in s^*_j\}$.
			
			Consider an assignment $s'$: start from $s^*$, let every agent $i$ in $\mathcal{B}$ switches to $\nBR_i(s^*, f_k)$. The total facility cost of $s'$ is no more than in $s^*$ because no agent would switch to a closed facility in $s^*$ by the definition of $\nBR_i(s^*)$. The total connection and disconnection cost in $s'$ compared to $s^*$ increases by at most the left hand side of Inequality~\ref{paid-peering-multi-eq3}, and decreases by the right hand side of it. Then the total social cost of $s'$ is less than $s^*$, which is a contradiction, so such subset $\mathcal{B}$ does not exist.
	\end{proof}

\end{document}